\newcommand{\defterm}[1]{\textbf{#1}}
\newcommand{\ie}{\emph{i.e.}}
\newcommand{\eg}{\emph{e.g.}}
\newcommand{\syn}[1]{\mathsf{#1}}
\newcommand{\var}[1]{\mathit{#1}}
\newcommand{\s}[1]{\mathit{#1}}
\newcommand{\parto}{\rightharpoonup}
\newcommand{\dom}{\var{dom}}
\newcommand{\range}{\var{range}}
\newcommand{\compose}{\mathrel{\circ}}
\newcommand{\monto}{\xrightarrow{\mathrm{mon}}}
\newcommand{\set}[1]{\left\{#1\right\}}
\newcommand{\setbuild}[2]{\left\{ #1 : #2\right\}}
\newcommand{\Pow}[1]{{\mathcal{P}\left(#1\right)}}
\newcommand{\PowSm}[1]{{\mathcal{P}(#1)}}
\newcommand{\union}{\cup}
\newcommand{\Union}{\bigcup}
\newcommand{\vect}[1]{\langle #1\rangle}
\newcommand{\vecp}[1]{\vec{#1}\;'}
\newcommand{\To}{\mathrel{\Rightarrow}}
\newcommand{\wt}{\sqsubseteq}
\newcommand{\join}{\sqcup}
\newcommand{\bigjoin}{\bigsqcup}
\DeclareMathOperator{\lfp}{lfp}
\newcommand{\infer}[2]{{\renewcommand{\arraystretch}{1.4}\begin{array}{c}#1\\ \hline #2\end{array}\renewcommand{\arraystretch}{1.0}}}
\newcommand{\transition}{\delta}
\newcommand{\QStates}{Q}
\newcommand{\FStates}{F}
\newcommand{\StackAlpha}{\Gamma}
\newcommand{\stackchar}{\gamma}
\newcommand{\sembr}[1]{\ensuremath{[\![{#1}]\!]}}
\newcommand{\opor}{\mathrel{|}}
\newcommand{\Alphabet}{A}
\newcommand{\produces}{\mathrel{::=}}
\newcommand{\Lang}{\mathcal{L}} 
\newcommand{\vv}{v}
\newcommand{\lam}{\ensuremath{\var{lam}}}
\newcommand{\lamterm}{$\lambda$-term}
\newcommand{\lc}{$\lambda$-calculus}
\newcommand{\call}{\ensuremath{\var{call}}}
\newcommand{\lt}[2]{\lambda #1.#2}
\newcommand{\ttlp}{\mbox{\tt (}}
\newcommand{\ttrp}{\mbox{\tt )}}
\newcommand{\appform}[2]{\ttlp #1\; #2\ttrp}
\newcommand{\lamform}[2]{\ttlp \uplambda\;\ttlp#1\ttrp\;#2\ttrp}
\newcommand{\letiform}[3]{\ttlp {\tt let}\; \ttlp\ttlp#1\; #2\ttrp\ttrp\; #3\ttrp}
\newcommand{\fexpr}{f}
\newcommand{\expr}{e}
\newcommand{\aexpr}{\mbox{\sl {\ae}}}
\newcommand{\Eval}{{\mathcal{E}}}
\newcommand{\ArgEval}{{\mathcal{A}}}
\newcommand{\Inject}{{\mathcal{I}}}
\newcommand{\QState}{Q}
\newcommand{\qstate}{q}
\newcommand{\tf}{f}
\newcommand{\store}{\sigma}
\newcommand{\env}{\rho}
\newcommand{\clo}{\var{clo}}
\newcommand{\cont}{\kappa}
\newcommand{\alloc}{\mathit{alloc}}
\newcommand{\addr}{a}
\newcommand{\aTo}{\leadsto}
\newcommand{\aInject}{{\hat{\mathcal{I}}}}
\newcommand{\sa}[1]{\widehat{\mathit{#1}}}
\newcommand{\aEval}{{\hat{\mathcal{E}}}}
\newcommand{\aArgEval}{{\hat{\mathcal{A}}}}
\newcommand{\atf}{{\hat{f}}}
\newcommand{\astore}{{\hat{\sigma}}}
\newcommand{\aenv}{{\hat{\rho}}}
\newcommand{\aclo}{{\widehat{\var{clo}}}}
\newcommand{\acont}{{\hat{\kappa}}}
\newcommand{\aaddr}{{\hat{\addr}}}
\newcommand{\aalloc}{{\widehat{alloc}}}
\newcommand{\absmap}{\alpha}
\newcommand{\abs}[1]{|#1|}
\newcommandx{\minipagebreak}[6][2=0pt,5=0pt]{%
  \begin{minipage}{#1\linewidth}
    \vspace{#2}
    \begin{align*}
    #3
    \end{align*}\end{minipage}
  \begin{minipage}{#4\linewidth}
    \vspace{#5}
    \begin{align*}
      #6
    \end{align*}
  \end{minipage}}
\newcommand{\ControlStates}{Q}
\newcommand{\transfunction}{\delta}
\newcommand{\conf}{c}
\newcommand{\aconf}{{\hat c}}
\newcommand{\phrame}{\phi}
\newcommand{\aphrame}{\hat{\phi}}
\newcommand{\stackact}{g}
\newcommand{\obsolete}[1]{} 
\DeclareMathOperator*{\PDTrans}{\longmapsto}
\newcommand{\fECG}{\mathcal{ECG}}
\newcommand{\fCRPDS}{\mathcal{C}}
\newcommand{\fCCPDS}{\mathcal{CC}}
\newcommand{\afPDA}{\widehat{\mathcal{PDA}}}
\newcommand{\afIPDS}{\widehat{\mathcal{IPDS}}}
\newcommand{\afRPDS}{\widehat{\mathcal{RPDS}}}
\renewcommand{\Alphabet}{\Sigma}
\DeclareMathOperator*{\pdedge}{\rightarrowtail}
\newcommand{\quadedge}[4]{#1 \mathrel{\overset{#2}{\underset{#3}{\pdedge}}} #4}
\newcommand{\triedge}[3]{#1 \mathrel{\overset{#2}{\pdedge}} #3}
\newcommand{\biedge}[2]{#1 \mathrel{\rightarrowtail} #2}
\newcommand{\DSStates}{S}
\newcommand{\DSEdges}{E}
\newcommand{\DSIEdges}{E}
\newcommand{\DSFrames}{\StackAlpha}
\newcommand{\dsstate}{s}
\newcommand{\mkCRPDS}{\mathcal{F}}
\newcommand{\mkCCPDS}{\mathcal{F}}
\newcommand{\Stacks}{\mathit{Stacks}}
\newcommand{\fnet}[1]{\lfloor #1 \rfloor}
\newcommand{\fstackify}[1]{\lceil #1 \rceil}
\newcommand{\StackRoot}{\mathit{StackRoot}}
\newcommand{\ecg}{$\epsilon$-closure graph}
\DeclareMathOperator*{\RPDTrans}{{\longmapsto\!\!\!\!\!\!\!\!\!\!\!\!\longrightarrow}}
\newcommand{\RPDTransOU}[2]{\mathrel{\overset{#1}{\underset{#2}{\RPDTrans}}}}
\newcommand{\PDTranssOU}[2]{\mathrel{\overset{#1}{\underset{#2}{\PDTrans}}^*}}
\newcommand{\PDTransOU}[2]{\mathrel{\overset{#1}{\underset{#2}{\PDTrans}}}}
\newcommand{\fsprout}{\mathit{sprout}}
\newcommand{\faddpush}{\mathit{addPush}}
\newcommand{\faddpop}{\mathit{addPop}}
\newcommand{\faddempty}{\mathit{addEmpty}}
\newcommand{\touches}{{\mathcal T}}
\newcommand{\eancestor}[1]{\overleftarrow{G}_\epsilon[#1]}
\newcommand{\edescendent}[1]{\overrightarrow{G}_\epsilon[#1]}
\newcommand{\system}{C}
\newcommand{\asystem}{{\hat{C}}}
\newcommand{\att}{\hat{t}}
\newcommand{\apconf}{{\hat{\pi}}}
\newcommand{\apstate}{{\hat{\psi}}}
\newcommand{\aopstate}{{\hat{\Omega}}}
\renewcommand{\aTo}{\mathrel{\widehat{\Rightarrow}}}
\DeclareMathOperator*{\afTo}{\rightarrowtriangle}
\DeclareMathOperator*{\apTo}{\rightharpoondown}
\DeclareMathOperator*{\areaches}{\rightarrowtriangle}
\newcommand{\pdcfato}[4]{#1 \mathrel{\apTo^{#2}_{#3}} #4}
\newcommand{\ipdcfato}[4]{#1 \mathrel{\apTo^{#2}_{#3}} #4}
\newcommand{\aCollect}{{\hat{G}}}
\newcommand{\mtrace}{\pi}
\newcommand{\Prop}{\mathit{Prop}}
\newcommand{\invcrpdsp}{\mathit{inv}}
\newcommand{\mylongtitle}{Pushdown flow analysis with abstract garbage collection}
\newcommand{\mytitle}{\mylongtitle}
\newtheorem{theorem}{Theorem}[section]
\newtheorem{lemma}{Lemma}[section]
\newtheorem{corollary}{Corollary}[section]
\let\cite=\citep
\renewcommand{\appendix}{
\renewcommand*{\theHsection}{chY.\the\value{section}}}
\title[PUSHDOWN FLOW ANALYSIS WITH ABSTRACT GARBAGE COLLECTION]
{\mytitle}
\author[]{J.~IAN JOHNSON\\
  Northeastern University}
\author[]{ILYA SERGEY\\
  IMDEA Software Institute}
\author[]{CHRISTOPHER EARL\\
  University of Utah}
\author[]{MATTHEW MIGHT\\
  University of Utah}
\author[J.I.~Johnson, I.~Sergey, C.~Earl, M.~Might, and D.~Van Horn]{DAVID VAN HORN\\
  University of Maryland}
\begin{document}
\maketitle

\begin{abstract}

In the static analysis of functional programs, pushdown flow analysis
and abstract garbage collection push the boundaries of what we can
learn about programs statically.  This work illuminates and poses
solutions to theoretical and practical challenges that stand in the
way of combining the power of these techniques.  Pushdown flow
analysis grants unbounded yet computable polyvariance to the analysis
of return-flow in higher-order programs.  Abstract garbage collection
grants unbounded polyvariance to abstract addresses which become
unreachable between invocations of the abstract contexts in which they
were created.  Pushdown analysis solves the problem of precisely
analyzing recursion in higher-order languages; abstract garbage
collection is essential in solving the ``stickiness'' problem. Alone,
our benchmarks demonstrate that each method can reduce analysis times
and boost precision by orders of magnitude.

We combine these methods.  The challenge in marrying these techniques
is not subtle: computing the reachable control states of a pushdown
system relies on limiting access during transition to the top of the
stack; abstract garbage collection, on the other hand, needs full
access to the entire stack to compute a root set, just as concrete
collection does. \emph{Conditional} pushdown systems were developed
for just such a conundrum, but existing methods are ill-suited for the
dynamic nature of garbage collection.

We show fully precise and approximate solutions to the feasible paths problem for pushdown garbage-collecting control-flow analysis.
Experiments reveal synergistic interplay between garbage collection and pushdown techniques, and the fusion demonstrates
``better-than-both-worlds'' precision.



\end{abstract}

\section{Introduction}

The development of a context-free\footnote{%
As in context-free language, not a context-insensitive analysis.
} approach to control-flow analysis
(CFA2) by \citet{mattmight:Vardoulakis:2010:CFA2} provoked a shift in the
static analysis of higher-order
programs.
Prior to CFA2, a precise analysis of recursive behavior 
had been a challenge---even though flow analyses have an important role to play in 
optimization for functional languages, such as 
flow-driven inlining~\cite{mattmight:Might:2006:DeltaCFA},
interprocedural constant propagation~\cite{mattmight:Shivers:1991:CFA}
and type-check elimination~\cite{mattmight:Wright:1998:Polymorphic}.

While it had been possible to statically analyze
recursion \emph{soundly}, CFA2 made it possible to analyze recursion
\emph{precisely} by matching calls and returns without approximating the stack as $k$-CFA does.
The approximation is only in the binding structure, and not the control structure of the program.
In its pursuit of recursion,
clever engineering steered CFA2 to a \emph{theoretically} intractable complexity, though in practice it performs well.
Its payoff is significant reductions in 
analysis time \emph{as a result of} corresponding increases
in precision.

For a visual measure of the impact, Figure~\ref{fig:diamond}
renders the abstract transition graph (a model of all possible traces through the program) for 
the toy program in Figure~\ref{fig:toy}.
\begin{figure}
\figrule
\begin{code}
(define (id x) x)

(define (f n)
  (cond [(<= n 1)  1]
        [else      (* n (f (- n 1)))]))

(define (g n)
  (cond [(<= n 1)  1]
        [else      (+ (* n n) (g (- n 1)))]))
    
(print (+ ((id f) 3) ((id g) 4)))
\end{code}
\caption{A small example to illuminate the strengths and weaknesses of
  both pushdown analysis and abstract garbage collection.}
\label{fig:toy}
\figrule
\end{figure}
For this example,
pushdown analysis
eliminates spurious return-flow from the
use of  recursion.
But, recursion is just one problem of many for flow analysis.
For instance, pushdown analysis still gets tripped up by
the
spurious cross-flow problem;
at calls to \texttt{(id f)}
and \texttt{(id g)} in the previous example,
it thinks \texttt{(id g)} could be \texttt{f} \emph{or} \texttt{g}.
CFA2 is not confused in this due to its precise stack frames, but can be confused by unreachable heap-allocated bindings.

Powerful techniques such as abstract garbage collection~\cite{mattmight:Might:2006:GammaCFA}
were developed to address the cross-flow problem (here in a way complementary to CFA2's stack frames).
The cross-flow problem arises because monotonicity prevents revoking a
judgment like ``procedure ${\tt f}$ flows to {\tt x},'' or ``procedure
${\tt g}$ flows to {\tt x},'' once it's been made.

\begin{figure}
\begin{center}
\includegraphics[width=3in]{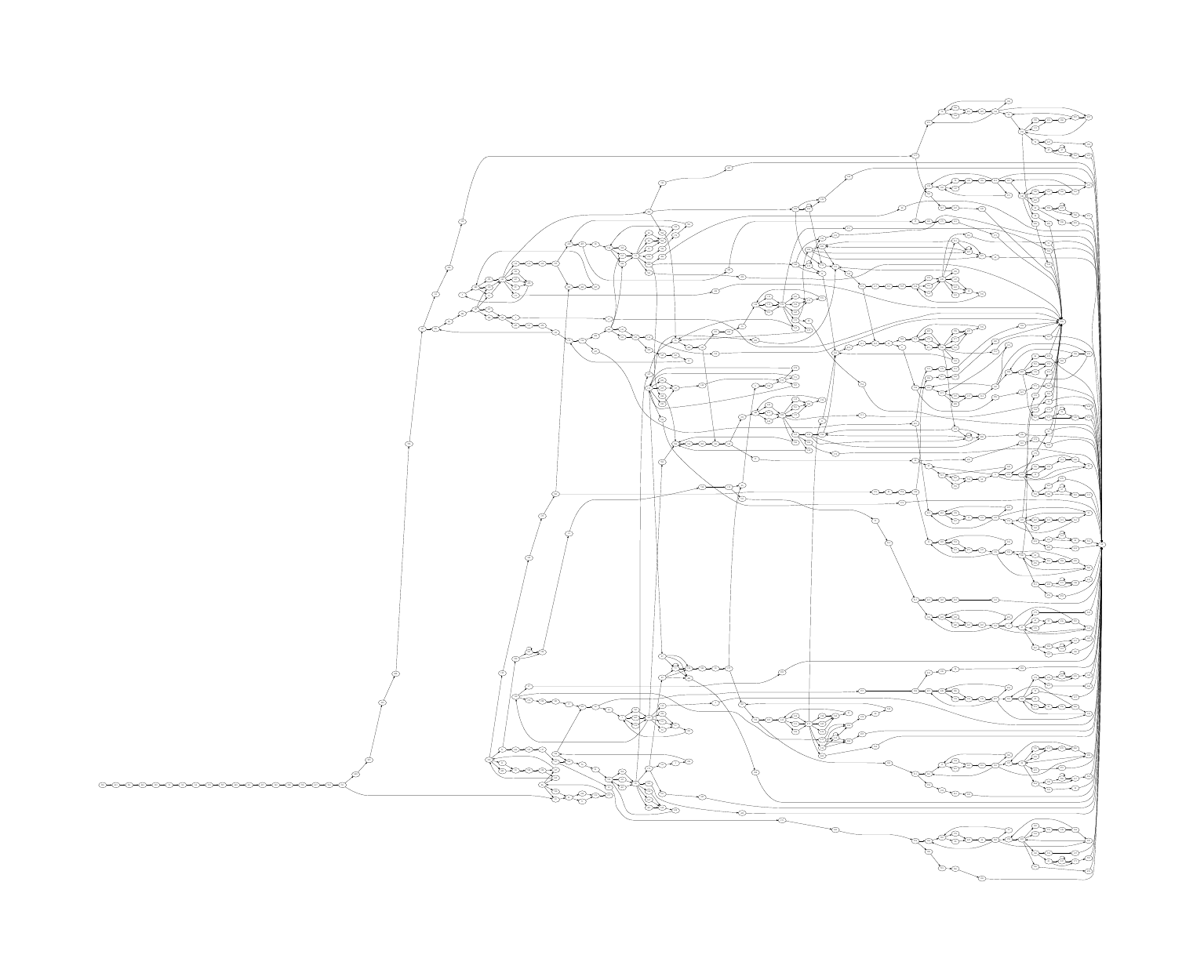}
\\
(1) without pushdown analysis or abstract GC: 653 states
\\
\includegraphics[width=3in]{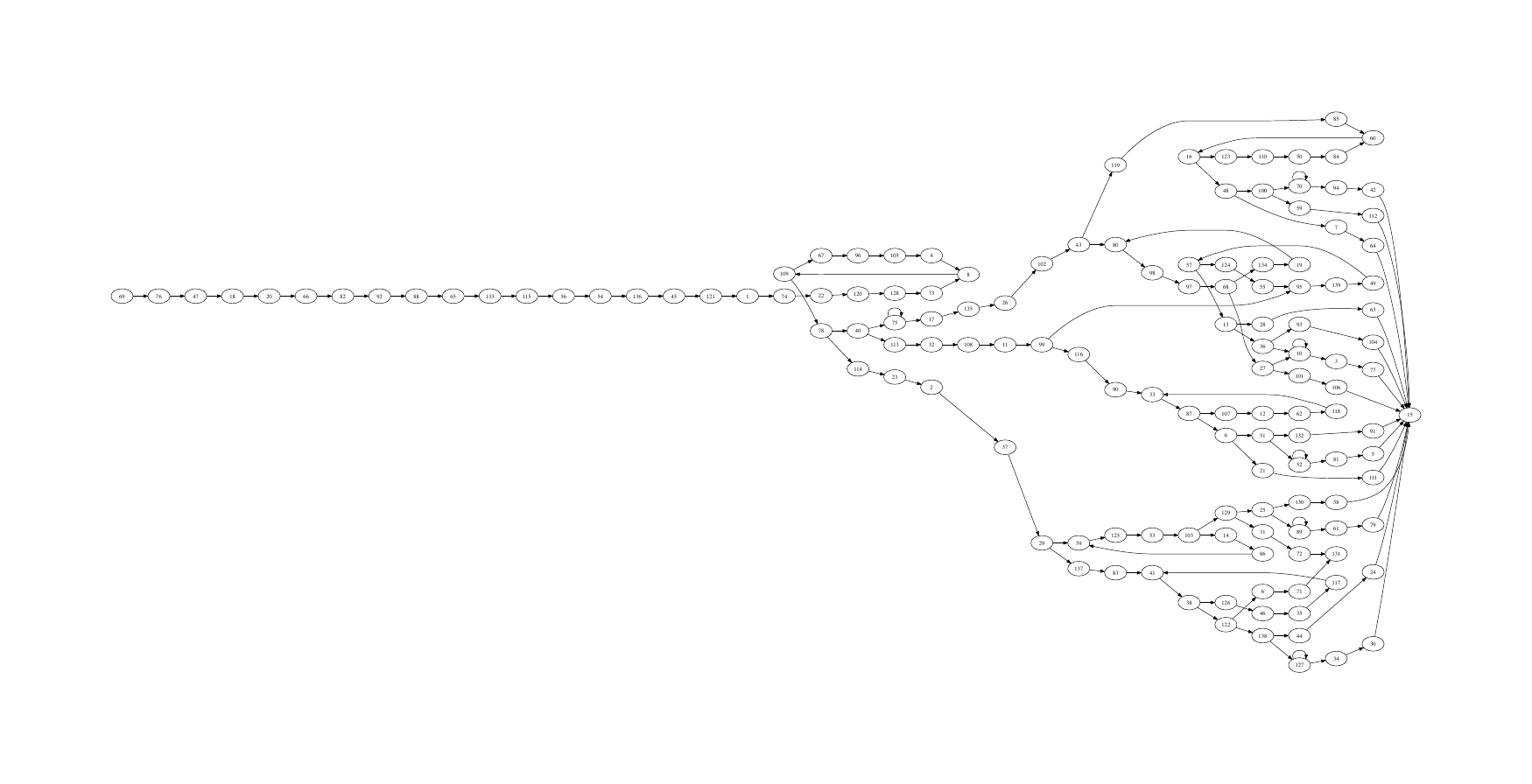}
\\
(2) with pushdown only: 139 states
\\
\includegraphics[width=3.3in]{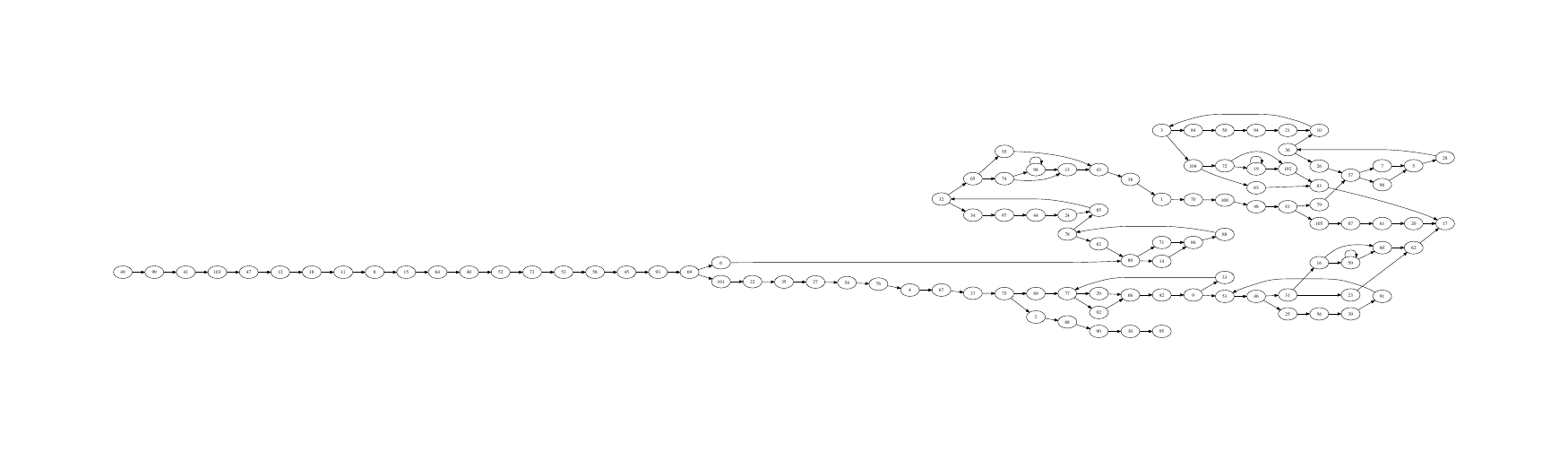}
\\
(3) with GC only: 105 states
\\
\includegraphics[width=2.5in]{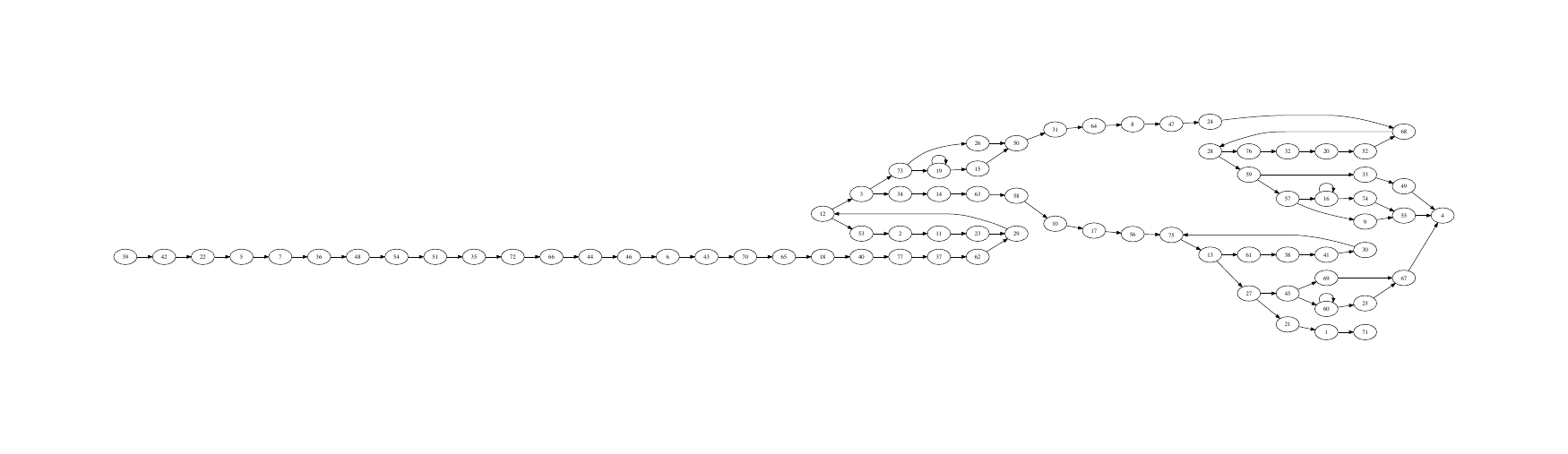}
\\
(4) with pushdown analysis and abstract GC: 77 states
\end{center}

\caption{
We generated an abstract transition graph
for the same program from Figure~\ref{fig:toy} four times: 
 (1) without pushdown analysis or abstract garbage collection;
 (2) with only abstract garbage collection;
 (3) with only pushdown analysis;
 (4) with both pushdown analysis and abstract garbage collection.
With only pushdown or abstract GC, the abstract transition graph shrinks
by an order of magnitude, but in different ways.
The pushdown-only analysis is confused by variables 
that are bound to several different higher-order functions,
but for short durations.
The abstract-GC-only is confused by non-tail-recursive loop structure.
With both techniques enabled, the graph shrinks by nearly half yet again
and fully recovers the control structure of the original program.}
\label{fig:diamond}
\end{figure}

In fact, abstract garbage collection, by itself, also delivers significant
improvements to analytic speed and precision in many benchmarks.
(See Figure~\ref{fig:diamond} again for a visualization of that
impact.)

It is natural to ask: can abstract garbage collection and pushdown 
analysis work together?
Can their strengths be multiplied?
At first, the answer appears to be a disheartening~``\emph{No}.''

%
%
%

\subsection{The problem: The whole stack \emph{versus} just the top}

Abstract garbage collection seems to require more than pushdown analysis can
decidably provide: access to the full stack.
Abstract garbage collection, like its name implies, discards unreachable values
from an abstract store during the analysis.
Like concrete garbage collection, abstract garbage collection also begins its
sweep with a root set, and like concrete garbage collection, it must traverse
the abstract stack to compute that root set.
But, pushdown systems are restricted to viewing the top of the stack (or
a bounded depth)---a condition violated by this traversal.

Fortunately, abstract garbage collection does not need to arbitrarily modify
the stack.
It only needs to know the root set of addresses in the stack.
This kind of system has been studied before in the context of compilers that build a symbol table (a so-called ``one-way stack automaton''~\citep{ianjohnson:one-way-sa:ginsburg:1967}),%
 in the context of first-order model-checking (pushdown systems with checkpoints~\citep{EsparzaKS03}),%
 and also in the context of points-to analysis for Java (conditional weighted pushdown systems (CWPDS) ~\citep{ianjohnson:DBLP:conf/pepm/LiO10}).
We borrow the definition of (unweighted) conditional pushdown system (CPDS) in this work, though our analysis does not take CPDSs as inputs.
Higher-order flow analyses typically do not take a control-flow graph, or similar pre-abstracted object, as input and produce an annotated graph as output.
Instead, they take a program as input and ``run it on all possible inputs'' (abstractly) to build an approximation of the language's reduction relation (semantics), specialized to the given program.
This semantics may be non-standard in such a way that extra-semantic information might be accumulated for later analyses' consumption.
The important distinction between higher-order and first-order analyses is that the \emph{model} to analyze is built \emph{during} the analysis, which involves interpreting the program (abstractly).
When a language's semantics treats the control stack as an actual stack, \ie, it does not have features such as first-class continuations, an interpreter can be split into two parts:
 a function that takes the current state and returns all next states along with a pushed activation frame or a marker that the stack is unchanged;
 and a function that takes the current state, a possible ``top frame'' of the stack, and returns the next states after popping this frame.
This separation is crucial for an effective algorithm, since pushed frames are understood from program text, and popped frames need only be enumerated from a (usually small) set that we compute along the way.
Control-state reachability for the straightforward formulation of stack introspection ends up being uncomputable.
Conditional pushdown systems introduce a relatively weak regularity constraint on transitions' introspection:
a CPDS may match the current stack against a choice of finitely many regular languages of stacks in order to transition from one state to the next along with the stack action.
The general solutions to feasible paths in \emph{conditional} pushdown systems enumerate all languages of stacks that a transition may be conditioned on.
This strategy is a non-starter for garbage collection, since we delineate stacks by the addresses they keep live; this is exponential in the number of addresses.
The abstraction step that finitizes the address space is what makes the problem fall within the realm of CPDSs, even if the target is so big it barely fits.
But abstract garbage collection is special --- we can compute which languages of stacks we need to check against, given the current state of the analysis.
It is therefore possible to fuse the full benefits of abstract garbage collection with pushdown analysis.
The dramatic reduction in abstract transition graph size
from the top to the bottom in Figure~\ref{fig:diamond}
(and echoed by later benchmarks) conveys the impact
of this fusion.

\paragraph{Secondary motivations}
There are four secondary motivations for this work:
\begin{enumerate}
\item bringing context-sensitivity to pushdown analysis;
\item exposing the context-freedom of the analysis;
\item enabling pushdown analysis without continuation-passing style; and
\item defining an
  alternative algorithm for computing pushdown analysis,
  introspectively or otherwise.
\end{enumerate}
In CFA2, monovariant (0CFA-like) context-sensitivity is etched directly into
the abstract ``local'' semantics, which is in turn
phrased in terms of an explicit (imperative) summarization algorithm
for a partitioned continuation-passing style.
Our development exposes the classical parameters (exposed as
allocation functions in a semantics) that allow one to tune the
context-sensitivity and polyvariance (accomplishing (1)), thanks to
the semantics of the analysys formulated in the form of an
``abstracted abstract machine''~\cite{mattmight:VanHorn:2012:AAM}.

In addition, the context-freedom of CFA2 is buried implicitly
inside an imperative summarization algorithm.
No pushdown system or context-free grammar is explicitly identified.
Thus, a motivating factor for our work was to make 
the pushdown system in CFA2 explicit, and to make the control-state
reachability algorithm purely functional (accomplishing (2)).

A third motivation was to show that a transformation to continuation-passing
style is unnecessary for pushdown analysis.
In fact, pushdown analysis is arguably more natural over direct-style programs.
By abstracting all machine components except for the program stack, 
it converts naturally and readily into a pushdown system (accomplishing (3)).
In his dissertation, Vardoulakis showed a direct-style version of CFA2 that exploits the meta-language's runtime stack to get precise call-return matching.
The approach is promising, but its correctness remains unproven, and it does not apply to generic pushdown systems.

Finally, to bring much-needed clarity to algorithmic formulation of
pushdown analysis, we have included an appendix containing a reference
implementation in Haskell (accomplishing (4)).
We have kept the code as close in form to the mathematics as possible, so that
where concessions are made to the implementation, they are obvious.

\subsection{Overview}
We first review preliminaries to set a consistent feel for
terminology and notation, particularly with respect
to pushdown systems.
The derivation of the analysis 
begins with a concrete CESK-machine-style semantics for
A-Normal Form \lc{}.
The next step is an infinite-state abstract interpretation, 
constructed by
bounding the C(ontrol), E(nvironment) and S(tore) portions of the machine.
Uncharacteristically,
we leave the stack component---the K(ontinuation)---unbounded.

A shift in perspective reveals that this abstract interpretation is a pushdown
system.
We encode it as a pushdown automaton explicitly, and pose control state
reachability as a decidable language intersection problem.
We then extract a rooted pushdown system from the pushdown automaton.
For completeness, we fully develop pushdown analysis for higher-order programs,
including an efficient algorithm for computing reachable control states.
We go further by characterizing complexity and demonstrating the approximations
necessary to get to a polynomial-time algorithm.

We then introduce abstract garbage collection and quickly find
that it violates the pushdown model
with its traversals of the stack.
To prove the decidability of control-state reachability,
we formulate introspective pushdown systems, and 
recast abstract garbage collection within this framework.
We then review that control-state reachability is decidable for
introspective pushdown systems as well when
subjected to a straightforward regularity constraint.

We conclude with an implementation and empirical evaluation that shows strong
synergies between pushdown analysis and abstract garbage collection, including
significant reductions in the size of the abstract state transition graph.

\subsection{Contributions}
We make the following contributions:
\begin{enumerate}
\item Our primary contribution is an \emph{online} decision procedure
  for reachability in introspective pushdown systems, with a more
  efficient specialization to abstract garbage collection.

\item We show that classical notions of context-sensitivity, such as
  $k$-CFA and poly/CFA, have direct generalizations in a pushdown
  setting.  CFA2 was presented as a monovariant
  analysis,\footnote{Monovariance refers to an abstraction that groups
    all bindings to the same variable together: there is \emph{one}
    abstract variant for all bindings to each variable.} whereas we
  show polyvariance is a natural extension.

\item We make the context-free aspect of CFA2 explicit: we clearly define and
identify the pushdown system.
We do so by starting with a classical CESK machine and systematically
abstracting until a pushdown system emerges.
We also remove the orthogonal frame-local-bindings aspect of CFA2, so as to
focus solely on the pushdown nature of the analysis.

\item (*) We remove the requirement for a global CPS-conversion
by synthesizing the analysis directly for direct-style (in
    the form of A-normal form lambda-calculus --- a local transformation).

\item We empirically validate claims of improved
precision on a suite of benchmarks.
We find synergies between 
pushdown analysis 
and
abstract garbage collection 
that makes the whole greater that the sum of its parts.

\item
We provide a mirror of the major formal development as working 
Haskell code in the appendix.  
This code illuminates dark corners
of pushdown analysis and it 
provides a concise formal reference implementation.

\end{enumerate}

(*) The CPS requirement distracts from the connection between continuations and stacks.
We do not discuss \texttt{call/cc} in detail, since we believe there are no significant barriers to adapting the techniques of \citet{dvanhorn:Vardoulakis2011Pushdown} to the direct-style setting, given related work in \citet{local:hopa-summaries}.
Languages with exceptions fit within the pushdown model since a throw can be modeled as ``pop until first catch.''

\section{Pushdown Preliminaries}

The literature contains  many equivalent definitions of pushdown machines, so
we adapt our own definitions from \citet{mattmight:Sipser:2005:Theory}.
\emph{Readers familiar with pushdown theory may wish to skip ahead.}

\subsection{Syntactic sugar}

When a triple $(x,\ell,x')$ is an edge in a labeled graph:
\begin{equation*}
x \pdedge^\ell x'  \equiv
(x,\ell,x')
\text.
\end{equation*}
Similarly, when a pair $(x,x')$ is a graph edge:
\begin{equation*}
\biedge{x}{x'} \equiv (x,x')
\text.
\end{equation*}
We use both
string and vector notation for sequences:
\begin{equation*}
a_1 a_2 \ldots a_n \equiv \vect{a_1,a_2,\ldots,a_n}
\equiv
\vec{a} 
\text.
\end{equation*}

\subsection{Stack actions, stack change and stack manipulation}

Stacks are sequences over a stack alphabet $\StackAlpha$.
To reason about stack manipulation concisely,
we first turn stack alphabets into ``stack-action'' sets;
each character represents a change to the stack: push, pop or no
change.

For each character $\stackchar$ in a stack alphabet $\StackAlpha$, the
\defterm{stack-action} set $\StackAlpha_\pm$ contains a push character
$\stackchar_{+}$; a pop character $\stackchar_{-}$;
and a no-stack-change indicator, $\epsilon$:
\begin{align*}
\stackact \in \StackAlpha_\pm &\produces \epsilon && \text{[stack unchanged]} 
\\
    &\;\;\opor\;\; \stackchar_{+}  \;\;\;\text{ for each } \stackchar \in \StackAlpha && \text{[pushed $\stackchar$]}
    \\
      &\;\;\opor\;\; \stackchar_{-}  \;\;\;\text{ for each } \stackchar \in \StackAlpha && \text{[popped $\stackchar$]}
      \text.
\end{align*}
In this paper, the symbol $\stackact$ represents some stack action.

When we develop introspective pushdown systems, we are going
to need formalisms for easily manipulating stack-action strings
and stacks.
Given a string of stack actions, we can compact it into a minimal
string describing net stack change.
We do so through the operator $\fnet{\cdot} : \StackAlpha_\pm^* \to
\StackAlpha_\pm^*$, which cancels out opposing adjacent push-pop stack
actions:
\begin{align*}
\fnet{\vec{\stackact} \; \stackchar_+\stackchar_- \; \vecp{\stackact}} &= 
\fnet{\vec{\stackact} \; \vecp{\stackact}} 
&
\fnet{\vec{\stackact} \; \epsilon \; \vecp{\stackact}} &= 
\fnet{\vec{\stackact} \; \vecp{\stackact}} 
\text,
\end{align*}
so that
$\fnet{\vec{\stackact}} = \vec{\stackact}\text,$
if there are no cancellations to be made in the string $\vec{\stackact}$.

We can convert a net string back into a stack by stripping off the
push symbols with the stackify operator, $\fstackify{\cdot} :
\StackAlpha^{*}_\pm \parto \StackAlpha^*$:
\begin{align*}
\fstackify{\stackchar_+ \stackchar_+' \ldots \stackchar_+^{(n)}} =
\vect{\stackchar^{(n)}, \ldots, \stackchar', \stackchar}
\text,
\end{align*}
and for convenience, $[\vec{\stackact}] = 
\fstackify{\fnet{\vec{\stackact}}}$.
Notice the stackify operator is defined for strings containing
only push actions. 

  %
  %
  %

  \subsection{Pushdown systems}
  A \defterm{pushdown system} is a triple
  $M = (\ControlStates,\StackAlpha,\transfunction)$ where:
  \begin{enumerate}

  \item $\ControlStates$ is a finite set of control states;

  \item $\StackAlpha$ is a stack alphabet; and

  \item $\transfunction \subseteq
  \ControlStates \times \StackAlpha_\pm \times \ControlStates$ is a transition relation.
  \end{enumerate}
  The set $\ControlStates \times \StackAlpha^*$ is 
  called the \defterm{configuration-space} of this pushdown system.
  We use $\mathbb{PDS}$ to denote the class of all pushdown systems.
  \\

  \noindent
  For the following definitions, let $M = (\ControlStates,\StackAlpha,\transfunction)$.
  \begin{itemize}


  \item The labeled \defterm{transition relation} $(\PDTrans_{M}) \subseteq
  (\ControlStates \times \StackAlpha^*) \times 
  \StackAlpha_\pm \times 
  (\ControlStates \times \StackAlpha^*)$
  determines whether one configuration may transition to another while performing the given stack action:
        \begin{align*}
(\qstate, \vec{\stackchar}) 
  \mathrel{\overset{\epsilon}{\underset{M}{\PDTrans}}}
  (\qstate',\vec{\stackchar}) 
  & \text{ iff }
  \qstate \pdedge^\epsilon \qstate'
  \in \transfunction
  && \text{[no change]}
  \\
    (\qstate, \stackchar : \vec{\stackchar}) 
    \mathrel{\overset{\stackchar_{-}}{\underset{M}{\PDTrans}}}
    (\qstate',\vec{\stackchar})
    & \text{ iff }
    \qstate \pdedge^{\stackchar_{-}} \qstate'
    \in \transfunction
    && \text{[pop]}
    \\
      (\qstate, \vec{\stackchar}) 
      \mathrel{\overset{\stackchar_{+}}{\underset{M}{\PDTrans}}}
      (\qstate',\stackchar : \vec{\stackchar}) 
      & \text{ iff }
      \qstate \pdedge^{\stackchar_{+}} \qstate'
      \in \transfunction
      && \text{[push]}
      \text.
      \end{align*}

\item If unlabelled, the transition relation $(\PDTrans)$ checks whether \emph{any} stack action can enable the transition:
\begin{align*}
\conf \mathrel{\underset{M}{\PDTrans}} \conf' \text{ iff }
\conf \mathrel{\overset{\stackact}{\underset{M}{\PDTrans}}} \conf' \text{ for some stack action } \stackact 
\text.
\end{align*}

\item

For a string of stack actions $\stackact_1 \ldots
\stackact_n$:
\begin{equation*}
\conf_0 \mathrel{\overset{\stackact_1\ldots\stackact_n}{\underset{M}{\PDTrans}}} \conf_n
\text{ iff }
\conf_0
\mathrel{\overset{\stackact_1}{\underset{M}{\PDTrans}}}
\conf_1
\mathrel{\overset{\stackact_2}{\underset{M}{\PDTrans}}}
 \cdots
\mathrel{\overset{\stackact_{n-1}}{\underset{M}{\PDTrans}}}
\conf_{n-1} 
\mathrel{\overset{\stackact_n}{\underset{M}{\PDTrans}}}
\conf_n\text,
\end{equation*}
for some configurations $\conf_0,\ldots,\conf_n$.

\item

For the transitive closure:
\begin{equation*}
  \conf \mathrel{\underset{M}{\PDTrans}^{*}} \conf'
  \text{ iff }
  \conf \mathrel{\overset{\vec{\stackact}}{\underset{M}{\PDTrans}}} \conf'
   \text{ for some action string }
   \vec{\stackact}
 \text.
\end{equation*}

\end{itemize}

  %

\paragraph{Note}
Some texts define the transition relation $\transfunction$ so that 
$\transfunction \subseteq \ControlStates \times \StackAlpha \times \ControlStates
\times \StackAlpha^*$.
In these texts, $(\qstate,\stackchar,\qstate',\vec{\stackchar}) \in
\transfunction$ means, ``if in control state $\qstate$ while the
character $\stackchar$ is on top, pop the stack, transition to
control state $\qstate'$ and push $\vec{\stackchar}$.''
Clearly, we can convert between these two representations by
introducing extra control states to our representation when it needs
to push multiple characters.

\subsection{Rooted pushdown systems}

A \defterm{rooted pushdown system} is a quadruple 
$(\ControlStates,\StackAlpha,\transfunction,\qstate_0)$ in which 
$(\ControlStates,\StackAlpha,\transfunction)$ is a pushdown system and
$\qstate_0 \in \ControlStates$ is an initial (root) state.
$\mathbb{RPDS}$  is the class of all rooted pushdown
systems.
For a rooted pushdown system $M =
(\ControlStates,\StackAlpha,\transfunction,\qstate_0)$, we define 
the \defterm{reachable-from-root transition relation}:
\begin{equation*}
\conf \overset{\stackact}{\underset{M}{\RPDTrans}} \conf' \text{ iff } 
(\qstate_0,\vect{})
\mathrel{\underset{M}{\PDTrans}^*} 
\conf 
\text{ and }
\conf 
\mathrel{\overset{\stackact}{\underset{M}{\PDTrans}}}
\conf'
\text.
\end{equation*}
In other words, the root-reachable transition relation also makes
sure that the root control state can actually reach the transition.

We overload the root-reachable transition relation to operate on
control states:
\begin{equation*}
\qstate 
\mathrel{\overset{\stackact}{\underset{M}{\RPDTrans}}}
\qstate' \text{ iff }
(\qstate,\vec{\stackchar}) 
\mathrel{\overset{\stackact}{\underset{M}{\RPDTrans}}}
(\qstate',\vecp{\stackchar}) 
\text{ for some stacks }
\vec{\stackchar},
\vecp{\stackchar}
\text.
\end{equation*}
For both root-reachable relations, if we elide the stack-action label,
then, as in the un-rooted case, the transition holds if \emph{there
  exists} some stack action that enables the transition:
  \begin{align*}
  \qstate 
  \mathrel{\underset{M}{\RPDTrans}}
  \qstate' \text{ iff }
  \qstate 
  \mathrel{\overset{\stackact}{\underset{M}{\RPDTrans}}}
  \qstate' 
  \text{ for some action } \stackact
  \text.
  \end{align*}

\subsection{Computing reachability in pushdown systems}

A pushdown flow analysis 
can be construed as
  computing the \emph{root-reachable}
  subset of control states in a rooted pushdown system, 
  $M = (\ControlStates,\StackAlpha,\transfunction,\qstate_0)$:
  \begin{align*}
  \setbuild{\qstate}{
    \qstate_0 
      \mathrel{\underset{M}{\RPDTrans}}
    \qstate
  }\text.
\end{align*}
Reps~\emph{et. al} and many others provide a straightforward ``summarization'' algorithm
to compute this set~\cite{mattmight:Bouajjani:1997:PDA-Reachability,dvanhorn:Kodumal2004Set,mattmight:Reps:1998:CFL,mattmight:Reps:2005:Weighted-PDA}.
We will develop a complete alternative to summarization, and then instrument this development for introspective pushdown systems.
Summarization builds two large tables: 
\begin{itemize}
\item{One maps ``calling contexts'' to ``return sites'' (AKA ``local
  continuations'') so that a returning function steps to all the
  places it must return to.}
\item{The other maps ``calling contexts'' to
  ``return states,'' so that any place performing a call with an
  already analyzed calling context can jump straight to the returns.}
\end{itemize}
This setup requires intimate knowledge of the language in question for where continuations should be segmented to be ``local'' and is strongly tied to function call and return.
Our algorithm is based on graph traversals of the transition relation for a generic pushdown system.
It requires no specialized knowledge of the analyzed language, and it avoids the memory footprint of summary tables.

\subsection{Pushdown automata}
A \defterm{pushdown automaton} is an input-accepting generalization
of a rooted pushdown system, a 7-tuple
$(\QStates,\Alphabet,\StackAlpha,\transition,\qstate_0,\FStates,\vec{\stackchar})$ in which:
\begin{enumerate}
\item $\Alphabet$ is an input alphabet; 

\item $\transfunction \subseteq \ControlStates \times \StackAlpha_\pm
  \times (\Alphabet \union \set{\epsilon}) \times \ControlStates$ is a
  transition relation; 

\item $\FStates \subseteq \QStates$ is a set of accepting states; and

\item $\vec{\stackchar} \in \StackAlpha^*$ is the initial stack.

\end{enumerate}
We use $\mathbb{PDA}$ to denote the class of all pushdown automata.

Pushdown automata recognize languages over their input alphabet.
To do so, their transition relation may optionally consume an input
character upon transition.
Formally, a PDA $M = (\QStates,\Alphabet,\StackAlpha,\transition,\qstate_0,\FStates,\vec{\stackchar})$
recognizes the language $\Lang(M) \subseteq \Alphabet^*$:
\begin{align*}
\epsilon \in \Lang(M) &\text{ if }
\qstate_0 \in \FStates
%
\\
aw \in \Lang(M) &\text{ if }
\transition(\qstate_0,\stackchar_+,a,\qstate')
\text{ and }
w \in \Lang(\QState,\Alphabet,\StackAlpha,\transition,\qstate', \FStates, \stackchar : \vec{\stackchar})
\\
aw \in \Lang(M) &\text{ if }
\transition(\qstate_0,\epsilon,a,\qstate')
\text{ and }
w \in \Lang(\QState,\Alphabet,\StackAlpha,\transition,\qstate', \FStates, \vec{\stackchar})
\\
aw \in \Lang(M) &\text{ if }
\transition(\qstate_0,\stackchar_-,a,\qstate')
\text{ and }
w \in \Lang(\QState,\Alphabet,\StackAlpha,\transition,\qstate', \FStates, \vec{\stackchar}')
\\
&\text{ where } \vec{\stackchar} = \vect{\stackchar,\stackchar_2,\ldots,\stackchar_n}
\text{ and } \vec{\stackchar}' = \vect{\stackchar_2,\ldots,\stackchar_n}
\text,
\end{align*}
where $a$ is either the empty string $\epsilon$ or a single character.

\subsection{Nondeterministic finite automata}
In this work, we will need a finite description of 
all possible stacks at a given control state within
a rooted pushdown system.
We will exploit the fact that the set of stacks
at a given control point is a regular language.
Specifically, we will extract a nondeterministic finite automaton
accepting that language from the structure
of a rooted pushdown system.
A \defterm{nondeterministic finite automaton} (NFA) is a quintuple
$M = (\QStates, \Alphabet, \transfunction, \qstate_0, \FStates)$:
\begin{itemize}
\item $\ControlStates$ is a finite set of control states;

\item $\Alphabet$ is an input alphabet; 

\item $\transfunction 
  \subseteq
\ControlStates \times (\Alphabet \union \set{\epsilon}) 
\times \ControlStates$ 
is a transition relation.

\item $\qstate_0$ is a distinguished start state.

\item $\FStates \subseteq \QStates$ is a set of accepting states.
\end{itemize}
We denote the class of all NFAs as $\mathbb{NFA}$.

\section{Setting: A-Normal Form 
$\lambda$-Calculus}
\label{sec:anf}

Since our goal is analysis of
\emph{higher-order languages}, we operate on the
\lc{}.
To simplify presentation of the concrete and abstract semantics, we choose 
A-Normal Form \lc{}.
(This is a strictly cosmetic
 choice: all of our results can be replayed \emph{mutatis mutandis} in
 the standard direct-style setting as well.
 This differs from CFA2's requirement of CPS, since ANF can be applied locally whereas CPS requires a global transformation.)
ANF enforces an order of evaluation and it requires
that all arguments to a function be atomic:
\begin{align*}
\expr \in \syn{Exp} &\produces \letiform{\vv}{\call}{\expr} && \text{[non-tail call]}
\\
&\;\;\opor\;\; \call && \text{[tail call]}
\\
&\;\;\opor\;\; \aexpr && \text{[return]}
\\
\fexpr,\aexpr \in \syn{Atom} &\produces \vv \opor \lam && \text{[atomic expressions]}
\\
\lam \in \syn{Lam} &\produces \lamform{\vv}{\expr} && \text{[lambda terms]}
\\
\call \in \syn{Call} &\produces \appform{\fexpr}{\aexpr} && \text{[applications]}
\\
\vv \in \syn{Var} &\text{ is a set of identifiers} && \text{[variables]}
\text.
\end{align*}

We use the CESK machine of \citet{mattmight:Felleisen:1987:CESK} to specify a small-step semantics
for ANF.
The CESK machine has an explicit stack, and under a structural abstraction, the stack
component of this machine directly becomes the stack component of a
pushdown system.
The set of configurations ($\s{Conf}$) for 
this machine has the four expected components (Figure~\ref{fig:cesk}).
\begin{figure}
\figrule
\begin{align*}
\conf \in \s{Conf} &= \syn{Exp} \times \s{Env} \times \s{Store} \times \s{Kont} && \text{[configurations]}
\\
\env \in \s{Env} &= \syn{Var} \parto \s{Addr} && \text{[environments]}
\\
\store \in \s{Store} &= \s{Addr} \to \s{Clo} && \text{[stores]}
\\
\clo \in \s{Clo} &= \syn{Lam} \times \s{Env} && \text{[closures]}
\\
\cont \in \s{Kont} &= \s{Frame}^* && \text{[continuations]}
\\
\phrame \in \s{Frame} &= \syn{Var} \times \syn{Exp} \times \s{Env}  && \text{[stack frames]}
\\
\addr \in \s{Addr} &\text{ is an infinite set of addresses} && \text{[addresses]}
\text.
\end{align*}
\captionsetup{justification=centering}
\caption{The concrete configuration-space.}
\label{fig:cesk}
\figrule
\end{figure}

\subsection{Semantics}

To define the semantics, we need five items:
\begin{enumerate}
\item $\Inject : \syn{Exp} \to \s{Conf}$ injects an expression into
a configuration:
\begin{equation*}
\conf_0 = \Inject(\expr) = (\expr, [], [], \vect{})
\text.
\end{equation*}

\item{%
$\ArgEval : \syn{Atom} \times \s{Env} \times \s{Store} \parto
\s{Clo}$ evaluates atomic expressions:
\begin{align*}
  \ArgEval(\lam,\env,\store) &= (\lam,\env) && \text{[closure creation]}
  \\
  \ArgEval(\vv,\env,\store) &= \store(\env(\vv)) && \text{[variable look-up]}
  \text.
\end{align*}}

\item{%
$(\To) \subseteq \s{Conf} \times \s{Conf}$ transitions between
configurations. (Defined below.)}

\item{%
 $\Eval : \syn{Exp} \to \Pow{\s{Conf}}$ computes the set of
 reachable machine configurations for a given program:
 \begin{equation*}
   \Eval(\expr) = \setbuild{ \conf }{ \Inject(\expr) \To^* \conf } 
   \text.
 \end{equation*}}

\item{%
$\alloc : \syn{Var} \times \s{Conf} \to \s{Addr}$ 
chooses fresh store addresses for newly bound variables.
The address-allocation function is an opaque parameter in this semantics,
so that the forthcoming abstract semantics may also parameterize allocation. 
The nondeterministic nature of the semantics makes any choice of $\alloc$ sound \citep{mattmight:Might:2009:APosteriori}.
This parameterization provides the knob to tune the polyvariance and context-sensitivity of the resulting analysis.
For the sake of defining the concrete semantics, letting addresses be natural numbers suffices.
The allocator can then choose the lowest unused address:
\begin{align*}
  \s{Addr} &= \mathbb{N}
  \\
  \alloc(v,(\expr,\env,\store,\cont)) &= 
  1 + \max(\dom(\store))
  \text.
\end{align*}}
\end{enumerate}




\paragraph{Transition relation}
To define the transition $\conf \To \conf'$, we need three rules.
The first rule handles tail calls by evaluating the function into a closure, evaluating the argument into a value and then moving to the body of the closure's \lamterm{}:
\begin{center}
  \minipagebreak{0.50}[-1cm]{
  \overbrace{(\sembr{\appform{\fexpr}{\aexpr}}, \env, \store, \cont)}^{\conf}
  &\To
  \overbrace{(\expr,\env'',\store',\cont)}^{\conf'}
  \text{, where }
  \\
  (\sembr{\lamform{\vv}{\expr}}, \env') &= \ArgEval(\fexpr,\env,\store)}
  {0.45}[1mm]{%
  \addr &= \alloc(\vv,\conf)
  \\
  \env'' &= \env'[\vv \mapsto \addr]
  \\
  \store' &= \store[\addr \mapsto \ArgEval(\aexpr,\env,\store)]
  \text.}
\end{center}

\noindent
Non-tail calls push a frame onto the stack and evaluate the call:
\begin{align*}
  \overbrace{(\sembr{\letiform{\vv}{\call}{\expr}}, \env, \store, \cont)}^{\conf}
  &\To
  \overbrace{(\call,\env,\store, (\vv,\expr,\env) : \cont)}^{\conf'}
  \text.
\end{align*}

\noindent
Function return pops a stack frame:
\begin{center}
  \minipagebreak{0.50}[-1.1cm]{%
    \overbrace{(\aexpr, \env, \store, (\vv,\expr,\env') :
      \cont)}^{\conf} &\To \overbrace{(\expr,\env'',\store',
      \cont)}^{\conf'} \text{, where }}
  {0.45}[5mm]{%
    \addr &= \alloc(\vv,\conf)
    \\
    \env'' &= \env'[\vv \mapsto \addr]
    \\
    \store' &= \store[\addr \mapsto \ArgEval(\aexpr,\env,\store)]
    \text.}
\end{center}
\section{An Infinite-State Abstract Interpretation}
\label{sec:abstraction}
Our first step toward a static analysis 
is an abstract interpretation
into an \emph{infinite} state-space.
To achieve a pushdown analysis,
we simply
abstract away less than we normally would.
Specifically, 
we leave the stack height unbounded.

Figure~\ref{fig:abs-conf-space}
details the 
abstract configuration-space.
To synthesize it, we force addresses to be a finite set, but
crucially, we leave the stack untouched.
When we compact the set of addresses into a finite set, the machine
may run out of addresses to allocate, and when it does, the
pigeon-hole principle will force multiple closures to reside at the
same address.
As a result, to remain sound we change the range of the store to
become a power set in the abstract configuration-space.
The abstract transition relation has  components
analogous to those from the concrete semantics:

\begin{figure}
\figrule
\begin{align*}
\aconf \in \sa{Conf} &= \syn{Exp} \times \sa{Env} \times \sa{Store} \times \sa{Kont} && \text{[configurations]}
\\
  \aenv \in \sa{Env} &= \syn{Var} \parto \sa{Addr} && \text{[environments]}
  \\
    \astore \in \sa{Store} &= \sa{Addr} \to \Pow{\sa{Clo}} && \text{[stores]}
 \\
   \aclo \in \sa{Clo} &= \syn{Lam} \times \sa{Env} && \text{[closures]}
 \\
 \acont \in \sa{Kont} &= \sa{Frame}^* && \text{[continuations]}
 \\
 \aphrame \in \sa{Frame} &= \syn{Var} \times \syn{Exp} \times \sa{Env}  && \text{[stack frames]}
 \\
\aaddr \in \sa{Addr} &\text{ is a \emph{finite} set of addresses} && \text{[addresses]}
\text.
\end{align*}
\captionsetup{justification=centering}
\caption{The abstract configuration-space.}
\label{fig:abs-conf-space}
\figrule
\end{figure}

              \paragraph{Program injection}
              The abstract injection function $\aInject : \syn{Exp} \to \sa{Conf}$
              pairs an expression with an empty environment, an empty store and an
              empty stack to create the initial abstract configuration:
              \begin{equation*}
\aconf_0 = \aInject(\expr) = (\expr, [], [], \vect{})
  \text.
  \end{equation*}

\paragraph{Atomic expression evaluation}
The abstract atomic expression evaluator, $\aArgEval : \syn{Atom}
\times \sa{Env} \times \sa{Store} \to \PowSm{\sa{Clo}}$, returns the value of
an atomic expression in the context of an environment and a store;
it returns a \emph{set} of abstract closures:
\begin{align*}
\aArgEval(\lam,\aenv,\astore) &= \set{(\lam,\aenv)} && \text{[closure creation]}
\\
\aArgEval(\vv,\aenv,\astore) &= \astore(\aenv(\vv)) && \text{[variable look-up]}
\text.
\end{align*}

\paragraph{Reachable configurations}
The abstract program evaluator $\aEval : \syn{Exp} \to
\PowSm{\sa{Conf}}$ returns all of the configurations reachable from
the initial configuration:
\begin{equation*}
\aEval(\expr) = \setbuild{ \aconf }{ \aInject(\expr) \aTo^* \aconf } 
\text.
\end{equation*}
Because there are an infinite number of abstract configurations, a
na\"ive implementation of this function may not terminate.
Pushdown analysis provides a way of precisely computing this set and both finitely and compactly representing the result.

\paragraph{Transition relation}
The abstract transition relation $(\aTo) \subseteq \sa{Conf} \times
\sa{Conf}$ has three rules, one of which has become nondeterministic.
A tail call may fork because there could be multiple abstract closures
that it is invoking:
\begin{center}
\minipagebreak{0.50}{%
  \overbrace{(\sembr{\appform{\fexpr}{\aexpr}}, \aenv, \astore,
      \acont)}^{\aconf} &\aTo
    \overbrace{(\expr,\aenv'',\astore',\acont)}^{\aconf'} \text{,
      where } \\
    (\sembr{\lamform{\vv}{\expr}}, \aenv') &\in
    \aArgEval(\fexpr,\aenv,\astore)}
  {0.45}[1cm]{%
    \aaddr &= \aalloc(\vv,\aconf)
    \\
    \aenv'' &= \aenv'[\vv \mapsto \aaddr]
    \\
    \astore' &= \astore \join [\aaddr \mapsto
    \aArgEval(\aexpr,\aenv,\astore)] \text.}
\end{center}
We define all of the partial orders shortly, but for stores:
\begin{equation*}
(\astore \join \astore')(\aaddr) = \astore(\aaddr) \union \astore'(\aaddr)
\text.
\end{equation*}

\noindent
A non-tail call pushes a frame onto the stack and evaluates the call:
\begin{align*}
\overbrace{(\sembr{\letiform{\vv}{\call}{\expr}}, \aenv, \astore, \acont)}^{\aconf}
&\aTo
\overbrace{(\call,\aenv,\astore, (\vv,\expr,\aenv) : \acont)}^{\aconf'} 
\text.
\end{align*}

\noindent
A function return pops a stack frame:
\begin{center}
  \minipagebreak{0.50}[-1.1cm]{%
    \overbrace{(\aexpr, \aenv, \astore, (\vv,\expr,\aenv') :
      \acont)}^{\aconf} &\aTo \overbrace{(\expr,\aenv'',\astore',
      \acont)}^{\aconf'} \text{, where }}
  {0.45}[5mm]{%
    \aaddr &= \aalloc(\vv,\aconf)
    \\
    \aenv'' &= \aenv'[\vv \mapsto \aaddr]
    \\
    \astore' &= \astore \join [\aaddr \mapsto
    \aArgEval(\aexpr,\aenv,\astore)] \text.}
\end{center}
\paragraph{Allocation: Polyvariance and context-sensitivity}
\label{sec:polyvariance}
In the abstract semantics, the abstract allocation function
$\aalloc : \syn{Var} \times \sa{Conf} \to \sa{Addr}$ determines the
polyvariance of the analysis.
In a control-flow analysis, \emph{polyvariance} literally refers to
the number of abstract addresses (variants) there are for each
variable.
An advantage of this framework over CFA2 is
that varying this abstract allocation function
instantiates pushdown versions of classical flow analyses.
All of the following allocation approaches can be used with the
abstract semantics. Note, though only a technical detail, that the concrete address space and allocation would change as well for the abstraction function to still work.
The abstract allocation function is a
parameter to the analysis.

\paragraph{Monovariance: Pushdown 0CFA}

Pushdown 0CFA uses variables themselves for abstract addresses:

\begin{align*}
\sa{Addr} &= \syn{Var}
\\
  \alloc(v,\aconf) &= v
  \text.
  \end{align*}

For better precision, a program would be transformed to have unique binders.

  \paragraph{Context-sensitive: Pushdown 1CFA}

  Pushdown 1CFA pairs the variable with the current expression to get
  an abstract address:

  \begin{align*}
  \sa{Addr} &= \syn{Var} \times \syn{Exp}
  \\
    \alloc(\vv,(\expr,\aenv,\astore,\acont)) &= (\vv,\expr)
    \text.
    \end{align*}

  For better precision, expressions are often uniquely labeled so that textually equal expressions at different points in the program are distinguished.

    \paragraph{Polymorphic splitting: Pushdown poly/CFA}

    Assuming we compiled the program from a programming language with
    let expressions and we marked which identifiers were let-bound, we
    can enable polymorphic splitting:

    \begin{align*}
    \sa{Addr} &= \syn{Var} + \syn{Var} \times \syn{Exp} 
    \\
      \alloc(\vv,(\sembr{\appform{\fexpr}{\aexpr}},\aenv,\astore,\acont)) &= 
      \begin{cases}
      (\vv,\sembr{\appform{\fexpr}{\aexpr}}) & \fexpr \text{ is let-bound}
      \\
        \vv & \text{otherwise}
        \text.
        \end{cases}
        \end{align*}

        \paragraph{Pushdown $k$-CFA}

        For pushdown $k$-CFA, we need to look beyond the current state
        and at the last $k$ states, necessarily changing the signature of $\aalloc$ to $\syn{Var} \times \sa{Conf}^* \to \sa{Addr}$.
        By concatenating the expressions in the last $k$ states together, and
        pairing this sequence with a variable we get pushdown $k$-CFA:
        \begin{align*}
        \sa{Addr} &= \syn{Var} \times \syn{Exp}^k
        \\
          \aalloc(\vv,\vect{(\expr_1,\aenv_1,\astore_1,\acont_1),\ldots}) &=
(\vv,\vect{\expr_1,\ldots,\expr_k})
  \text.
  \end{align*}

  \subsection{Partial orders}

  For each set $\hat X$ inside the abstract configuration-space, we use the
  natural partial order, $(\wt_{\hat X}) \subseteq \hat X \times \hat X$.
  Abstract addresses and syntactic sets have flat partial orders.
  For the other sets, the
  partial order lifts:
  \begin{itemize}
  \item point-wise over environments:
  \begin{equation*}
  \aenv \wt \aenv' \text{ iff }
  \aenv(\vv) = \aenv'(\vv) \text{ for all } \vv \in \dom(\aenv)
  \text;
  \end{equation*}

  \item component-wise over closures:
  \begin{equation*}
  (\lam,\aenv) \wt (\lam,\aenv') \text{ iff } \aenv \wt \aenv'
  \text;
  \end{equation*}

  \item point-wise over stores:
  \begin{equation*}
  \astore \wt \astore' 
  \text{ iff }
  \astore(\aaddr) \wt \astore'(\aaddr) \text{ for all } \aaddr
\in \dom(\astore)
  \text;
  \end{equation*}

  \item component-wise over frames:
  \begin{equation*}
  (\vv,\expr,\aenv) \wt (\vv,\expr,\aenv') \text{ iff } \aenv \wt \aenv'
  \text;
  \end{equation*}

  \item element-wise over continuations:
  \begin{equation*}
  \vect{\aphrame_1,\ldots,\aphrame_n} 
  \wt 
  \vect{\aphrame'_1,\ldots,\aphrame'_n} 
  \text{ iff }
  \aphrame_i \wt \aphrame'_i
  \text{; and }
  \end{equation*}

  \item component-wise across configurations:
  \begin{equation*}
(\expr,\aenv,\astore,\acont) 
  \wt
  (\expr,\aenv',\astore',\acont')
  \text{ iff }
  \aenv \wt \aenv'
  \text{ and }
  \astore \wt \astore'
  \text{ and }
  \acont \wt \acont'
  \text.
  \end{equation*}
  \end{itemize}


  \subsection{Soundness}

  To prove soundness,  an abstraction map $\absmap$ connects the
  concrete and abstract configuration-spaces:
  \begin{align*}
\absmap(\expr,\env,\store,\cont) &= (\expr,\absmap(\env),\absmap(\store),\absmap(\cont))
  \\
    \absmap(\env) &= \lt{\vv}{\absmap(\env(\vv))}
    \\
      \absmap(\store) &= \lt{\aaddr}{\!\!\! \bigjoin_{\absmap(\addr) = \aaddr} \!\!\! \set{\absmap(\store(\addr))}}
      \\
        \absmap\vect{\phrame_1,\ldots,\phrame_n} &= \vect{\absmap(\phrame_1), \ldots, \absmap(\phrame_n)}
        \\
          \absmap(\vv,\expr,\env) &= (\vv,\expr,\absmap(\env))
          \\
            \absmap(\addr) &\text{ is determined by the allocation functions}
            \text.
            \end{align*}
            It is then easy to prove that the abstract transition relation
            simulates the concrete transition relation:
            \begin{theorem}
If $\absmap(\conf) \wt \aconf \text{ and } \conf \To \conf'$, then
there exists $\aconf' \in \sa{Conf}$ such that $\absmap(\conf')
\wt \aconf' \text{ and } \aconf \aTo \aconf'$.
  \end{theorem}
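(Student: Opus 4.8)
The plan is to prove the statement by a case analysis on which of the three concrete rules witnesses $\conf \To \conf'$; there is no induction here, since this is a one-step simulation. Before the case analysis I would isolate a few routine facts. (i) Because syntactic components carry the flat order, $\absmap(\conf)\wt\aconf$ forces $\aconf$ to have the same expression component as $\conf$, and $\aconf$'s environment, store, and continuation components respectively dominate $\absmap(\env)$, $\absmap(\store)$, $\absmap(\cont)$; in particular, whichever shape of $\conf$ selects a concrete rule also selects the matching abstract rule at $\aconf$. (ii) \emph{Soundness and monotonicity of atomic evaluation}: if $\absmap(\env)\wt\aenv$ and $\absmap(\store)\wt\astore$, then for every atom $\aexpr$ the closure $\absmap(\ArgEval(\aexpr,\env,\store))$ is below some element of $\aArgEval(\aexpr,\aenv,\astore)$ in the closure order --- for $\lambda$-terms by the closure-creation clause together with monotonicity of $\absmap$ on environments, and for variables because $\absmap(\store)(\absmap(\env)(\vv))$ already contains $\absmap(\store(\env(\vv)))$ by the definition of $\absmap$ on stores, and $\astore\st\absmap(\store)$. (iii) \emph{Allocation compatibility}: the abstraction is constructed so that $\absmap(\alloc(\vv,\conf))=\aalloc(\vv,\absmap(\conf))$, and since $\aalloc$ depends only on the data that $\aconf$ shares with $\absmap(\conf)$, this equals $\aalloc(\vv,\aconf)$; write $\aaddr$ for this common abstract address. (iv) \emph{Soundness of store extension}: if $\absmap(\store)\wt\astore$, $\absmap(\addr)=\aaddr$, and $\absmap(\den)$ lies in $\aden$, then $\absmap(\store[\addr\mapsto\den])\wt\astore\join[\aaddr\mapsto\aden]$; the point is that $\absmap$ of the extended store differs from $\absmap(\store)$ only at $\aaddr$, where it merely adds the contribution $\absmap(\den)$, which the join on the right absorbs, while every other concrete address mapping to $\aaddr$ was already accounted for in $\astore(\aaddr)$.

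With (i)--(iv) in hand, the three cases are short. For a \emph{tail call} $\conf=(\sembr{\appform{\fexpr}{\aexpr}},\env,\store,\cont)$: by (i), $\aconf=(\sembr{\appform{\fexpr}{\aexpr}},\aenv,\astore,\acont)$ with the three component inequalities. The concrete rule reads $(\sembr{\lamform{\vv}{\expr}},\env')$ off of $\ArgEval(\fexpr,\env,\store)$; by (ii) there is $(\sembr{\lamform{\vv}{\expr}},\aenv')\in\aArgEval(\fexpr,\aenv,\astore)$ with $\absmap(\env')\wt\aenv'$, so the abstract tail-call rule applies and yields $\aconf'=(\expr,\aenv'[\vv\mapsto\aaddr],\astore\join[\aaddr\mapsto\aArgEval(\aexpr,\aenv,\astore)],\acont)$. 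Then $\absmap(\conf')\wt\aconf'$ follows component-wise: the environment part from $\absmap(\env')\wt\aenv'$, (iii), and the pointwise environment order; the store part from (ii) applied to $\aexpr$ together with (iv); the continuation part from $\absmap(\cont)\wt\acont$. The \emph{non-tail call} case is the easiest: both rules are deterministic and only push a frame, $\absmap$ distributes over frames and continuations, and the claim is immediate from the three component inequalities. The \emph{return} case is the tail-call case with the closure/environment supplied by the popped frame rather than by evaluating $\fexpr$: by the element-wise order on continuations, the top frame of $\acont$ dominates $\absmap$ of the top frame of $\cont$, hence its environment dominates $\absmap(\env')$, and the remainder (using (iii) and (iv)) is identical.

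The step I expect to be the real obstacle is (iv), i.e., proving that the store update stays sound over a \emph{finite} address space. Because the abstract allocator is eventually forced to reuse addresses, many concrete closures can collapse onto a single abstract slot; one must check that replacing the concrete overwrite by a $\join$ in the abstract store is exactly what preserves $\absmap(\store')\wt\astore'$, and this is also the one place where the slack in the hypothesis --- $\aconf$ only \emph{above} $\absmap(\conf)$, not equal to it --- has to be threaded through, via monotonicity of $\aArgEval$ in its store argument. Everything else is bookkeeping: distributing $\absmap$ across the components of a configuration and appealing to (i)--(iii).
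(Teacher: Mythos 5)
Your proposal is correct and follows essentially the same route as the paper: the paper's proof is simply ``case analysis on the expression in the configuration'' (adapting the standard $k$-CFA simulation argument of Might's dissertation), and your case split on the three transition rules is exactly that case analysis, since the rules are dispatched by the syntactic form of the expression. Your auxiliary facts (ii)--(iv) --- soundness of atomic evaluation, allocation compatibility, and absorption of the store update by the abstract join --- are precisely the lemmas that ``straightforward adaptation'' elides, so you have supplied more detail than the paper, not a different argument.
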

  \begin{proof}
  The proof follows by case analysis on the
  expression in the configuration.
  It is a straightforward adaptation of similar proofs, such as that of
  \citet{mattmight:Might:2007:Dissertation} for $k$-CFA.
  \end{proof}

\section{From the Abstracted CESK Machine
  to a PDA}
\label{sec:pda}
In the previous section, we constructed an infinite-state abstract
interpretation of the CESK machine.
The infinite-state nature of the abstraction makes it difficult to see how
to answer static analysis questions.
Consider, for instance, a control flow-question:
\begin{center}
\emph{  At the call site $\appform{\fexpr}{\aexpr}$, may a closure over
  $\lam$ be called?}
\end{center}
If the abstracted CESK machine were a finite-state machine, an
algorithm could answer this question by enumerating all reachable
configurations and looking for an abstract configuration
$(\sembr{\appform{\fexpr}{\aexpr}},\aenv,\astore,\acont)$ in which
$(\lam,\_) \in \aArgEval(\fexpr,\aenv,\astore)$.
However, because the abstracted CESK machine may contain an infinite
number of reachable configurations, an algorithm cannot enumerate
them.

Fortunately, we can recast the abstracted CESK as a special kind of
infinite-state system: a pushdown automaton (PDA).
Pushdown automata occupy a sweet spot in the theory of computation:
they have an infinite configuration-space, yet many useful properties
(\eg, word membership, non-emptiness, control-state reachability)
remain decidable.
Once the abstracted CESK machine becomes a PDA, we can answer the
control-flow question by checking whether a specific regular language,
accounting for the states of interest, when intersected with the
language of the PDA, is nonempty.

The recasting as a PDA is a shift in perspective.
A configuration has an expression, an environment and a store. 
A stack character is a frame.
%
We choose to make the alphabet the set of control states, so that the
language accepted by the PDA will be sequences of control-states
visited by the abstracted CESK machine.
Thus, every transition will consume the control-state to which it
transitioned as an input character.
Figure~\ref{fig:acesk-to-pda} defines the program-to-PDA conversion
function $\afPDA : \syn{Exp} \to \mathbb{PDA}$.  (Note the implicit
use of the isomorphism $\QState \times \sa{Kont} \cong \sa{Conf}$.)

\begin{figure}
\figrule
  \minipagebreak{0.45}[-5mm]{
      \afPDA(\expr) &=
      (\QStates,\Alphabet,\StackAlpha,\transfunction,\qstate_0,\FStates,\vect{})
      \text{, where }
      \\
      \QStates &= \syn{Exp} \times \sa{Env} \times \sa{Store}
      \\
      \Alphabet &= \QStates
      \\
      \StackAlpha &= \sa{Frame}}
{0.50}
{(\qstate,\epsilon,\qstate',\qstate') \in \transfunction
  & \text{ iff }
  (\qstate, \acont)
  \aTo
  (\qstate', \acont)
  \text{ for all } \acont
  \\
  (\qstate,\aphrame_{-},\qstate',\qstate') \in \transfunction
  & \text{ iff }
  (\qstate, \aphrame : \acont)
  \aTo
  (\qstate',\acont)
  \text{ for all } \acont
  \\
  (\qstate,\aphrame'_{+},\qstate',\qstate') \in \transfunction
  & \text{ iff }
  (\qstate, \acont)
  \aTo
  (\qstate',\aphrame' : \acont)
  \text{ for all } \acont
  \\
  (\qstate_0,\vect{}) &= \aInject(\expr)
  \\
  \FStates &= \QStates
  \text.}
\captionsetup{justification=centering}
\caption{$\afPDA : \syn{Exp} \to \mathbb{PDA}$.
}
\label{fig:acesk-to-pda}
\figrule
\end{figure}

At this point, we can answer questions about whether a specified
control state is reachable by formulating a question about the
intersection of a regular language with a context-free language
described by the PDA.
That is, if we want to know whether the control state
$(\expr',\aenv,\astore)$ is reachable in a program $\expr$, we can reduce the problem to determining:
\begin{equation*}
  \Alphabet^*  \cdot \set{ (\expr',\aenv,\astore) } \cdot \Alphabet^* \mathrel{\cap} \Lang(\afPDA(\expr))
  \neq \emptyset
  \text,
\end{equation*}
where $L_1 \cdot L_2$ is the concatenation of formal languages $L_1$ and $L_2$.

\begin{theorem}
  Control-state reachability is decidable.
\end{theorem}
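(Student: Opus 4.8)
The plan is to observe that, for any fixed program $\expr$, the object $\afPDA(\expr)$ is a genuine pushdown automaton in the classical sense --- all of its components are finite --- and then to appeal to the textbook decidability of non-emptiness for the intersection of a context-free language with a regular language.

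First I would discharge the finiteness bookkeeping. For a fixed program $\expr$, the set $\syn{Exp}$ of its subexpressions is finite and $\syn{Var}$ is finite; $\sa{Addr}$ is finite by construction of the abstract configuration-space; hence $\sa{Env} = \syn{Var} \parto \sa{Addr}$ is finite, $\sa{Clo} = \syn{Lam} \times \sa{Env}$ is finite, $\sa{Store} = \sa{Addr} \to \Pow{\sa{Clo}}$ is finite, and $\sa{Frame} = \syn{Var} \times \syn{Exp} \times \sa{Env}$ is finite. Consequently $\QStates = \syn{Exp} \times \sa{Env} \times \sa{Store}$ is finite, $\Alphabet = \QStates$ is finite, and $\StackAlpha = \sa{Frame}$ is finite, so $\afPDA(\expr)$ meets the definition of a PDA given in the preliminaries.

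Next I would make the reduction to a language question precise. By the defining equivalences in Figure~\ref{fig:acesk-to-pda}, every transition of $\afPDA(\expr)$ emits, as its input symbol, exactly the control state it moves to, and the three transition clauses mirror the three rules of $\aTo$ one-for-one regardless of the stack contents; hence a run of the automaton from $\qstate_0$ spells out precisely the sequence of control states visited along a corresponding execution of the abstracted CESK machine, and conversely. Since $\FStates = \QStates$, every run is accepting, so $\Lang(\afPDA(\expr))$ is exactly the set of finite control-state sequences realizable by the abstract machine. Therefore a control state $(\expr',\aenv,\astore)$ is reachable if and only if $\Alphabet^* \cdot \set{(\expr',\aenv,\astore)} \cdot \Alphabet^* \cap \Lang(\afPDA(\expr)) \neq \emptyset$.

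Finally I would close with the classical facts: context-free languages are effectively closed under intersection with regular languages (a product construction on the PDA with the DFA recognizing $\Alphabet^* \cdot \set{(\expr',\aenv,\astore)} \cdot \Alphabet^*$), and emptiness of a context-free language is decidable. Applying these to the displayed intersection yields a decision procedure, establishing the theorem. I expect the only delicate point to be confirming the faithfulness of the $\afPDA$ encoding --- that $\Lang(\afPDA(\expr))$ records all and only the visited-control-state sequences, with the universally quantified ``for all $\acont$'' in the transition clauses carrying the weight --- since the invocation of classical PDA non-emptiness is then immediate. (A sharper, algorithmic version of the same result, a direct saturation procedure computing a finite-automaton representation of the reachable configurations, is what the later sections develop; for decidability alone the argument above suffices.)
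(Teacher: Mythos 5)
Your proposal is correct and follows essentially the same route as the paper: reduce reachability of a control state to non-emptiness of $\Alphabet^* \cdot \set{(\expr',\aenv,\astore)} \cdot \Alphabet^* \cap \Lang(\afPDA(\expr))$, then invoke closure of context-free languages under intersection with regular languages and decidability of CFL emptiness. The extra bookkeeping you supply (finiteness of $\QStates$, $\StackAlpha$, and $\Alphabet$, and faithfulness of the $\afPDA$ encoding) is left implicit in the paper's surrounding text but is the right thing to check.
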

\begin{proof}
  The intersection of a regular language and a context-free language
  is context-free (simple machine product of PDA with DFA).
  The emptiness of a context-free language is decidable.
  The decision procedure is easiest for CFGs: mark terminals, mark non-terminals that reduce to marked (non)terminals until we reach a fixed point. If the start symbol is marked, then the language is nonempty.
  The PDA to CFG translation is a standard construction.
\end{proof}

Now, consider how to use control-state reachability to answer the
control-flow question from earlier.
There are a finite number of possible control states in which the
\lamterm{} $\lam$ may flow to the function $\fexpr$ in call site
$\appform{\fexpr}{\aexpr}$; let's call this set of states $\hat
S$:
\begin{equation*}
  \hat S = 
  \setbuild{ (\sembr{\appform{\fexpr}{\aexpr}}, \aenv, \astore) }{
    (\lam,\aenv') \in \aArgEval(\fexpr,\aenv,\astore)
    \text{ for some } \aenv'
  }
  \text.
\end{equation*}
What we want to know is whether any state in the set $\hat S$ is
reachable in the PDA.
In effect what we are asking is whether there exists a control state $\qstate \in
\hat S$ such that:
\begin{align*}
  \Alphabet^* \cdot \set{\qstate} \cdot \Alphabet^*
  \mathrel{\cap}
  \Lang(\afPDA(\expr))
  \neq 
  \emptyset
  \text.
\end{align*}
If this is true, then $\lam$ may flow to $\fexpr$; if false, then it
does not.

\paragraph{Problem: Doubly exponential complexity}
\label{sec:doubly-exponential}
The non-emptiness-of-intersection approach establishes decidability of
pushdown control-flow analysis.
But, two exponential complexity barriers make this technique
impractical.

First, there are an exponential number of both environments
($\abs{\sa{Addr}}^{\abs{\syn{Var}}}$) and stores ($(2^{\abs{\sa{Clo}}})^{\abs{\sa{Addr}}} =
2^{\abs{\sa{Clo}} \times {\abs{\sa{Addr}}}}$) to consider for the set $\hat S$.
On top of that, computing the intersection of a regular language with
a context-free language will require enumeration of the
(exponential) control-state-space of the PDA.
The size of the control-state-space of the PDA is clearly doubly exponential:
\begin{align*}
 \abs{Q} 
 &= 
    \abs{\mathsf{Exp} \times \sa{Env} \times \sa{Store}}
 \\
 &= 
    \abs{\mathsf{Exp}} \times \abs{\sa{Env}} \times \abs{\sa{Store}}
 \\
 &=
    \abs{\mathsf{Exp}} 
    \times \abs{\sa{Addr}}^{\abs{\syn{Var}}}
    \times 2^{\abs{\sa{Clo}} \times {\abs{\sa{Addr}}}}
 \\   
 &=
    \abs{\mathsf{Exp}} 
    \times \abs{\sa{Addr}}^{\abs{\syn{Var}}}
    \times 2^{\abs{\mathsf{Lam} \times \sa{Env}} \times {\abs{\sa{Addr}}}}
 \\
 &=
    \abs{\mathsf{Exp}} 
    \times \abs{\sa{Addr}}^{\abs{\syn{Var}}}
    \times 2^{\abs{\mathsf{Lam}} \times \abs{\sa{Addr}}^{\abs{\syn{Var}}} \times {\abs{\sa{Addr}}}}
\end{align*}
%
As a result, this approach is doubly exponential.
For the next few sections, our goal will be to lower the complexity of
pushdown control-flow analysis.

\section{Focusing on Reachability}
\label{sec:pdreachability}

In the previous section, we saw that control-flow analysis reduces to
the reachability of certain control states within a pushdown system.
We also determined reachability by converting the
abstracted CESK machine into a PDA, and using emptiness-testing on a
language derived from that PDA.
Unfortunately, we also found that this approach is deeply exponential.

Since control-flow analysis reduced to the reachability of
control-states in the PDA, we skip the language problems and go
directly to reachability algorithms of
\citet{mattmight:Bouajjani:1997:PDA-Reachability,dvanhorn:Kodumal2004Set,mattmight:Reps:1998:CFL}
and \citet{mattmight:Reps:2005:Weighted-PDA} that determine the
reachable \emph{configurations} within a pushdown system.
These algorithms are even polynomial-time.
Unfortunately, some of them are polynomial-time in the number of
control states, and in the abstracted CESK machine, there are an
exponential number of control states.
We don't want to \emph{enumerate} the entire control state-space, or
else the search becomes exponential in even the best case.

To avoid this worst-case behavior, we present a straightforward
pushdown-reachability algorithm that considers only the
\emph{reachable} control states.
We cast our reachability algorithm as a fixed-point iteration, in
which we incrementally construct the reachable subset of a pushdown system.
A rooted pushdown system $M = (\QStates, \StackAlpha, \transfunction, \qstate_0)$ is \emph{compact}
if for any $(\qstate,\stackact,\qstate') \in \transfunction$, it is the case that:
\begin{equation*}
  (\qstate_0,\vect{}) \mathrel{\underset{M}{\PDTrans}^*} (\qstate,\vec{\stackchar})
  \text{ for some stack } \vec{\stackchar}
  \text,
\end{equation*}
and the domain of states and stack characters are exactly those that appear in $\transfunction$:
\begin{itemize}
\item[]{$\QStates = \bigcup\setbuild{\set{\qstate,\qstate'}}{(\qstate,\stackact,\qstate') \in \transfunction}$}
\item[]{$\StackAlpha = \setbuild{\stackchar}{(\qstate,\stackchar_+,\qstate') \in \transfunction \text{ or } (\qstate,\stackchar_-,\qstate') \in \transfunction}$}
\end{itemize}
In other words, a rooted pushdown system is compact when its states, transitions and stack characters appear on legal paths from the initial control state.
We will refer to the class of compact rooted pushdown systems as $\mathbb{CRPDS}$.

We can compact a rooted pushdown system with a map:
\begin{align*}
  \fCRPDS &: \mathbb{RPDS} \to \mathbb{CRPDS} \\
  \fCRPDS\overbrace{(\QStates, \StackAlpha, \transfunction, \qstate_0)}^M &= (\QStates', \StackAlpha', \transfunction', \qstate_0) \\
  \text{where } \QStates' &= \setbuild{\qstate}{(\qstate_0,\vect{}) \mathrel{\underset{M}{\PDTrans}^*} (\qstate,\vec{\stackchar})} \\
  \StackAlpha' &= \setbuild{\stackchar_i}{(\qstate_0,\vect{}) \mathrel{\underset{M}{\PDTrans}^*} (\qstate,\vec{\stackchar})} \\
  \transfunction' &= \setbuild{(\qstate,\stackact,\qstate')}{(\qstate_0,\vect{}) \mathrel{\overset{\vec{\stackact}}{\underset{M}{\PDTrans}}} (\qstate,[\vec{\stackact}]) \mathrel{\overset{\stackact}{\underset{M}{\PDTrans}}} (\qstate',[\vec{\stackact}\stackact]) }\text.
\end{align*}

In practice, the real difference between a rooted pushdown system and
its compact form is that our original system will be defined
intensionally (having come from the components of an abstracted CESK
machine), whereas the compact system will be defined extensionally,
with the contents of each component explicitly enumerated during its construction.

Our near-term goals are (1) to convert our abstracted CESK machine into
a rooted pushdown system and (2) to find an \emph{efficient} method
to compact it.

To convert the abstracted CESK machine into a rooted pushdown system,
we use the function $\afRPDS : \syn{Exp} \to \mathbb{RPDS}$:

\begin{center}
  \minipagebreak{0.45}{%
      \afRPDS(\expr) &=
      (\QStates,\StackAlpha,\transfunction,\qstate_0)
      \\
      \QStates &= \syn{Exp} \times \sa{Env} \times \sa{Store}
      \\
      \StackAlpha &= \sa{Frame} 
      \\
      (\qstate_0,\vect{}) &= \aInject(\expr)}
  {0.50}{%
     \qstate \pdedge^\epsilon \qstate' \in \transfunction & \text{ iff
     } (\qstate, \acont) \aTo (\qstate', \acont) \text{ for all }
     \acont
     \\
     \qstate \pdedge^{\aphrame_{-}} \qstate'
     \in \transfunction & \text{ iff } (\qstate, \aphrame : \acont)
     \aTo (\qstate',\acont) \text{ for all } \acont
     \\
     \qstate \pdedge^{\aphrame_{+}} \qstate' \in \transfunction &
     \text{ iff } (\qstate, \acont) \aTo (\qstate',\aphrame : \acont)
     \text{ for all } \acont
     \text.}
\end{center}

\section{Compacting a Rooted Pushdown System}
\label{sec:rpds-to-crpds}
We now turn our attention to compacting a rooted pushdown
system (defined intensionally) into its compact form (defined
extensionally).
That is, we want to find an implementation of the function $\fCRPDS$.
To do so, we first phrase the construction as the least fixed point of a monotonic function.
This will provide a method (albeit an inefficient one) for computing
the function $\fCRPDS$.
In the next section, we look at an optimized work-set driven
algorithm that avoids the inefficiencies of this section's algorithm.

The function $\mkCRPDS : \mathbb{RPDS} \to (\mathbb{CRPDS} \to
\mathbb{CRPDS})$ generates the monotonic iteration function we need:
\begin{center}
  \minipagebreak{0.40}[-1cm]{%
    \mkCRPDS(M) &= f\text{, where }
    \\
    M &= (\QStates,\StackAlpha,\transfunction,\qstate_0)}
    {0.55}{%
    f(\DSStates,\DSFrames,\DSEdges,\dsstate_0) &=
    (\DSStates',\DSFrames,\DSEdges',\dsstate_0) \text{, where }
    \\
    \DSStates' &= \DSStates \union \setbuild{ \dsstate' }{ \dsstate
      \in \DSStates \text{ and } \dsstate \mathrel{\underset{M}{\RPDTrans}}
      \dsstate' } \union \set{\dsstate_0}
    \\
    \DSEdges' &= \DSEdges \union \setbuild{ \dsstate \pdedge^\stackact
      \dsstate' }{ \dsstate \in \DSStates \text{ and } \dsstate
      \mathrel{\overset{\stackact}{\underset{M}{\RPDTrans}}} \dsstate' } \text.
}
\end{center}
Given a rooted pushdown system $M$, each application of the function
$\mkCRPDS(M)$ accretes new edges at the frontier of the system.
Once the algorithm reaches a fixed point, the system is complete:
\begin{theorem}\label{thm:mkCRPDS-correct}
  $\fCRPDS(M) = \lfp(\mkCRPDS(M))$.
\end{theorem}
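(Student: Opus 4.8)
The plan is to read both sides of the identity as the same least fixed point over a finite lattice and then prove the two inclusions. First I would fix the order: given $M = (\QStates,\StackAlpha,\transfunction,\qstate_0) \in \mathbb{RPDS}$, order candidate compact systems componentwise by inclusion of their control-state sets and of their edge sets, with the root held fixed. Since $\QStates$ is finite, $\transfunction$ is finite and the lattice is finite. (There is no need to track the stack-alphabet component separately: by the definition of $\mathbb{CRPDS}$ it is recoverable from the transition relation as the set of frames occurring in some push- or pop-edge, so once two compact systems have equal edge sets they have equal stack alphabets; I read $\mathbb{CRPDS}$-elements up to these derived components.) On this lattice the operator $f = \mkCRPDS(M)$ is monotone --- it only unions in further states and edges, and what it adds depends only on $\RPDTrans$, hence monotonically on the current state set --- and it is inflationary, $f(x) \st x$. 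Hence $\lfp(f)$ exists and is reached by finitely iterating $f$ from the least element (no edges, and therefore state set $\set{\qstate_0}$ after one application, since $f$ always re-adds $\qstate_0$).

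For soundness, $\lfp(f) \wt \fCRPDS(M)$, I would show by induction on the iteration index that every control state that appears lies in $\setbuild{\qstate}{(\qstate_0,\vect{}) \mathrel{\underset{M}{\PDTrans}^{*}} (\qstate,\vec{\stackchar}) \text{ for some } \vec{\stackchar}}$ and every edge that appears has the form demanded of $\transfunction'$ in the definition of $\fCRPDS$. The point that makes this almost immediate is that $\RPDTrans$ is self-certifying: if $\dsstate \RPDTransOU{\stackact}{M} \dsstate'$ then, unfolding the definitions, there is an action string $\vec{\stackact}$ with $(\qstate_0,\vect{}) \PDTransOU{\vec{\stackact}}{M} (\dsstate,[\vec{\stackact}])$ and $(\dsstate,[\vec{\stackact}]) \PDTransOU{\stackact}{M} (\dsstate',[\vec{\stackact}\,\stackact])$, the empty initial stack forcing the intermediate stacks to be exactly $[\vec{\stackact}]$ and $[\vec{\stackact}\,\stackact]$. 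Thus $\dsstate'$ is root-reachable and $(\dsstate,\stackact,\dsstate')$ is literally one of the triples collected by $\transfunction'$, so $f$ cannot manufacture a spurious state or edge and the induction carries through without really using its hypothesis.

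For completeness, $\fCRPDS(M) \wt \lfp(f)$, let $L = \lfp(f)$, so $f(L) = L$. I would first show that every root-reachable control state belongs to $L$, by induction on the length of a witnessing action string in $(\qstate_0,\vect{}) \PDTransOU{\vec{\stackact}}{M} (\dsstate,[\vec{\stackact}])$. When $\vec{\stackact}$ is empty, $\dsstate = \qstate_0 \in L$ because $f$ always re-adds the root. Otherwise write $\vec{\stackact} = \vec{\stackact}'\,\stackact$ and split the run as $(\qstate_0,\vect{}) \PDTransOU{\vec{\stackact}'}{M} (\dsstate'',[\vec{\stackact}']) \PDTransOU{\stackact}{M} (\dsstate,[\vec{\stackact}])$; by the induction hypothesis $\dsstate'' \in L$, and this same split witnesses $\dsstate'' \mathrel{\underset{M}{\RPDTrans}} \dsstate$ (the shared intermediate configuration being $(\dsstate'',[\vec{\stackact}'])$), so $\dsstate \in f(L) = L$. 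Given this, take any edge $(\qstate,\stackact,\qstate')$ collected by the definition of $\fCRPDS$: its source $\qstate$ is root-reachable, hence in $L$, and the defining condition is exactly $\qstate \RPDTransOU{\stackact}{M} \qstate'$, so $f$ adds $\qstate \pdedge^{\stackact} \qstate'$ to $L$ and it therefore lies in $L$. The two inclusions, together with the remark that stack alphabets are pinned down by the now-equal transition relations, give the equality.

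I do not anticipate a genuine obstacle. The only step requiring care is the seam between the configuration-level relation $\mathrel{\underset{M}{\PDTrans}^{*}}$ and the control-state-level relation $\RPDTrans$ --- keeping the intermediate configuration and its stack shared between the ``reachable from the root'' half and the ``one further step'' half, and noting that starting from the empty stack pins the witnessing stacks to $[\vec{\stackact}]$ and $[\vec{\stackact}\,\stackact]$, which is precisely what matches the extensional definition of $\transfunction'$. Everything else is routine finite-lattice fixed-point bookkeeping, and the residual fussiness over the stack-alphabet component is dispatched once and for all by the compactness invariant.
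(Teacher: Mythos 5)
Your proof is correct and takes essentially the same route as the paper's: the soundness inclusion is in both cases the observation that each application of $\mkCRPDS(M)$ preserves containment in $\fCRPDS(M)$ because $\RPDTrans$ already certifies root-reachability, and your completeness induction on the length of the witnessing action string is just the direct (contrapositive) form of the paper's ``first missing edge on a path from the root'' contradiction argument. Your explicit handling of the lattice structure and of the stack-alphabet component (which the paper's iteration never updates) is a small tightening of details the paper glosses over, not a different proof.
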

\begin{proof}
  Let $M = (\QStates,\StackAlpha,\transfunction,\qstate_0)$.
  Let $f = \mkCRPDS(M)$.
  Observe that $\lfp(f) = f^n(\emptyset,\StackAlpha,\emptyset,\qstate_0)$ for some $n$.
  When $N \subseteq \fCRPDS(M)$, then it easy to show that $f(N) \subseteq \fCRPDS(M)$.
  Hence, $\fCRPDS(M) \supseteq \lfp(\mkCRPDS(M))$.

  To show $\fCRPDS(M) \subseteq \lfp(\mkCRPDS(M))$, suppose this is not
  the case.
  Then, there must be at least one edge in $\fCRPDS(M)$ that is not in
  $\lfp(\mkCRPDS(M))$.
  Since these edges must be root reachable, let $(\dsstate,\stackact,\dsstate')$ be the first such edge in some path from the root.
  This means that the state $\dsstate$ is in $\lfp(\mkCRPDS(M))$.
  Let $m$ be the lowest natural number such that $\dsstate$ appears in $f^m(M)$.
  By the definition of $f$, this edge must appear in $f^{m+1}(M)$, which means it must also appear in 
  $\lfp(\mkCRPDS(M))$, which is a contradiction.
  Hence, $\fCRPDS(M) \subseteq \lfp(\mkCRPDS(M))$.
\end{proof}

\subsection{Complexity: Polynomial and exponential}
\label{sec:compl-pol-exp}
 
To determine the complexity of this algorithm, we ask two questions:
how many times would the algorithm invoke the iteration function in
the worst case, and how much does each invocation cost in the
worst case?
The size of the final system bounds the run-time of the algorithm.
Suppose the final system has $m$ states.
In the worst case, the iteration function adds only a single edge each
time.
Since there are at most $2\abs{\StackAlpha}m^2 + m^2$ edges in the final graph, the maximum
number of iterations is $2\abs{\StackAlpha}m^2 + m^2$.

The cost of computing each iteration is harder to bound.
The cost of determining whether to add a push edge is proportional to
the size of the stack alphabet, while the cost of determining whether
to add an $\epsilon$-edge is constant, so the cost of determining all
new push and $\epsilon$ edges to add is proportional to $\abs{\StackAlpha}m +
m$.
Determining whether or not to add a pop edge is expensive.
To add the pop edge
$\triedge{\dsstate}{\stackchar_-}{\dsstate'}$, we must prove that
there exists a configuration-path to the control state $\dsstate$, in which
the character $\stackchar$ is on the top of the stack.
This reduces to a CFL-reachability query~\cite{mattmight:Melski:2000:CFL}
at each node, the cost of which is $O(\abs{\StackAlpha_\pm}^3
m^3)$~\cite{dvanhorn:Kodumal2004Set}.

To summarize, in terms of the number of reachable control states, the
complexity of the most recent algorithm is:
\[
O((2\abs{\StackAlpha}m^2 + m^2) 
  \times
  (\abs{\StackAlpha}m + m
  + \abs{\StackAlpha_\pm}^3m^3)) 
  =  O(\abs{\StackAlpha}^4m^5)\text.
\]
While this approach is polynomial in the number of reachable
control states, it is far from efficient.
In the next section, we provide an optimized version of this
fixed-point algorithm that maintains a work-set and an
\ecg{} to avoid spurious recomputation.

Moreover, we have carefully phrased the complexity in terms of
``reachable'' control states because, in practice, compact rooted
pushdown systems will be extremely sparse, and because the maximum
number of control states is exponential in the size of the input
program.
After the subsequent refinement, we will be able to develop a hierarchy of
pushdown control-flow analyses that employs widening to achieve a
polynomial-time algorithm at its foundation.

\section{An Efficient Algorithm: Work-sets and 
    $\boldsymbol\epsilon$-Closure Graphs}
\label{sec:ecg-worklist}
We have developed a fixed-point formulation of the rooted pushdown
system compaction algorithm, but found that, in each iteration, it
wasted effort by passing over all discovered states and edges, even
though most will not contribute new states or edges.
Taking a cue from graph search, we can adapt the fixed-point algorithm
with a work-set.
That is, our next algorithm will keep a work-set of new states and
edges to consider, instead of reconsidering all of them.
We will refer to the compact rooted pushdown system we are constructing as a graph, since that is how we represent it ($\QStates$ is the set of nodes, and $\transfunction$ is a set of labeled edges).

In each iteration, it will pull new states and edges from the work
list, insert them into the graph and then populate the
work-set with new states and edges that have to be added as a
consequence of the recent additions.

\subsection{$\boldsymbol\epsilon$-closure graphs}
Figuring out what edges to add as a consequence of another edge
requires care, for adding an edge can have ramifications on distant
control states.
Consider, for example, adding the $\epsilon$-edge $\qstate
\pdedge^{\epsilon} \qstate'$ into the following graph:
\begin{equation*}
  \xymatrix{ 
    \qstate_0 \ar[r]^{\stackchar_+} & \qstate & \qstate' \ar[r]^{\stackchar_-} & \qstate_1
  } 
\end{equation*}
As soon this edge drops in, an $\epsilon$-edge ``implicitly'' appears
 between $\qstate_0$ and $\qstate_1$ because the net
stack change between them is empty; the resulting graph looks like:
\begin{equation*}
  \xymatrix{ 
    \qstate_0 \ar[r]^{\stackchar_+} \ar @{..>} @(u,u) [rrr]^\epsilon &
    \qstate \ar[r]^\epsilon &
    \qstate' \ar[r]^{\stackchar_-} &
    \qstate_1
  }  
\end{equation*}
where we have illustrated the implicit $\epsilon$-edge as a dotted line.

To keep track of these implicit edges, we will construct a second
graph in conjunction with the graph: an $\epsilon$-closure
graph.
In the $\epsilon$-closure graph, every edge indicates the existence of
a no-net-stack-change path between control states.
The $\epsilon$-closure graph simplifies the task of figuring out
which states and edges are impacted by the addition of a new edge.

Formally, an \textbf{$\boldsymbol \epsilon$-closure graph}, $H \subseteq N \times N$, is
a set of edges.
Of course, all \ecg s are reflexive: every node has a self loop.
We use the symbol $\mathbb{ECG}$ to denote the class of all \ecg s.

We have two notations for finding ancestors and descendants of a state
in an \ecg{}:
\begin{align*}
  \eancestor{\dsstate} &= \setbuild{ \dsstate' }{ (\dsstate',\dsstate) \in H } \cup \set{\dsstate}
  && \text{[ancestors]}
  \\
  \edescendent{\dsstate} &= \setbuild{ \dsstate' }{ (\dsstate,\dsstate') \in H } \cup \set{\dsstate}
  && \text{[descendants]}
  \text.
\end{align*}

\subsection{Integrating a work-set}
Since we only want to consider new states and edges in each iteration,
we need a work-set, or in this case, three work-sets:
\begin{itemize}
\item{$\Delta S$ contains states to add,}
\item{$\Delta E$ contains edges to add,}
\item{$\Delta H$ contains new $\epsilon$-edges.}
\end{itemize}
Let $\mathbb{WS} ::= (\Delta S, \Delta E, \Delta H)$ be the space of work-sets.

\subsection{A new fixed-point iteration-space}
Instead of consuming a graph and producing a graph, the new fixed-point iteration function will consume and produce a graph,
an \ecg{}, and the work-sets.
Hence, the iteration space of the new algorithm is:
\begin{equation*}
  \s{ICRPDS} = (\wp(\ControlStates) \times \wp(\ControlStates \times \StackAlpha_\pm \times \ControlStates)) \times \mathbb{ECG}
  \times \mathbb{WS}
  \text.
\end{equation*}
The \emph{I} in $\s{ICRPDS}$ stands for \emph{intermediate}.

\subsection{The $\boldsymbol\epsilon$-closure 
  graph work-list algorithm}
The function $\mkCRPDS' : \mathbb{RPDS} \to (\s{ICRPDS} \to
\s{ICRPDS})$ generates the required iteration function (Figure~\ref{fig:mkcompact-ecg}).
\begin{figure}
\figrule
\begin{align*}
  \mkCRPDS'(M)
  &= 
  f\text{, where }
  \\
  M &= (\QStates,\StackAlpha,\transfunction,\qstate_0)
  \\
  f(G,H,(\Delta S, \Delta E,\Delta H)) &=
   (G',H',(\Delta S' - S',\Delta E' - E',\Delta H' -  H))\text{, where }
  \\
  (\DSStates,\DSFrames,\DSEdges,\dsstate_0)  &= G 
  \\
  (\Delta E_0, \Delta H_0) &= \!\! \Union_{\dsstate \in \Delta S} \!\!  \fsprout_M(\dsstate)
  \\
  (\Delta E_1, \Delta H_1) &= 
  \!\!\!\!\!\!\!\!\!\!
  \Union_{(\dsstate,\stackchar_+,\dsstate') \in \Delta E} 
  \!\!\!\!\!\!\!\!\!\!
  \faddpush_M(G,H) (\dsstate,\stackchar_+,\dsstate') 
  \\
  (\Delta E_2, \Delta H_2) &=
  \!\!\!\!\!\!\!\!\!\!
  \Union_{(\dsstate,\stackchar_-,\dsstate') \in \Delta E} 
  \!\!\!\!\!\!\!\!\!\!
  \faddpop_M(G,H) (\dsstate,\stackchar_-,\dsstate')  
  \\
  (\Delta E_3, \Delta H_3) &=
  \!\!\!\!\!\!\!\!
  \Union_{(\dsstate,\epsilon,\dsstate') \in \Delta E} 
  \!\!\!\!\!\!\!\!
  \faddempty_M(G,H) (\dsstate,\dsstate')   
  \\
  (\Delta E_4, \Delta H_4) &=
  \!\!\!\!\!\!
  \Union_{(\dsstate,\dsstate') \in \Delta H} 
  \!\!\!\!\!\!\!
  \faddempty_M(G,H) (\dsstate,\dsstate')   
  \\
  S' &= \DSStates \union \Delta S
  \\
  E' &= \DSEdges \union \Delta E
  \\
  H' &= H \union \Delta H
  \\
  \Delta E' &= \Delta E_0 \union \Delta E_1 \union \Delta E_2 \union \Delta E_3 \union \Delta E_4
  \\
  \Delta S' &= \setbuild{ \dsstate' }{ (\dsstate,\stackact,\dsstate') \in \Delta E' } \cup \set{\dsstate_0}
  \\
  \Delta H' &= \Delta H_0 \union \Delta H_1 \union \Delta H_2 \union \Delta H_3 \union \Delta H_4
  \\
  \Delta \DSFrames &= \setbuild{\stackchar}{(\dsstate,\stackchar_+,\dsstate') \in \Delta E'}
  \\
  G' &= (\DSStates \union \Delta S,\DSFrames \union \Delta \DSFrames, E',\qstate_0)
  \text.
\end{align*}
\caption{The fixed point of the function $\mkCRPDS'(M)$ contains
  the compact form of the rooted pushdown system $M$.}
\label{fig:mkcompact-ecg}
\figrule
\end{figure}
Please note that we implicitly distribute union across tuples:
\begin{equation*}
  (X,Y) \union (X',Y') = 
  (X \union X, Y \union Y')
  \text.
\end{equation*}
The functions $\fsprout$, $\faddpush$, $\faddpop$, $\faddempty$ (defined shortly)
calculate the additional the graph edges and \ecg{} edges
(potentially) introduced by a new state or edge.

\paragraph{Sprouting}
Whenever a new state gets added to the graph, 
the algorithm must check whether that state has any new edges to contribute.
Both push edges and $\epsilon$-edges do not depend on the current
stack, so any such edges for a state in the pushdown system's
transition function belong in the graph. 
The sprout function:
\begin{equation*}
  \fsprout_{(\QStates,\StackAlpha,\transfunction,\dsstate_0)} :
\QStates \to (\Pow{\transfunction} \times \Pow{\QStates \times \QStates})
\text,
\end{equation*}
checks whether a new state could produce any new push edges or no-change edges.
We can represent its behavior diagrammatically (as previously, the
dotted arrows correspond to the corresponding additions to the
work-graph and $\epsilon$-closure work graph):
\begin{equation*}
  \xymatrix{
    &
    *+[F-:<1pt>]{\dsstate} \ar @{..>} [dl]_{\epsilon}^\transfunction \ar @{..>} [dr]^{\stackchar_+}_\transfunction  & 
    \\
     \qstate' && \qstate''
   }
\end{equation*}
which means if adding control state $\dsstate$:
\begin{itemize}
\item[] add edge $\triedge{\dsstate}{\epsilon}{\qstate'}$ if it exists
  in $\delta$ (hence the arrow subscript $\transfunction$), and
\item[] add edge $\triedge{\dsstate}{\stackchar_+}{\qstate''}$ if it exists in  $\delta$.
\end{itemize}
Formally:
\begin{align*}
  \fsprout_{(\QStates,\StackAlpha,\transfunction,\dsstate_0)} (\dsstate) &= (\Delta E, \Delta H)\text{, where }
\end{align*}
\begin{align*}
  \Delta E &= \setbuild{ \triedge{\dsstate}{\epsilon}{\qstate} }{ \triedge{\dsstate}{\epsilon}{\qstate} \in \transfunction }
  \mathrel{\union}
   \setbuild{ \triedge{\dsstate}{\stackchar_+}{\qstate} }{ \triedge{\dsstate}{\stackchar_+}{\qstate} \in \transfunction }
  \\
  \Delta H &= \setbuild{ \biedge{\dsstate}{\qstate} }{ \triedge{\dsstate}{\epsilon}{\qstate} \in \transfunction } 
  \text.
\end{align*}

\paragraph{Considering the consequences of a new push edge}
Once our algorithm adds a new push edge to a graph, there
is a chance that it will enable new pop edges for the same stack frame
somewhere downstream.
If and when it does enable pops, it will also add new edges to the
\ecg{}.
The $\faddpush$ function:
\begin{equation*}
  \faddpush_{(\QStates,\StackAlpha,\transfunction,\dsstate_0)} :
\mathbb{RPDS} \times \mathbb{ECG} \to \transfunction \to (\Pow{\transfunction} \times \Pow{\QStates \times \QStates})
\text,
\end{equation*}
checks for $\epsilon$-reachable states that could produce a pop.
We can represent this action by the following diagram (the arrow
subscript $\epsilon$ indicates edges in the $\epsilon$-closure graph):
\begin{equation*}
  \xymatrix{
    *+[F-:<1pt>]{\dsstate} \ar [r]^{\stackchar_+} \ar @(d,d) @{..>} [rrr]^\epsilon_\epsilon
    & 
    *+[F-:<1pt>]{\qstate} \ar [r]^{\epsilon}_{\epsilon}
    & 
    {\qstate'} \ar @{..>} [r]^{\stackchar_-}_\transfunction
    & \qstate''
    \\
    \\
 }
\end{equation*}
which means if adding push-edge $\triedge{\dsstate}{\stackchar_+}{\qstate}$:
\begin{itemize}
  \item[] if pop-edge $\triedge{\qstate'}{\stackchar_-}{\qstate''}$ is in $\delta$, then
  \item[] \hspace{1.5em} add edge $\triedge{\qstate'}{\stackchar_-}{\qstate''}$, and
  \item[] \hspace{1.5em} add $\epsilon$-edge $\biedge{\dsstate}{\qstate''}$.
\end{itemize}
Formally:
\begin{align*}
  &\faddpush_{(\QStates,\StackAlpha,\transfunction,\dsstate_0)} (G,H) (\triedge{\dsstate}{\stackchar_+}{\qstate}) =
  (\Delta E, \Delta H)\text{, where }
  \\
  &\;\;\;\Delta E = \setbuild{ \triedge{\qstate'}{\stackchar_-}{\qstate''} }{
    \qstate' \in \edescendent{\qstate}
    \text{ and }
    \triedge{\qstate'}{\stackchar_-}{\qstate''} \in \transfunction }
  \\
  &\;\;\; \Delta H = \setbuild{ \biedge{\dsstate}{\qstate''} }{ 
    \qstate' \in \edescendent{\qstate}
    \text{ and }
    \triedge{\qstate'}{\stackchar_-}{\qstate''} \in \transfunction }
  \text.
\end{align*}

\paragraph{Considering the consequences of a new pop edge}
%
%
Once the algorithm adds a new pop edge to a graph, it will create 
at least one new \ecg{} edge and possibly more by matching up with
upstream pushes.
The $\faddpop$ function:
\begin{equation*}
  \faddpop_{(\QStates,\StackAlpha,\transfunction,\dsstate_0)} :
\mathbb{RPDS} \times \mathbb{ECG} \to \transfunction \to (\Pow{\transfunction} \times \Pow{\QStates \times \QStates})
\text,
\end{equation*}
checks for $\epsilon$-reachable push-edges that could match this pop-edge.
This action is illustrated by the following diagram:
\begin{equation*}
  \xymatrix{
    {\dsstate} \ar [r]^{\stackchar_+} \ar @(d,d) @{..>} [rrr]^\epsilon_\epsilon
    & 
    {\dsstate'} \ar [r]^{\epsilon}_{\epsilon}
    & 
    *+[F-:<1pt>]{\dsstate''} \ar [r]^{\stackchar_-}_\transfunction
    & 
    *+[F-:<1pt>]{\qstate}
    \\
    \\
 }
\end{equation*}
which means if adding pop-edge $\triedge{\dsstate''}{\stackchar_-}{\qstate}$:
\begin{itemize}
 \item[] if push-edge $\triedge{\dsstate}{\stackchar_+}{\dsstate'}$ is already in the graph, then
 \item[] \hspace{1.5em} add $\epsilon$-edge $\biedge{\dsstate}{\qstate}$.
\end{itemize}
Formally:
\begin{align*}
  &\faddpop_{(\QStates,\StackAlpha,\transfunction,\dsstate_0)} (G,H) (\triedge{\dsstate''}{\stackchar_-}{\qstate}) =
  (\Delta E, \Delta H)\text{, where }
\\
&  \Delta E = \emptyset\mbox{ and }
  \Delta H = \setbuild{ \biedge{\dsstate}{\qstate} }{ 
    \dsstate' \in \eancestor{\dsstate''}
    \text{ and }
    \triedge{\dsstate}{\stackchar_+}{\dsstate'} \in G }
  \text.
\end{align*}
 
\paragraph{Considering the consequences of a new $\boldsymbol\epsilon$-edge}
%
%
Once the algorithm adds a new \ecg{} edge, it may transitively
have to add more \ecg{} edges, and it may connect an old push to
(perhaps newly enabled) pop edges.
The $\faddempty$ function:
\[
\begin{array}{l}
  \faddempty_{(\QStates,\StackAlpha,\transfunction,\dsstate_0)} :
\mathbb{RPDS} \times \mathbb{ECG} \to (\QState \times \QState) \to (\Pow{\transfunction} \times \Pow{\QStates \times \QStates})
\text,
\end{array}
\]
checks for newly enabled pops and \ecg{} edges:
Once again, we can represent this action diagrammatically:
\begin{equation*}
  \xymatrix{
    \\
   {\dsstate} \ar [r]^{\stackchar_+} 
   \ar @{..>} `d[ddr] `r[rrrrr]^\epsilon_\epsilon [rrrrr]
   &
   {\dsstate'} \ar [r]^\epsilon_\epsilon \ar @(u,u) @{..>} [rr]^\epsilon_\epsilon
   \ar @(d,d) @{..>} [rrr]^\epsilon_\epsilon
   & 
   *+[F-:<1pt>]{\dsstate''} \ar [r]^{\epsilon} \ar @(u,u) @{..>} [rr]^\epsilon_\epsilon
   &
   *+[F-:<1pt>]{\dsstate'''} \ar [r]^\epsilon_\epsilon
   & 
   {\dsstate''''} \ar @{..>} [r]^{\stackchar_-}_\transfunction
   & 
   {\qstate}
   \\
   & & & & \\
   & & & & \\
}
\end{equation*}
which means if adding $\epsilon$-edge $\biedge{\dsstate''}{\dsstate'''}$:
\begin{itemize}
  \item[] if pop-edge $\triedge{\dsstate''''}{\stackchar_-}{\qstate}$ is in $\transition$, then
  \item[] \hspace{2em} add $\epsilon$-edge $\biedge{\dsstate}{\qstate}$; and
  \item[] \hspace{2em} add edge $\triedge{\dsstate''''}{\stackchar_-}{\qstate}$; 
  \item[] add $\epsilon$-edges $\biedge{\dsstate'}{\dsstate'''}$,
    $\biedge{\dsstate''}{\dsstate''''}$, and
    $\biedge{\dsstate'}{\dsstate''''}$.
\end{itemize}
Formally:
\begin{align*}
  &\faddempty_{(\QStates,\StackAlpha,\transfunction,\dsstate_0)} (G,H) (\biedge{\dsstate''}{\dsstate'''})
  = (\Delta E, \Delta H)\text{, where }
  \\
  &\;\;\;\;\Delta E = \big\{ \triedge{\dsstate''''}{\stackchar_-}{\qstate} :
    \dsstate' \in \eancestor{\dsstate''}
    \text{ and }
    \dsstate'''' \in \edescendent{\dsstate'''}
    \text{ and } \\
    & \hspace{9.5em} \triedge{\dsstate}{\stackchar_+}{\dsstate'} \in G
    \text{ and }
    \triedge{\dsstate''''}{\stackchar_-}{\qstate} \in \transfunction \big\}
  \\
  &\;\;\;\;\Delta H = \big\{ \biedge{\dsstate}{\qstate} :
    \dsstate' \in \eancestor{\dsstate''}
    \text{ and }
    \dsstate'''' \in \edescendent{\dsstate'''}
    \text{ and } \\
    & \hspace{8.5em} \triedge{\dsstate}{\stackchar_+}{\dsstate'} \in G
    \text{ and }
    \triedge{\dsstate''''}{\stackchar_-}{\qstate} \in \transfunction \big\}
  \\
  & \hspace{3em} \union \setbuild{ \biedge{\dsstate'}{\dsstate'''}  }{
    \dsstate' \in \eancestor{\dsstate''}
  }
  \\
  & \hspace{3em} \union \setbuild{ \biedge{\dsstate''}{\dsstate''''}  }{
    \dsstate'''' \in \edescendent{\dsstate'''}
   }
  \\
  & \hspace{3em} \union \setbuild{ \biedge{\dsstate'}{\dsstate''''}  }{
    \dsstate' \in \eancestor{\dsstate''}
    \text{ and }
    \dsstate'''' \in \edescendent{\dsstate'''}
  }
  \text.
\end{align*}

\subsection{Termination and correctness}
To prove that a fixed point exists, we show the iteration function is monotonic.
The key observation is that $\Delta G$ and $\Delta H$ drive all additions to, and are disjoint from, $G$ and $H$.
Since $G$ and $H$ monotonically increase in a finite space, $\Delta G$ and $\Delta H$ run out of room (full details in \autoref{lem:termination}).
Once the graph reaches a fixed point, all work-sets will be empty, and the \ecg{} will also be saturated.
We can also show that this algorithm is correct by defining first 
$\fECG : \mathbb{RPDS} \to \mathbb{ECG}$ as
\begin{equation*}
  \fECG(M) = \setbuild{\biedge{\dsstate}{\dsstate'}}{\dsstate \RPDTransOU{\vec{\stackact}}{M} \dsstate' \text{ and } [\vec{\stackact}] = \epsilon}
\end{equation*}

and stating the following theorem:

\begin{theorem}\label{thm:eps-closure-correct}
  For all $M \in \mathbb{RPDS}$, $\fCRPDS(M) = G$ and $\fECG(M) = H$,
  where $(G,H,(\emptyset,\emptyset,\emptyset)) = \lfp(\mkCRPDS'(M))$.
\end{theorem}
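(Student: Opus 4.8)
The plan is to prove the two equalities by separate inclusions --- \emph{soundness} ($G \subseteq \fCRPDS(M)$ and $H \subseteq \fECG(M)$) and \emph{completeness} (the reverse) --- after invoking \autoref{lem:termination} to know that $\lfp(\mkCRPDS'(M))$ exists and, being a fixed point, has empty work-sets. Emptiness of the work-sets is exactly a \emph{saturation} condition: at the limit, $\fsprout_M$ applied to every node of $G$, and $\faddpush_M$, $\faddpop_M$, $\faddempty_M$ applied (with the limit $G$ and $H$) to every edge of $G$ and of $H$, contribute nothing new. This saturation is the workhorse of completeness, and the definitions of the four helper functions are the workhorse of soundness.

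For soundness I would establish, as a loop invariant maintained by every iteration, that $G \subseteq \fCRPDS(M)$ and $H \subseteq \fECG(M)$; this is an induction on iteration count with a case split on which helper produced a given new edge. $\fsprout_M$ emits only edges already in $\transfunction$ out of an already-present (hence, with a little care, root-reachable) state. $\faddpush_M$ emits a pop edge $\triedge{\qstate'}{\stackchar_-}{\qstate''} \in \transfunction$ and an $\epsilon$-edge $\biedge{\dsstate}{\qstate''}$ only when a push $\triedge{\dsstate}{\stackchar_+}{\qstate}$ is already present and $\qstate' \in \edescendent{\qstate}$; by the induction hypothesis the $H$-edge from $\qstate$ to $\qstate'$ witnesses a balanced path, so concatenating push, balanced interior and pop yields a genuine root-reachable transition and a genuine balanced path. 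The cases for $\faddpop_M$ and $\faddempty_M$ are analogous compositions. Passing to the fixed point gives the soundness inclusions. (One could instead deduce $G \subseteq \fCRPDS(M)$ from \autoref{thm:mkCRPDS-correct}, but the $H$-invariant below is needed regardless.)

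Completeness is the substantial part, and I would prove it by a single induction carrying a joint hypothesis indexed by a path-length bound $n$: (a) every state reachable from $(\qstate_0,\vect{})$ by a configuration-path of length $\le n$ is a node of $G$ and has been sprouted; (b) every transition $(\dsstate,\stackact,\dsstate') \in \transfunction$ witnessed by such a path is an edge of $G$; and (c) every balanced path $\dsstate \RPDTransOU{\vec{\stackact}}{M} \dsstate'$ with $[\vec{\stackact}] = \epsilon$ and length $\le n$, between states already known root-reachable, is an edge of $H$. The engine for (c) is the ``first-return'' decomposition of a balanced path: it is either an $\epsilon$-step followed by a strictly shorter balanced path, or it opens with a push $\triedge{\dsstate}{\stackchar_+}{\dsstate_1}$ whose matching pop $\triedge{\dsstate_2}{\stackchar_-}{\dsstate_3}$ happens at the first return to the initial level, enclosing a strictly shorter balanced interior $\dsstate_1 \to \dsstate_2$ that lies in $H$ by the hypothesis; the push and pop are in $G$ by (b) on their prefixes, so saturation of $\faddpush_M$ (or $\faddempty_M$) forces $\biedge{\dsstate}{\dsstate_3} \in H$. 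Clause (b) for a pop edge is symmetric: the witnessing path reaches $\dsstate$ with $\stackchar$ on top, hence contains a matching push followed by a balanced segment to $\dsstate$, the former in $G$ and the latter giving an $H$-edge by (c), so saturation of the pop-producing helpers adds the edge; push and $\epsilon$ edges out of $\dsstate$ follow from (a) and saturation of $\fsprout_M$. Clause (a) closes the loop: a state at distance $n$ is the target of one of those edges, which (b) places in $G$, triggering $\fsprout_M$.

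The main obstacle is choosing the induction metric so that every appeal is to a \emph{strictly smaller} witness --- concretely, arguing that ``the shortest configuration-path reaching $\dsstate$ with $\stackchar$ on top'' splits as (earlier) push $+$ (strictly shorter) balanced interior that never dips below the level of that $\stackchar$, and that the balanced interior invoked to justify a pop edge is strictly shorter than the path invoked to justify that pop edge. A secondary nuisance is keeping all three clauses relative to root-reachability (matching the definitions of $\fCRPDS$ and $\fECG$) and checking that the composed witnesses remain root-reachable, which holds because the prefix realizing the enclosing push already starts at $(\qstate_0,\vect{})$. Termination and the emptiness of the work-sets at the fixed point are the content of \autoref{lem:termination}; granting those, the saturation facts used throughout are immediate from $\lfp(\mkCRPDS'(M))$ being a fixed point of $\mkCRPDS'(M)$.
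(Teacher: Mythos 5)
Your architecture coincides with the paper's: soundness is the invariant that every accumulated graph edge witnesses a genuine root-reachable transition and every $H$-edge a balanced path (the paper's Lemma~\ref{lem:mkcrpdsp-inv}), and completeness is an induction over root-paths carrying the clause that every non-empty balanced subtrace already has an $H$-edge, with the pop case handled by locating the matching push and its balanced interior. The one place where your proof does not go through as written is the final claim that ``the saturation facts used throughout are immediate from $\lfp(\mkCRPDS'(M))$ being a fixed point.'' They are not. The iteration function applies $\fsprout$, $\faddpush$, $\faddpop$ and $\faddempty$ only to the elements of the work-sets $\Delta S$, $\Delta E$, $\Delta H$ --- never to all of $G$ and $H$ --- so emptiness of the work-sets at the fixed point only says that the helpers applied to nothing produce nothing. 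The property you actually use, that the helpers applied with the \emph{limit} $G$ and $H$ to \emph{every} node and edge of the limit contribute nothing new, is a theorem about the algorithm, and it is exactly where a work-set algorithm can silently fail: a helper may be applied to an edge before its partners (the matching push, the balanced interior, the pop rule in $\transfunction$) have all arrived, and unless some other helper re-fires when the last partner shows up, the combination is lost.

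Closing that gap is precisely the $m$-versus-$m'$ bookkeeping in the paper's proof of the $\supseteq$ direction: for each combination of a push edge, an interior balanced path and a pop rule that jointly enable a new pop edge or $\epsilon$-closure edge, one case-splits on which ingredient enters its work-set last, and checks that the helper attached to \emph{that} ingredient ($\faddpush$ when the push arrives last, $\faddempty$ when the final interior $\epsilon$-edge arrives last, and so on) queries the already-accumulated $G$ and $H$ for the others and therefore emits the combination at iteration $\max\set{m,m'}+1$. This arrival-order analysis is also the reason the algorithm carries several overlapping helpers in the first place; a proof that never performs it would equally ``establish'' completeness of a broken variant that dropped one of them. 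Your induction on path length and first-return decomposition is otherwise sound and essentially the paper's induction on the trace; you need only replace the appeal to immediate saturation with this argument, either inlined into the induction (as the paper does, by tracking the stage at which each state, edge and $H$-edge first enters its work-set) or as a separate saturation lemma proved the same way.
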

In the proof of Theorem~\ref{thm:eps-closure-correct}, the $\subseteq$
case comes from an invariant lemma we have on $\mkCRPDS'$:

\begin{lemma}
\begin{align*}
  \invcrpdsp((S, E), H, (\Delta S, \Delta E, \Delta H)) &=
    \forall \triedge{\dsstate}{\stackact}{\dsstate'} \in E \cup \Delta E. \dsstate \RPDTransOU{\stackact}{M} \dsstate' \\
   &\wedge \forall \biedge{\dsstate}{\dsstate'} \in H \cup \Delta H.
             \exists \vec{\stackact}. [\vec{\stackact}] = \epsilon
                     \wedge \forall \acont. (\dsstate, \acont) \overset{\mkern-12mu\vec{\stackact}}{\underset{\mkern-12mu{}M}{\longmapsto^*}} (\dsstate, \acont)
\end{align*}  
\end{lemma}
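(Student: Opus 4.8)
The plan is to show that the predicate $\invcrpdsp$ is an invariant of the iteration function $f = \mkCRPDS'(M)$ and holds at the starting point, so that it holds at $\lfp(\mkCRPDS'(M)) = f^n(\bot)$ for the $n$ witnessing the fixed point. The base case is the bottom element, in which $\DSEdges$, $\Delta E$, $H$, and $\Delta H$ are all empty, so both conjuncts of $\invcrpdsp$ quantify over the empty set and hold vacuously. For the inductive (preservation) step I will strengthen the hypothesis slightly, additionally carrying that every state in $\DSStates \union \Delta S$ is root-reachable in $M$. This extra clause is needed to fire $\fsprout$ soundly, and it is itself preserved: the output work-set $\Delta S'$ consists of $\dsstate_0$ (trivially root-reachable) together with targets of edges in $\Delta E'$, and every edge produced into $\Delta E'$ will be shown to be a genuine root-reachable transition, whose target is therefore root-reachable by unfolding the definition of $\RPDTrans$.

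For the preservation step proper, assume the strengthened invariant for an input $((\DSStates,\DSEdges), H, (\Delta S, \Delta E, \Delta H))$ and establish it for the output, whose graph-edge set is $\DSEdges \union \Delta E \union \Delta E'$ and whose $\epsilon$-closure set is $H \union \Delta H \union \Delta H'$ (the new work-sets being subsets of these). The parts already in $\DSEdges \union \Delta E$ and $H \union \Delta H$ are discharged directly by the induction hypothesis, so the real work is the freshly generated contributions in $\Delta E'$ and $\Delta H'$, which I handle by cases on whether they come from $\fsprout$, $\faddpush$, $\faddpop$, or $\faddempty$. Every case has the same shape: obtain a root-reachable witness path to the relevant source state (either because the triggering edge of the input is already a root-reachable transition, by the induction hypothesis, or because the source lies in $\DSStates \union \Delta S$ and is root-reachable by the strengthened hypothesis), then splice together a push action read off $\transfunction$, one or more net-$\epsilon$ paths supplied by the $\epsilon$-closure witnesses that the induction hypothesis gives for the edges of $H \union \Delta H$ used via $\eancestor{\cdot}$ and $\edescendent{\cdot}$, and a matching pop action read off $\transfunction$. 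Crucially, each $\epsilon$-closure witness is valid \emph{over every continuation} $\acont$ by the form of $\invcrpdsp$, so it still runs correctly on top of the just-pushed frame; and since the assembled action string is a push followed by net-zero segments followed by the matching pop, its image under $[\cdot]$ is $\epsilon$. Reading off the endpoints of this path yields exactly the root-reachable transition demanded for a new graph edge, or the net-$\epsilon$ path demanded for a new $\epsilon$-edge. The purely ``transitive'' $\epsilon$-edges that $\faddempty$ emits (such as $\biedge{\dsstate'}{\dsstate''''}$) are easier still: they are concatenations of existing $\epsilon$-closure witnesses, and both net-$\epsilon$-ness and the ``for all $\acont$'' property are preserved under concatenation.

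I expect the $\faddempty$ case to be the main obstacle, for two reasons. First, $\faddempty$ is invoked with a brand-new $\epsilon$-edge $\biedge{\dsstate''}{\dsstate'''}$ and simultaneously emits new graph edges and several new $\epsilon$-edges all depending on that edge; the argument must use that a witness for $\biedge{\dsstate''}{\dsstate'''}$ is indeed available --- precisely because that edge sits in the input work-set $\Delta H$, which the induction hypothesis covers --- and then compose it correctly with the $\eancestor{\cdot}$ and $\edescendent{\cdot}$ witnesses on either side across the five-state diagram. Second, checking that the assembled action string cancels to $\epsilon$ under $[\cdot]$ requires invoking the cancellation laws of $\fnet{\cdot}$ to see that $\stackchar_+$ prepended to a net-zero string and followed by $\stackchar_-$ is again net-zero, and keeping the bookkeeping straight. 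The other cases ($\fsprout$ emitting an $\epsilon$- or push-edge directly from $\transfunction$; $\faddpush$; $\faddpop$) are structurally identical but shorter, and termination of the iteration is orthogonal, following from monotonicity and finiteness of the underlying space as sketched before Theorem~\ref{thm:eps-closure-correct} and detailed in \autoref{lem:termination}.
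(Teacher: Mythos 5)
Your proposal is correct and follows essentially the same route as the paper's proof of this invariant lemma (Lemma~\ref{lem:mkcrpdsp-inv} in the appendix): a preservation argument by cases on whether a new edge arises from $\fsprout$, $\faddpush$, $\faddpop$, or $\faddempty$, splicing a push, the net-$\epsilon$ witness paths (valid over every $\acont$), and the matching pop, with root-reachability obtained by path concatenation. Your explicit strengthening that every state in $\DSStates \union \Delta S$ is root-reachable is left implicit in the paper but is a sound and arguably cleaner way to discharge the $\fsprout$ case.
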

The $\supseteq$ case follows from
\begin{lemma}
  For all traces $\mtrace \equiv \dsstate_0 \RPDTransOU{\vec{\stackact}}{M} \dsstate$,
  there is both a corresponding path $\dsstate_0 \RPDTransOU{\vec{\stackact}}{G} \dsstate$ and
  for all non-empty subtraces of $\mtrace$, $\dsstate_b \RPDTransOU{\vec{\stackact'}}{M} \dsstate_a$, if $[\vec{\stackact'}] = \epsilon$ then $\biedge{\dsstate_b}{\dsstate_a} \in H$.
\end{lemma}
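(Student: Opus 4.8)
The plan is to prove the statement by induction on the length of the trace $\mtrace$, establishing both conjuncts simultaneously: the $\epsilon$-closure edges collected in $H$ are precisely what license the pop edges we must exhibit in $G$, and conversely newly installed pop edges are what extend $H$. The induction rests on one auxiliary property of the fixed point, which I will call \emph{saturation}: writing $(G,H,(\emptyset,\emptyset,\emptyset)) = \lfp(\mkCRPDS'(M))$ with $G = (\DSStates,\DSFrames,\DSEdges,\dsstate_0)$, the pair $(G,H)$ is closed under the four rules: (i) if $\dsstate \in \DSStates$ and $\triedge{\dsstate}{\epsilon}{\dsstate'} \in \transfunction$ then $\triedge{\dsstate}{\epsilon}{\dsstate'} \in \DSEdges$ and $\biedge{\dsstate}{\dsstate'} \in H$, and likewise for push edges; (ii) if $\triedge{\dsstate}{\stackchar_+}{\dsstate'} \in \DSEdges$, $\dsstate'' \in \edescendent{\dsstate'}$, and $\triedge{\dsstate''}{\stackchar_-}{\dsstate'''} \in \transfunction$, then $\triedge{\dsstate''}{\stackchar_-}{\dsstate'''} \in \DSEdges$ and $\biedge{\dsstate}{\dsstate'''} \in H$; (iii) $H$ is reflexive and transitive. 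I would prove saturation by assigning to each edge of the final $G$ and $H$ the iteration at which it first appears (in a graph component or a work-set), and observing that the iteration function, on each work-set element, invokes the corresponding $\fadd$-function against the then-current, monotonically growing $G$ and $H$, scheduling any genuinely new consequence for a later round; since the computation terminates with empty work-sets, every such consequence is ultimately present.

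Granting saturation, the induction is routine except in one case. In the base case $\mtrace$ is the empty trace at $\dsstate_0 \in \DSStates$ and there are no non-empty subtraces. For the step, write $\mtrace = \mtrace' \PDTrans \dsstate$ where $\mtrace'$ ends at $\dsstate_p$; by hypothesis there is a matching $G$-path to $\dsstate_p$ and every net-empty subtrace of $\mtrace'$ is recorded in $H$. If the last action is $\epsilon$ or a push, the transition lies in $\transfunction$ and, since $\dsstate_p \in \DSStates$, saturation (i) installs the corresponding edge in $G$ (and, for $\epsilon$, the edge $\biedge{\dsstate_p}{\dsstate}$ in $H$). In the $\epsilon$ subcase every new net-empty subtrace reaching $\dsstate$ factors as a net-empty subtrace of $\mtrace'$ followed by this last step, so transitivity of $H$ closes it; in the push subcase no new net-empty subtrace reaches $\dsstate$, since a net string ending in $\stackchar_+$ never reduces to $\epsilon$.

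The interesting case is a popping last step $\triedge{\dsstate_p}{\stackchar_-}{\dsstate}$, which is in $\transfunction$ but which the construction installs only through a matching push. Here I would use the standard decomposition of pushdown traces: the $\stackchar$ popped last was pushed at some earlier step $\triedge{\dsstate_{i-1}}{\stackchar_+}{\dsstate_i}$ of $\mtrace'$ and the intervening segment $\dsstate_i \PDTrans^* \dsstate_p$ has empty net stack change; moreover any net-empty subtrace of $\mtrace$ reaching $\dsstate$ splits at this matching push into a net-empty prefix followed by this balanced middle segment. The hypothesis then supplies the push edge $\triedge{\dsstate_{i-1}}{\stackchar_+}{\dsstate_i}$ in $G$, the edge $\biedge{\dsstate_i}{\dsstate_p}$ in $H$ (reflexively if $i$ indexes the last step of $\mtrace'$), and, for each relevant subtrace start $\dsstate_b$, the edge $\biedge{\dsstate_b}{\dsstate_{i-1}}$ in $H$; saturation (ii) then yields $\triedge{\dsstate_p}{\stackchar_-}{\dsstate}$ in $G$ and $\biedge{\dsstate_{i-1}}{\dsstate}$ in $H$, and transitivity of $H$ discharges every net-empty subtrace reaching $\dsstate$.

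I expect the main obstacle to be saturation, and within it the case where a push edge, a connecting $H$-edge, and a pop transition first become available on the \emph{same} iteration, so that no single application of $\faddpush$, $\faddpop$, or $\faddempty$ sees all three at once. The intended resolution is that, of the push edge and the $H$-edge, whichever migrates into $G$ or $H$ last is still present in the relevant work-set on the next iteration --- work-set membership is cancelled only against the \emph{previous} graph and $\epsilon$-closure graph, by the $\Delta E' - E'$ and $\Delta H' - H$ subtractions in the iteration function of $\mkCRPDS'$ --- and is therefore re-processed against a $G$ and $H$ that by then contain the other; if this is not already forced by the definition, the invariant $\invcrpdsp$ must be strengthened so that $\faddempty$ re-examines its deferred transitive consequences. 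The trace-decomposition claim about matching pushes is standard but fiddly; I would establish it by a secondary induction on the net stack height within the subtrace, or cite it as folklore about pushdown systems.
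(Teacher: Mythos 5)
Your plan is correct and follows essentially the same route as the paper's proof: an induction on the trace, strengthened to record every non-empty balanced subtrace as an edge in $H$, with the pop case discharged by locating the matching push and its balanced middle segment and then reasoning about when the push edge and the connecting $\epsilon$-closure edge become available to $\faddpush$ and $\faddempty$. The paper simply inlines your ``saturation'' bookkeeping into the induction hypothesis by carrying an explicit iteration witness $m$ for each state, edge and $\epsilon$-edge, and it resolves exactly the simultaneity obstacle you flag by case-splitting on $m \ge m'$ versus $m < m'$, so that whichever of the push edge and the $H$-edge is installed later is processed against a graph that already contains the other.
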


Since all edges in a compact rooted pushdown system must be in a path from the initial state, we can extract the edges from said paths using this lemma.

\subsection{Complexity: Still exponential, but more efficient}

As in the previous case (Section~\ref{sec:compl-pol-exp}), to
determine the complexity of this algorithm, we ask two questions: how
many times would the algorithm invoke the iteration function in the
worst case, and how much does each invocation cost in the worst case?
The run-time of the algorithm is bounded by the size of the final
graph plus the size of the \ecg.
Suppose the final graph has $m$ states.
In the worst case, the iteration function adds only a single edge each
time.
There are at most $2\abs{\StackAlpha}m^2 + m^2$ edges in the graph ($\abs{\StackAlpha}m^2$ push edges, just as many pop edges, and $m^2$ no-change edges) and
at most $m^2$ edges in the \ecg{}, which bounds the number of
iterations.
Recall that $m$ can be exponential in the size of the program, since $m \leq \abs{Q}$ (and Section~\ref{sec:doubly-exponential} derived the exponential size of $\abs{Q}$).

Next, we must reason about the worst-case cost of adding an edge: how
many edges might an individual iteration consider?
In the worst case, the algorithm will consider every edge in every
iteration, leading to an asymptotic time-complexity of:
\begin{equation*}
  O((2\abs{\StackAlpha}m^2 + 2m^2)^2) = 
  O(\abs{\StackAlpha}^2m^4)
  \text.
\end{equation*}
While still high, this is a an improvement upon the previous
algorithm.  
%
%
For sparse graphs, this is a reasonable algorithm.

\section{Polynomial-Time Complexity from Widening}
\label{sec:widening}

In the previous section, we developed a more efficient fixed-point
algorithm for computing a compact rooted pushdown system.
Even with the core improvements we made, the algorithm remained
exponential in the worst case, owing to the fact that there could be
an exponential number of reachable control states.
When an abstract interpretation is intolerably complex, the standard
approach for reducing complexity and accelerating convergence is
widening~\cite{mattmight:Cousot:1977:AI}.
Of course, widening techniques trade away some precision to gain this
speed.
It turns out that the small-step variants of finite-state CFAs are
exponential without some sort of widening as
well~\cite{dvanhorn:VanHorn-Mairson:ICFP08}.

To achieve polynomial time complexity for pushdown control-flow
analysis requires the same two steps as the classical case: (1)
widening the abstract interpretation to use a global,
``single-threaded'' store and (2) selecting a monovariant allocation
function to collapse the abstract configuration-space.
Widening eliminates a source of exponentiality in the size of the
store; monovariance eliminates a source of exponentiality from
environments.
In this section, we redevelop the pushdown control-flow analysis
framework with a single-threaded store and calculate
its complexity.

\subsection{Step 1: Refactor the concrete semantics}
First, consider defining the reachable states of the concrete
semantics using fixed points.
That is, let the system-space of the evaluation function be
sets of configurations:
\begin{equation*}
  \system \in \s{System} = 
  \Pow{\s{Conf}} =
  \Pow{\syn{Exp} \times \s{Env} \times \s{Store} \times \s{Kont}}
  \text.
\end{equation*}
We can redefine the concrete evaluation function:
\begin{align*}
  \Eval(\expr) &= \lfp(\tf_\expr)\text{, where }
  \tf_\expr: \s{System} \to \s{System}
  \text{ and }
  \\
  \tf_\expr(\system) &= \set{\Inject(\expr)} \union \setbuild{ \conf' }{ \conf \in \system \text{ and } \conf \To \conf' }
  \text.
\end{align*}

\subsection{Step 2: Refactor the abstract semantics}
We can take the same approach with the abstract evaluation function,
first redefining the abstract system-space:
\begin{align*}
  \asystem  \in \sa{System} &= 
  \Pow{\sa{Conf}} 
\\
&=
  \Pow{\syn{Exp} \times \sa{Env} \times \sa{Store} \times \sa{Kont}}
  \text,
\end{align*}
and then the abstract evaluation function:
\begin{align*}
  \aEval(\expr) &= \lfp(\atf_\expr)\text{, where } \atf_\expr : \sa{System} \to \sa{System} \text{ and }
  \\
  \atf_\expr(\asystem) &= \set{\aInject(\expr)} \union \setbuild{ \aconf' }
  { \aconf \in \asystem \text{ and } \aconf \aTo \aconf' }
  \text.
\end{align*}
What we'd like to do is shrink the abstract system-space with a
refactoring that corresponds to a widening.

\subsection{Step 3: Single-thread the abstract store}
We can approximate a set of abstract stores
$\set{\astore_1,\ldots,\astore_n}$ with 
the least-upper-bound of those stores: $\astore_1 \join \cdots
\join \astore_n$.
We can exploit this by creating a new abstract system space in which
the store is factored out of every configuration.
Thus, the system-space contains a set of \emph{partial configurations}
and a single global store:
\begin{align*}
  \sa{System}' &= \Pow{\sa{PConf}} \times \sa{Store}
  \\
  \apconf \in \sa{PConf} &= \syn{Exp} \times \sa{Env} \times \sa{Kont}
  \text.
\end{align*}
We can factor the store out of the abstract transition relation as well, so that
$(\afTo^\astore) \subseteq \sa{PConf} \times (\sa{PConf} \times \sa{Store})$:
\begin{align*}
  (\expr, \aenv, \acont) \mathrel{\afTo^{\astore}} ((\expr',\aenv',\acont'), \astore') 
  \text{ iff }
(\expr, \aenv, \astore, \acont) \aTo (\expr',\aenv', \astore', \acont') 
\text,
\end{align*}
which gives us a new iteration function,
$\atf'_\expr : \sa{System}' \to \sa{System}'$,
\begin{align*}
  \atf'_\expr(\hat P,\astore) &= (\hat P',\astore')\text{, where } 
  \\
  \hat P' &= \setbuild{ \apconf' }{ \apconf \mathrel{\afTo^\astore} (\apconf',\astore'') } \union \set{ \apconf_0 } 
  \\
  \astore' &= \bigjoin \setbuild{ \astore'' }{ \apconf \mathrel{\afTo^\astore} (\apconf',\astore'') }
  \\
  (\apconf_0,\vect{}) &= \aInject(\expr)
  \text.
\end{align*}

\subsection{Step 4: Pushdown control-flow graphs}
Following the earlier graph reformulation of the compact rooted
pushdown system, we can reformulate the set of partial
configurations as a \emph{pushdown control-flow graph}.
A \defterm{pushdown control-flow graph} is a frame-action-labeled
graph over partial control states, and a \defterm{partial control
  state} is an expression paired with an environment:
\begin{align*}
  \sa{System}'' &= \sa{PDCFG} \times \sa{Store}
  \\
  \sa{PDCFG} &= \PowSm{\sa{PState}} \times \PowSm{\sa{PState} \times \sa{Frame}_\pm \times \sa{PState}}
  \\
  \apstate \in \sa{PState} &= \syn{Exp} \times \sa{Env}
  \text.
\end{align*}
In a pushdown control-flow graph, the partial control states are
partial configurations which have dropped the continuation component;
the continuations are encoded as paths through the graph.

\paragraph{A preliminary analysis of complexity}
Even without defining the system-space iteration function, we can ask,
\emph{How many iterations will it take to reach a fixed point in the worst
case?}
This question is really asking, \emph{How many edges can we add?}
And, \emph{How many entries are there in the store?}
Summing these together, we arrive at the worst-case number of
iterations:
\begin{align*}
  \overbrace{\abs{\sa{PState}}
  \times
  \abs{\sa{Frame}_\pm}
  \times
  \abs{\sa{PState}}}^{\text{PDCFG edges}}
  +
  \overbrace{
    \abs{\sa{Addr}}
    \times
    \abs{\sa{Clo}}
  }^{\text{store entries}}
  \text.
\end{align*}
With a monovariant allocation scheme that eliminates abstract environments, the number of iterations
ultimately reduces to:
\begin{align*}
  \abs{\syn{Exp}}
  \times (2 \abs{\sa{\syn{Var}}} + 1)
  \times \abs{\syn{Exp}}
  +
  \abs{\syn{Var}}
  \times \abs{\syn{Lam}}
  \text,
\end{align*}
which means that, in the worst case, the algorithm makes a cubic
number of iterations with respect to the size of the input
program.\footnote{In computing the number of frames, we note that in
  every continuation, the variable and the expression uniquely
  determine each other based on the let-expression from which they
  both came.
  As a result, the number of abstract frames available in a
  monovariant analysis is bounded by both the number of variables and
  the number of expressions, \ie, $\abs{\sa{Frame}} =
  \abs{\syn{Var}}$.}

The worst-case cost of the each iteration would be dominated by a
CFL-reachability calculation, which, in the worst case, must consider
every state and every edge:
\begin{equation*}
  O(\abs{\syn{Var}}^3 \times \abs{\syn{Exp}}^3)
  \text.
\end{equation*}
Thus, each iteration takes $O(n^{6})$ and there are a maximum of $O(n^{3})$ iterations, where $n$ is the size of the program.
So, total complexity would be $O(n^{9})$ for a monovariant
pushdown control-flow analysis with this scheme, where $n$ is again the size of the program.
Although this algorithm is polynomial-time, we can do better.

\subsection{Step 5: Reintroduce 
  $\boldsymbol\epsilon$-closure graphs}\label{sec:pdcfa-eps}
Replicating the evolution from Section~\ref{sec:ecg-worklist} for this
store-widened analysis, we arrive at a more efficient polynomial-time
analysis.
An \ecg{} in this setting is a set of pairs of store-less,
continuation-less partial states:
\begin{align*}
  \sa{ECG} &= \Pow{\sa{PState} \times \sa{PState}}
  \text.
\end{align*}
Then, we can set the system space to include \ecg s:
\begin{align*}
  \sa{System}''' &= \sa{CRPDS} \times \sa{ECG} \times \sa{Store}\text.
\end{align*}

Before we redefine the iteration function, we need another factored
transition relation.
The stack- and action-factored transition relation
$(\overset{\astore}{\underset{\stackact}{\apTo}}) \subseteq \sa{PState} \times
\sa{PState} \times \s{Store}$ determines if a transition is possible
under the specified store and stack-action:
\begin{align*}
  \pdcfato{(\expr,\aenv)}{\astore}{\aphrame_+}{((\expr',\aenv'),\astore')}
  & \text{ iff }
  (\expr,\aenv,\astore,\acont) 
  \aTo
  (\expr',\aenv',\astore',\aphrame : \acont)
  \\
  \pdcfato{(\expr,\aenv)}{\astore}{\aphrame_-}{((\expr',\aenv'),\astore')}
  & \text{ iff }
  (\expr,\aenv,\astore,\aphrame : \acont) 
  \aTo
  (\expr',\aenv',\astore', \acont)
  \\
  \pdcfato{(\expr,\aenv)}{\astore}{\epsilon}{((\expr',\aenv'),\astore')}
  & \text{ iff }
  (\expr,\aenv,\astore,\acont) 
  \aTo
  (\expr',\aenv',\astore', \acont)
  \text.
\end{align*}

Now, we can redefine the iteration function (Figure~\ref{fig:widen-trans}).

\begin{figure}
\figrule
\begin{align*}
  \atf_\expr((\hat P,\hat E), \hat H, \astore) &= ((\hat P',\hat E'), \hat H', \astore'')
  \text{, where }
\end{align*}
\begin{align*}
  \hat T_+ &= \setbuild{ (\triedge{\apstate}{\aphrame_+}{\apstate'},\astore') }{
    \pdcfato{\apstate}{\astore}{\aphrame_+}{(\apstate',\astore')}
  }
  \\
  \hat T_\epsilon &= \setbuild{ (\triedge{\apstate}{\epsilon}{\apstate'},\astore') }{
    \pdcfato{\apstate}{\astore}{\epsilon}{(\apstate',\astore')} }
  \\
 \hat T_- &= 
  \big\{
    (\triedge{\apstate''}{\aphrame_-}{\apstate'''},\astore') 
  :
    \pdcfato{\apstate''}{\astore}{\aphrame_-}{(\apstate''',\astore')} \text{ and }  
    \\
    &\hspace{9.5em}
    \triedge{\apstate}{\aphrame_+}{\apstate'} \in \hat E \text{ and }
    \\
    &\hspace{9.5em}
    \biedge{\apstate'}{\apstate''} \in \hat H
    \big\}
  \\
  \hat T' &= {\hat T_+} \union {\hat T_\epsilon} \union {\hat T_-}
  \\
  \hat E' &= \setbuild{ \hat e }{ (\hat e,\_) \in \hat T '}
  \\
  \astore'' &= \bigjoin \setbuild{ \astore' }{ (\_, \astore') \in \hat T'}
  \\
  \hat H_{\epsilon} &= \setbuild{ \biedge{\apstate}{\apstate''} }{
    \biedge{\apstate}{\apstate'} \in \hat H \text{ and } 
      \biedge{\apstate'}{\apstate''} \in \hat H
  }
  \\
  \hat H_{{+}{-}} &= \big\{
  \biedge{\apstate}{\apstate'''}
    : 
    \triedge{\apstate}{\aphrame_+}{\apstate'} \in \hat E
    \text{ and }
    \biedge{\apstate'}{\apstate''} \in \hat H 
    \\
    & \hspace{6.1em} \text{ and } 
    \triedge{\apstate''}{\aphrame_-}{\apstate'''} \in \hat E
    \big\}
  \\
  \hat H' &=   \hat H_{\epsilon} \union   \hat H_{{+}{-}} 
  \\
  \hat P' &= \hat P \union \setbuild{ \apstate' }{ \triedge{\apstate}{\stackact}{\apstate'} } \cup \set{(\expr,\bot)}  
  \text.
\end{align*}
\caption{An \ecg{}-powered iteration function for pushdown control-flow analysis with a single-threaded store.}
\label{fig:widen-trans}
\figrule
\end{figure}

\begin{theorem}
Pushdown 0CFA with single-threaded store (PDCFA) can be computed in $O(n^6)$-time, where $n$ is the
size of the program.
\end{theorem}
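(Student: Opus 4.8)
The plan is to instantiate the store-widened machinery of Section~\ref{sec:widening} with the monovariant allocator $\aalloc(\vv,\cdot)=\vv$ (so $\sa{Addr}=\syn{Var}$) and then carry out a plain ``iterations $\times$ cost-per-iteration'' accounting of the fixed-point computation $\lfp(\atf_\expr)$ defined in Figure~\ref{fig:widen-trans}. Correctness --- that this least fixed point really encodes the reachable partial control states, the $\epsilon$-closure, and the single-threaded store --- is the store-widened replay of Theorem~\ref{thm:eps-closure-correct} together with the observation of Section~\ref{sec:widening} that collapsing a set of stores to their join is a sound widening; so the content of the theorem is entirely the running-time bound, and I will treat correctness as inherited.

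First I would fix the sizes of the finite sets in play, writing $n$ for the program size. Monovariance makes abstract environments degenerate (each in-scope variable is bound to itself), so closures collapse to $\syn{Lam}$ and a store is a map from the $O(n)$ addresses to sets of $O(n)$ closures; hence $\abs{\sa{Store}}=O(n^2)$ and a strictly ascending chain of stores has length $O(n^2)$. A partial control state is, up to the degenerate environment, just an expression, so $\abs{\sa{PState}}=O(n)$, and, as in the footnote of Section~\ref{sec:widening}, $\abs{\sa{Frame}_\pm}=2\abs{\syn{Var}}+1=O(n)$; so a pushdown control-flow graph has at most $O(n^3)$ edges and an $\epsilon$-closure graph at most $O(n^2)$ edges. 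I would also record the sharper fact that only $O(n)$ \emph{push} edges are ever reachable, because the source $(\letiform{\vv}{\call}{\expr},\aenv)$ of a push edge already determines both the pushed frame $(\vv,\expr,\aenv)$ and the target $(\call,\aenv)$, so push edges are in bijection with the program's let-expressions. Since $\atf_\expr$ is extensive and monotone on $\sa{System}'''$ (the termination argument of Lemma~\ref{lem:termination}, now also threading the store), the computation is a strictly ascending chain each non-final step of which enlarges $\hat P$, $\hat E$, $\hat H$, or $\astore$; summing chain lengths as in the preliminary analysis of Section~\ref{sec:widening} gives at most $O(n^3)$ iterations.

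The crux is then to bound one application of $\atf_\expr$ by $O(n^3)$, which I would do clause by clause against Figure~\ref{fig:widen-trans}: $\hat T_+$ costs $O(n^2)$ (one push transition per partial state, each touching one $O(n)$-sized store cell); $\hat T_\epsilon$ costs $O(n^3)$ (a tail call forks over the $O(n)$ closures at its operator, each fork touching the store); $\hat T_-$ costs $O(n^3)$ (range over the $O(n)$ push edges of $\hat E$, for each over the $O(n)$ $\epsilon$-descendants of its target read off $\hat H$, and for each check the single determined pop transition); the store join $\astore''=\bigjoin\{\astore':\ldots\}$ costs $O(n^3)$ once one notes that every transition's store delta is a single-cell update with an $O(n)$-sized payload and there are $O(n^2)$ transitions; and the two $\epsilon$-closure steps $\hat H_\epsilon$ and $\hat H_{{+}{-}}$ are each $O(n^3)$ --- $\hat H_\epsilon$ being one relational-composition layer charged as (middle node) $\times$ (ancestors) $\times$ (descendants), with the full transitive closure accreting across the \emph{outer} iterations, and $\hat H_{{+}{-}}$ ranging over the $O(n)$ push edges, the $O(n)$ $\epsilon$-descendants of each target, and the single determined pop. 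Multiplying, $O(n^3)$ iterations $\times$ $O(n^3)$ per iteration $=O(n^6)$.

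The main obstacle I anticipate is exactly this per-iteration bound. Read naively, the clauses of Figure~\ref{fig:widen-trans} cost $O(n^4)$ or worse --- ranging over all (push edge, $\epsilon$-edge) pairs, or recomputing the whole store as a join of full stores, or iterating the $\epsilon$-closure to saturation \emph{inside} a single step --- which would only yield $O(n^7)$. Pulling it down to $O(n^3)$ per iteration rests on the three structural refinements above: push edges are in bijection with their source let-expressions; transition store-deltas touch a single cell; and the $\epsilon$-closure is maintained incrementally across outer iterations rather than saturated within one. I would also want to double-check the representation details (a store as an address-indexed array of closure-sets, so that a cell look-up or cell union is $O(n)$, not $O(n^2)$) and, most carefully, the ``degenerate environments'' claim, since that is the one place monovariance is genuinely used and is what collapses $\abs{\sa{PState}}$, $\abs{\sa{Clo}}$ and $\abs{\sa{Frame}_\pm}$ from exponential to linear.
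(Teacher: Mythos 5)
Your proof takes the same route as the paper's: $O(n^3)$ iterations (the edge-plus-store-entry count from the preliminary analysis of Section~\ref{sec:widening}, collapsed by monovariance) times $O(n^3)$ per application of the $\epsilon$-closure-powered iteration function of Figure~\ref{fig:widen-trans}. The paper's proof simply asserts that each iteration is bounded by the (cubic) number of graph edges; your clause-by-clause accounting --- including the observations that reachable push edges are in bijection with let-expressions, that store deltas are single-cell, and that the $\epsilon$-closure is saturated across outer iterations rather than within one --- supplies the bookkeeping that the paper leaves implicit.
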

\begin{proof}
%
As before, the maximum number of iterations is cubic in the size of
the program for a monovariant analysis.
Fortunately, the cost of each iteration is also now bounded by the number
of edges in the graph, which is also cubic.
\end{proof}

\section{Introspection for Abstract Garbage Collection}
Abstract garbage collection~\cite{mattmight:Might:2006:GammaCFA} yields large
improvements in precision by using the abstract interpretation of garbage
collection to make more efficient use of the finite address space available
during analysis.
Because of the way abstract garbage collection operates, it grants exact
precision to the flow analysis of variables whose bindings die
between invocations of the same abstract context.
Because pushdown analysis grants exact precision in tracking return-flow, it is
clearly advantageous to combine these techniques.
Unfortunately, as we shall demonstrate, abstract garbage collection
breaks the pushdown model by requiring a full traversal of the stack to discover the
root set.

Abstract garbage collection modifies the transition relation
to conduct a ``stop-and-copy'' garbage collection before each
transition.
To do this, we define a garbage collection function 
$\aCollect : \sa{Conf} \to \sa{Conf}$
on
configurations:
\begin{align*}
\aCollect(\overbrace{\expr,\aenv,\astore,\acont}^{\aconf})
&= (\expr,\aenv,\astore|\mathit{Reachable}(\aconf),\acont)
  \text,
  \end{align*}
  where the pipe operation $f|S$ yields the function $f$, but with
  inputs not in the set $S$ mapped to bottom---the empty set.
  The reachability function $\mathit{Reachable} : \sa{Conf} \to \PowSm{\sa{Addr}}$
  first computes the root set, and then the transitive closure of an
  address-to-address adjacency relation: 
  \begin{align*}
  \mathit{Reachable}(\overbrace{\expr,\aenv,\astore,\acont}^\aconf) &=
\setbuild{ \aaddr }{ \aaddr_0 \in \mathit{Root}(\aconf)
  \text{ and }
  \aaddr_0  
    \mathrel{\areaches_\astore^*}
  \aaddr
}
\text,
  \end{align*}
  where the function $\mathit{Root} : \sa{Conf} \to 
  \PowSm{\sa{Addr}}$ 
  finds the root addresses:
  \begin{align*} 
  \mathit{Root}(\expr,\aenv,\astore,\acont) &=
  \mathit{range}(\aenv) \union
\StackRoot(\acont)
  \text,
  \end{align*}
  and the $\StackRoot : \sa{Kont} \to \PowSm{\sa{Addr}}$ function
  finds roots down the stack:
  \begin{align*} 
  \StackRoot
  \vect{\phrame_1,\ldots,\phrame_n}
  &= 
\Union_i \touches(\phrame_i)
  \text,
  \end{align*}
  using a ``touches'' function, $\touches : \sa{Frame} \to \PowSm{\sa{Addr}}$:
  \begin{align*}
    \touches(\vv,\expr,\aenv) &= \range(\aenv)
    \text,
  \end{align*}
  and the relation
  $(\areaches) \subseteq \sa{Addr} \times \sa{Store} \times \sa{Addr}$
  connects adjacent addresses:
  \begin{align*}
  \aaddr 
  \mathrel{\areaches_\astore} 
  \aaddr'
  \text{ iff there exists }
(\lam,\aenv) \in \astore(\aaddr)
  \text{ such that }
  \aaddr' \in \range(\aenv)
  \text.
  \end{align*}

The new abstract transition relation is thus the composition of abstract garbage collection with the old transition relation:
\begin{align*}
(\aTo_{\mathrm{GC}}) = 
(\aTo) \compose \aCollect\text.
\end{align*}

\paragraph{Problem: Stack traversal violates pushdown constraint}

In the formulation of pushdown systems, the transition relation is restricted
to looking at the top frame, and in less restricted formulations that may read the stack,
the reachability decision procedures need the entire system up-front.
Thus, the relation $(\aTo_{\mathrm{GC}})$ cannot be computed as
a straightforward pushdown analysis using summarization.

\paragraph{Solution: Introspective pushdown systems}
To accommodate the richer structure of the relation $(\aTo_{\mathrm{GC}})$, we
now define \emph{introspective} pushdown systems.
Once defined, we can embed the garbage-collecting abstract interpretation
within this framework, and then focus on developing a control-state
reachability algorithm for these systems.

An \defterm{introspective pushdown system} is a quadruple
$M = (\ControlStates,\StackAlpha,\transfunction,\qstate_0)$:
\begin{enumerate}

\item $\ControlStates$ is a finite set of control states;

\item $\StackAlpha$ is a stack alphabet; 

\item $\transfunction \subseteq \ControlStates \times \StackAlpha^*
\times \StackAlpha_\pm \times \ControlStates$ is a transition relation (where $(\qstate, \cont, \phrame_-, \qstate') \in \transfunction$ implies $\cont \equiv \phrame:\cont'$); and

\item $\qstate_0$ is a distinguished root control state. 
\end{enumerate}
The second component in the transition relation is 
a realizable stack at the given control-state.
This realizable stack distinguishes an introspective pushdown system
from a general pushdown system.
$\mathbb{IPDS}$  denotes the class of all introspective pushdown
systems.

Determining how (or if) a control state $\qstate$
transitions to a control state $\qstate'$, requires knowing a
path taken to the state $\qstate$.
We concern ourselves with root-reachable states.
%
When $M = (\ControlStates,\StackAlpha,\transfunction,\qstate_0)$,
if there is a $\acont$ such that $(\qstate_0,\vect{}) \mathrel{\underset{M}{\PDTrans}^*} (\qstate, \acont)$ we say $\qstate$ is reachable via $\acont$, where
\begin{align*}
\infer{ }{(\qstate, \acont) \mathrel{\underset{M}{\PDTrans}^*} (\qstate, \acont)}
\qquad
\infer{(\qstate, \acont) \mathrel{{\underset{M}{\PDTrans}}^*} (\qstate', \acont') \quad
 (\qstate', \acont', \stackact', \qstate'') \in \transfunction}
{(\qstate, \acont) \mathrel{\underset{M}{\PDTrans}^*} (\qstate'', [\acont'_+\stackact'])}
\end{align*}

\subsection{Garbage collection in introspective pushdown systems}

To convert the garbage-collecting,
abstracted CESK machine into an introspective pushdown system,
we use the function $\afIPDS : \syn{Exp} \to \mathbb{IPDS}$:
\begin{center}
  \minipagebreak{0.45}{%
    \afIPDS(\expr) &= (\QStates,\StackAlpha,\transfunction,\qstate_0)
    \\
    \QStates &= \syn{Exp} \times \sa{Env} \times \sa{Store}
    \\
    \StackAlpha &= \sa{Frame}
    \\
    (\qstate_0,\vect{}) &= \aInject(\expr)} {0.50}{%
    (\qstate,\acont,\epsilon,\qstate')
    \in \transfunction & \text{ iff } \aCollect(\qstate, \acont) \aTo
    (\qstate', \acont)
    \\
    (\qstate,\aphrame : \acont,\aphrame_{-},\qstate') \in
    \transfunction & \text{ iff } \aCollect(\qstate, \aphrame :
    \acont) \aTo (\qstate',\acont)
    \\
    (\qstate,\acont,\aphrame_{+},\qstate')
    \in \transfunction & \text{ iff } \aCollect(\qstate, \acont) \aTo
    (\qstate',\aphrame : \acont)
    \text.}
\end{center}

\section{Problem: Reachability for Introspective Pushdown Systems is Uncomputable}
\label{sec:ipds-incomputable}

As currently formulated, computing control-state reachability
for introspective pushdown systems is uncomputable.
The problem is that the transition relation expects to enumerate every possible
stack for every control point at every transition, without restriction.
\begin{theorem}
  Reachability in introspective pushdown systems is uncomputable.
\end{theorem}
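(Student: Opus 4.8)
The plan is to reduce a classical undecidable problem — emptiness of the intersection of two context-free languages — to control-state reachability for introspective pushdown systems, exploiting the fact that in the present formulation an IPDS transition may be conditioned on \emph{any} property whatsoever of the current stack. Recall that it is undecidable, given context-free grammars $G_1, G_2$ over a common alphabet $\Sigma$, whether $L(G_1) \cap L(G_2) = \emptyset$. I would show that a decision procedure for IPDS control-state reachability would decide this problem, which is impossible.

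Given $G_1$ and $G_2$, I would construct the IPDS $M = (\{q_0, q_f\}, \Sigma, \transfunction, q_0)$ whose transitions are: (i) for every $\alphachar \in \Sigma$ and every $\vec{\stackchar} \in \Sigma^*$, the push transition $(q_0, \vec{\stackchar}, \alphachar_+, q_0) \in \transfunction$ --- from $q_0$, with any stack, push an arbitrary alphabet symbol and return to $q_0$; and (ii) for every string $w \in L(G_1) \cap L(G_2)$, the $\epsilon$-transition $(q_0, w, \epsilon, q_f) \in \transfunction$ --- from $q_0$, if the content of the stack (read as a string) lies in $L(G_1) \cap L(G_2)$, step with no stack change to $q_f$. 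Note that $\transfunction$ is itself a \emph{decidable} relation: a candidate tuple is tested by a finite check on its control states and stack action together with at most two context-free membership tests, so $M$ is effectively presented from $G_1, G_2$. The key observation is a one-line induction on length: from $(q_0, \vect{})$ the push transitions of clause (i) reach $(q_0, \vec{\stackchar})$ for \emph{every} $\vec{\stackchar} \in \Sigma^*$, and conversely every configuration reachable at $q_0$ has this form. Consequently $q_f$ is reachable in $M$ if and only if some string $w$ is simultaneously a reachable stack at $q_0$ (always the case) and enables a clause-(ii) transition, i.e. if and only if $L(G_1) \cap L(G_2) \neq \emptyset$. A decision procedure for IPDS reachability would therefore decide intersection-nonemptiness of context-free languages; hence reachability in introspective pushdown systems is uncomputable.

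I do not expect a deep obstacle; the only points requiring care are bookkeeping. First, one must confirm against the IPDS semantics stated above --- the inference rules defining $(\underset{M}{\PDTrans}^*)$ and the $[\,\acont'_+\stackact'\,]$ notation --- that an $\epsilon$-transition fires exactly when its recorded stack matches the current stack and leaves the stack unchanged, and that the push transitions behave as expected; both are routine. Second, one should fix the convention identifying a stack $\vect{\stackchar_1, \ldots, \stackchar_n}$ with a string over $\Sigma$ so that clause (ii) is unambiguous; this is immaterial, since under either orientation the set of stacks reachable at $q_0$ is all of $\Sigma^*$. It is worth remarking that the construction pins the blame for undecidability precisely on \emph{unrestricted} stack introspection: the transition relation used is perfectly reasonable --- indeed decidable --- yet the family of stack-languages it implicitly tests against is arbitrary. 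This is exactly what motivates the regularity constraint imposed in the sequel, under which reachability becomes decidable again.
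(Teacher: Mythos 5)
Your proof is correct, but it takes a different route from the paper's. The paper reduces from validity of first-order logic: it uses a singleton stack alphabet, lets a \texttt{searching} state push $\top$ indefinitely so that every stack height is realizable, interprets the stack \emph{length} as a G\"odel-encoded proof object, and gates the transition to \texttt{valid} on the length encoding a proof of a given sentence $\phi$; completeness of first-order logic then ties reachability of \texttt{valid} to validity of $\phi$, and Church's theorem finishes the argument. You reduce instead from emptiness of the intersection of two context-free languages, using the stack \emph{contents} as the witness string. The underlying mechanism is identical in both cases --- saturate the realizable stacks at the start state, then condition the edge into the target state on a stack property whose satisfiability is undecidable, while keeping the transition relation itself decidable so that the blame falls squarely on unrestricted introspection. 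What your version buys is the elimination of the detour through proof theory (no G\"odel numbering, no appeal to the completeness theorem), an explicit check that the constructed $M$ is effectively presented (a point the paper leaves implicit), and a tighter fit with the sequel: CFL-with-CFL intersection emptiness is undecidable whereas CFL-with-regular intersection emptiness is decidable, which is precisely the boundary the paper's regularity constraint on conditional pushdown systems is designed to respect. The paper's version, for its part, makes do with a one-letter stack alphabet, which emphasizes that even the weakest possible introspection --- reading only the stack height --- already breaks decidability.
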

\begin{proof}
  Consider an IPDS with two states --- {\tt searching} (start state) and
  {\tt valid} --- and a singleton stack alphabet of unit
  ($\top$).
  For any first-order logic proposition, $\phi$, we can define a
  reduction relation that interprets the length of the stack as an
  encoding of a proof of $\phi$.
  If the length encodes an ill-formed proof object, or is not a proof
  of $\phi$, {\tt searching} pushes $\top$ on the stack
  and transitions to itself.
  If the length encodes a proof of $\phi$, transition
  to {\tt valid}.
  By the completeness of first-order logic, if $\phi$ is valid, there
  is a finite proof, making the pushdown system terminate in {\tt valid}.
  If it is not valid, then there is no proof and {\tt valid} is unreachable.
  Due to the undecidability of first-order logic, we definitely cannot have a decision procedure for
  reachability of IPDSs.
\end{proof}

To make introspective pushdown systems computable,
we must first refine our definition of introspective pushdown
systems to operate on \emph{sets} of stacks
and insist these sets be regular.

\newpage

A \defterm{conditional pushdown system} (CPDS) is a quadruple
$M = (\ControlStates,\StackAlpha,\transfunction,\qstate_0)$:
\begin{enumerate}

\item $\ControlStates$ is a finite set of control states;

\item $\StackAlpha$ is a stack alphabet; 

\item $\transfunction \subseteq_\text{fin} \ControlStates \times \mathcal{REG}(\StackAlpha^*)
\times \StackAlpha_\pm \times \ControlStates$ is a transition relation (same restriction on stacks); and

\item $\qstate_0$ is a distinguished root control state,
\end{enumerate}
where $\mathcal{REG}(S)$ is the set of all regular languages formable with strings in $S$.

The regularity constraint on the transition relations guarantees that
we can decide applicability of transition rules at each state, since (as we will see)
the set of all stacks that reach a state in a CPDS has decidable overlap with regular languages.
Let $\mathbb{CPDS}$ denote the set of all conditional pushdown systems.

The rules for reachability with respect to sets of stacks are similar to those for IPDSs.
\begin{align*}
\infer{ }{(\qstate,\acont) \mathrel{\underset{M}{\PDTrans}^*} (\qstate,\acont)}
\qquad
\infer{(\qstate,\acont) \mathrel{\underset{M}{\PDTrans}^*} (\qstate',\acont')
       \quad \acont' \in \hat K'
       \quad (\qstate', \hat K', \stackact', \qstate'') \in \transfunction
       }
      {(\qstate,\acont) \mathrel{\underset{M}{\PDTrans}^*} (\qstate'',[\acont'_+\stackact'])}
\end{align*}

We will write $\qstate \mathrel{\overset{\hat K,\stackact}{\underset{M}{\RPDTrans}}} \qstate'$ to mean there are $\acont, \hat K$ such that $\qstate$ is reachable via $\acont$, $\acont \in \hat K$ and $(\qstate,\hat K,\stackact,\qstate') \in \transfunction$. We will omit the labels above if they merely exist.

\subsection{Garbage collection in conditional pushdown systems}

Of course, we must adapt abstract garbage collection to this refined framework.
To convert the garbage-collecting,
abstracted CESK machine into a conditional pushdown system,
we use the function $\afIPDS' : \syn{Exp} \to \mathbb{CPDS}$:
\begin{align*}
\afIPDS'(\expr) &= (\QStates,\StackAlpha,\transfunction,\qstate_0)
  \\
    \QStates &= \syn{Exp} \times \sa{Env} \times \sa{Store}
    \\
      \StackAlpha &= \sa{Frame}
      \\
\text{For all sets of addresses } A \subseteq \sa{Addr} \text{ let }& \hat K = \setbuild{\acont}{\StackRoot(\acont) = A}
\\
        (\qstate,\hat K,\epsilon,\qstate') 
        \in \transfunction
        & \text{ iff }
\aCollect(\qstate, \acont)
  \aTo
  (\qstate', \acont)
  \text{ for any }
  \acont \in \hat K
  \\
    (\qstate,\hat K,\aphrame_{-},\qstate')
    \in \transfunction
    & \text{ iff }
\aCollect(\qstate, \aphrame : \acont) 
  \aTo
  (\qstate',\acont)
  \text{ for any }
  \aphrame : \acont \in \hat K
  \\
    (\qstate,\hat K,\aphrame_{+},\qstate') 
    \in \transfunction
    & \text{ iff }
\aCollect(\qstate, \acont)
  \aTo
  (\qstate',\aphrame : \acont)
  \text{ for any }
   \acont \in \hat K
  \\
(\qstate_0,\vect{}) &= \aInject(\expr)
\text.
\end{align*}

Assuming we can overcome the difficulty of computing with some representation of a set of stacks, the intuition for the decidability of control-state reachability with garbage collection stems from two observations:
garbage collection operates on sets of addresses, and for any given control point there is a finite number of sets of sets of addresses.
The finiteness makes the definition of $\transfunction$ fit the finiteness restriction of CPDSs.
The regularity of $\hat K$ (for any given $A$, which we recall are finite sets) is apparent from a simple construction: let the DFA control states represent the subsets of $A$, with $\emptyset$ the start state and $A$ the accepting state.
Transition from $A' \subseteq A$ to $A \cup \touches(\aphrame)$ for each $\aphrame$ (no transition if the result is not a subset of $A$).
Thus any string of frames that has a stack root of $A$ (and only $A$) gets accepted.

The last challenge to consider before we 
can delve into the mechanics of computing reachable control states
is \emph{how} to represent the sets of stacks 
that may be paired with each control state.
Fortunately, a regular language can describe the stacks that share the same root addresses,
the set of stacks at a control point are recognized by a one-way non-deterministic stack automaton (1NSA), \emph{and}, fortuitously,
non-empty overlap of these two is decidable (but NP-hard~\citep{ianjohnson:rounds:complexity:1973}).
The 1NSA describing the set of stacks at a control point is already encoded in the structure of the (augmented) CRPDS that we will accumulate while computing reachable control states.
As we develop an algorithm for control-state reachability, we will
exploit this insight (Section~\ref{sec:implementation}).

\section{Reachability in Conditional Pushdown Systems}
We will show a progression of constructions that take us along the following line:
\begin{equation*}
  \mathbb{CPDS} \xrightarrow[{\mathsection\ref{sec:icrpds}}]{} \mathbb{CCPDS}
  \xrightarrow[{\mathsection\ref{sec:gc-pdcfa}}]{\text{specialize}} \mathit{PDCFA\ with\ GC} \to \mathit{approx.\ PDCFA\ with\ GC}
\end{equation*}

In the first construction, we show that a CCPDS is finitely constructible in a similar fashion as in Section~\ref{sec:rpds-to-crpds}.
The key is to take the current introspective CRPDS and ``read off'' an automaton that describes the stacks accepted at each state.
For traditional pushdown systems, this is always an NFA, but introspection adds another feature: transition if the string accepted so far is accepted by a given NFA.
Such power falls outside of standard NFAs and into one-way non-deterministic stack automata (1NSA)\footnote{The reachable states of a 1NSA is known to be regular, but the paths are not.}.
These automata enjoy closure under finite intersection with regular languages and decidable emptiness checking~\citep{ianjohnson:one-way-sa:ginsburg:1967}, which we use to decide applicability of transition rules.
If the stacks realizable at $\qstate$ have a non-empty intersection with a set of stacks $\hat K$ in a rule $(\qstate, \hat K, \stackact, \qstate') \in \transfunction$, then there are paths from the start state to $\qstate$ that further reach $\qstate'$.

The structure of the GC problem allows us to sidestep the 1NSA constructions and more directly compute state reachability.
We specialize to garbage collection in \autoref{sec:gc-pdcfa}.
We finally show a space-saving approximation that our implementation uses.

\subsection{One-way non-deterministic stack automata}

The machinery we use for describing the realizable stacks at a state is a generalized pushdown automaton itself.
A stack automaton is permitted to move a cursor up and down the stack and read frames (left and right on the input if two-way, only right if one-way), but only push and pop when the stack cursor is at the top.
Formally, a \defterm{one-way stack automaton} is a 6-tuple $A = (Q, \Sigma, \StackAlpha, \transfunction, \qstate_0, F)$ where
\begin{enumerate}
\item $Q$ is a finite nonempty set of states,
\item $\Sigma$ is a finite nonempty input alphabet,
\item $\StackAlpha$ is a finite nonempty stack alphabet,
\item $\transfunction \subseteq Q \times (\StackAlpha \cup \set{\epsilon}) \times (\Sigma \cup \set{\epsilon}) \times \set{\uparrow, \cdot, \downarrow} \times \StackAlpha_\pm \times Q$ is the transition relation,
\item $\qstate_0 \in Q$ is the start state, and
\item $F \subseteq Q$ the set of final states
\end{enumerate}

An element of the transition relation, $(\qstate, \phrame_\epsilon, a, d, \phrame_\pm, \qstate')$, should be read as, ``if at $\qstate$ the right of the stack cursor is prefixed by $\phrame_\epsilon$ and the input is prefixed by $a$, then consume $a$ of the input, transition to state $\qstate'$, move the stack cursor in direction $d$, and if at the top of the stack, perform stack action $\phrame_\pm$.''
This reading translates into a run relation on \defterm{instantaneous descriptions}, $Q \times (\StackAlpha^* \times \StackAlpha^*) \times \Sigma^*$.
These descriptions are essentially machine states that hold the current control state, the stack split around the cursor, and the rest of the input.
\begin{equation*}
  \infer{(\qstate, \phrame_\epsilon, a, d, \phrame_\pm, \qstate') \in \transfunction
         \quad \phrame_\epsilon \sqsubseteq \Gamma_T
         \quad w \equiv aw'
         \quad [\Gamma_B', \Gamma_T'] = P(\phrame_\pm, D(d, [\Gamma_B, \Gamma_T]))}
        {(\qstate, [\Gamma_B, \Gamma_T], w) \longmapsto (\qstate', [\Gamma_B', \Gamma_T'], w')}
\end{equation*}
where
\begin{center}
  \begin{minipage}{0.55\linewidth}
    \begin{align*}
      P(\phrame_+, [\Gamma_B, \phrame'_\epsilon]) &= [\Gamma_B\phrame'_\epsilon, \phrame] \\
      P(\phrame_-, [\Gamma_B\phrame', \phrame]) &= [\Gamma_B, \phrame'] \\
      P(\phrame_-, [\epsilon, \phrame]) &= [\epsilon, \epsilon] \\
      P(\phrame_\pm, \Gamma_{B,T}) &= \Gamma_{B,T} \quad \text{otherwise}
    \end{align*}
  \end{minipage}
  \begin{minipage}{0.40\linewidth}
  \begin{align*}
    D(\uparrow, [\Gamma_B, \phrame\Gamma_T]) &= [\Gamma_B\phrame, \Gamma_T] \\
    D(\downarrow, [\Gamma_B\phrame, \Gamma_T]) &= [\Gamma_B, \phrame\Gamma_T] \\
    D(d, \Gamma_{B,T}) &= \Gamma_{B,T} \quad \text{otherwise}
  \end{align*}
\end{minipage}
\end{center}

The meta-functions $P$ and $D$ perform the stack actions and direct the stack cursor, respectively.
A string $w$ is thus accepted by a 1NSA $A$ iff there are $\qstate \in F, \Gamma_B, \Gamma_T \in \StackAlpha^*$ such that
\begin{equation*}
  (\qstate_0, [\epsilon, \epsilon], w) \longmapsto^* (\qstate, [\Gamma_B,\Gamma_T], \epsilon)
\end{equation*}

Next we develop an introspective form of compact rooted pushdown systems that use 1NSAs for realizable stacks, and prove a correspondence with conditional pushdown systems.

\subsection{Compact conditional pushdown systems}\label{sec:icrpds}

Similar to rooted pushdown systems, we say a conditional pushdown system $G = (\DSStates, \DSFrames, \DSIEdges, \dsstate_0)$ is compact if all states, frames and edges are on some path from the root.
We will refer to this class of conditional pushdown systems as $\mathbb{CCPDS}$.
Assuming we have a way to decide overlap between the set of realizable stacks at a state and a regular language of stacks, we can compute the CCPDS in much the same way as in \autoref{sec:rpds-to-crpds}.

\begin{align*}
  \mkCCPDS(M) &= f\text{, where }
  \\
  M &= (\ControlStates,\StackAlpha,\transfunction,\qstate_0)
  \\
  f(
   \overbrace{\DSStates,\DSFrames,\DSIEdges,\dsstate_0}^G
   ) &= (\DSStates',\DSFrames,\DSIEdges',\dsstate_0) \text{, where }
  \\
  \DSStates' &= \DSStates \union \setbuild{ \dsstate' }{ 
    \dsstate \in \DSStates 
    \text{ and }  
    \dsstate \mathrel{\underset{M}{\RPDTrans}} \dsstate'
  }
  \union \set{\dsstate_0}
  \\
  \DSEdges' &= \DSEdges \union \setbuild{ \dsstate \pdedge^{\hat K, \stackact} \dsstate' }{ 
    \dsstate \in \DSStates 
    \text{ and }  
    \dsstate \mathrel{\overset{\hat K,\stackact}{\underset{M}{\RPDTrans}}} \dsstate'
    \text{ and }
    \Stacks(G)(\dsstate) \cap \hat K \neq \emptyset
  }
  \text.
\end{align*}

\newcommand*{\gadget}{\mathit{gadget}}
The function $\Stacks : \mathbb{CCPDS} \to \DSStates \to 1\mathbb{NSA}$ performs the stack extraction with a construction that inserts the stack-checking NFA for each reduction rule after it has run the cursor to the bottom of the stack, and continues from the final states to the state dictated by the rule (added by meta-function $\gadget$).
All the stack manipulations from $\dsstate_0$ to $\dsstate$ are $\epsilon$-transitions in terms of reading input; only once control reaches $\dsstate$ do we check if the stack is the same as the input, which captures the notion of a stack realizable at $\dsstate$.
Once control reaches $\dsstate$, we run down to the bottom of the stack again, and then match the stack against the input; complete matches are accepted.
To determine the bottom and top of the stack, we add distinct sentinel symbols to the stack alphabet, \textcent{} and \$.

\begin{align*}
\Stacks
(
 \overbrace{\DSStates, \StackAlpha, \DSEdges, \dsstate_0}^{G}
)
(\dsstate)
  &= 
  (\DSStates \cup \DSStates', \StackAlpha, \StackAlpha \cup \set{\text\textcent, \$}, \transfunction, \dsstate_{\text{start}}, \set{\dsstate_{\text{final}}} ) \text{, where }
  \\
  \dsstate_{\text{start}}, \dsstate_{\text{down}}, \dsstate_{\text{check}}, \dsstate_{\text{final}} \text{ fresh, and } &\DSStates', \transfunction \text{ the smallest sets such that} \\
  \set{\dsstate_{\text{start}}, \dsstate_{\text{down}}, \dsstate_{\text{check}}, \dsstate_{\text{final}}} &\subseteq \DSStates' \\
  (\dsstate_{\text{start}}, \epsilon, \epsilon, \cdot, \text\textcent_+, \dsstate_0) &\in \transfunction
  \\
  \gadget(\dsstate', \hat K, \stackchar_\pm, \dsstate'') \sqsubseteq (\transfunction, \DSStates')
  & \text{ if } (\dsstate', \hat K, \stackchar_\pm, \dsstate'') \in \DSEdges
 \\ 
 (\dsstate, \epsilon, \epsilon, \cdot, \$_+, \dsstate_{\text{down}}) &\in \transfunction
 \\ 
 (\dsstate_{\text{down}}, \epsilon, \epsilon, \downarrow, \epsilon, \dsstate_{\text{down}}) &\in \transfunction
 \\ 
 (\dsstate_{\text{down}}, \text\textcent, \epsilon, \uparrow, \epsilon, \dsstate_{\text{check}}) &\in \transfunction
 \\ 
 (\dsstate_{\text{check}}, a, a, \uparrow, \epsilon, \dsstate_{\text{check}}) &\in \transfunction \text{, } a \in \StackAlpha\cup \set{\epsilon}
 \\ 
 (\dsstate_{\text{check}}, \$, \epsilon, \uparrow, \epsilon, \dsstate_{\text{final}}) &\in \transfunction 
\end{align*}

The first rule changes the initial state to initialize the stack with the ``bottom'' sentinel.
Every reduction of the CPDS is given the gadget discussed above and explained below.
The last five rules are what implement the final checking of stack against input.
When at the state we are recognizing realizable stacks for, the machine will have the cursor at the top of the stack, so we push the ``top'' sentinel before moving the cursor all the way down to the bottom.
When $\dsstate_{\text{down}}$ finds the bottom sentinel at the cursor, it moves the cursor past it to start the exact matching in $\dsstate_{\text{check}}$.
If the cursor matches the input exactly, we consume the input and move the cursor past the matched character to start again.
When $\dsstate_{\text{check}}$ finds the top sentinel, it transitions to the final state; if the input is not completely exhausted, the machine will get stuck and not accept.

\begin{align*}
\gadget(\dsstate, \hat K, \stackchar_\pm, \dsstate') &= (\transfunction', Q \cup \set{\qstate_{\text{down}}, \qstate_{\text{out}}})\text{ where} \\
 \text{Let } N = (Q, \Sigma, \transfunction, \qstate_0, F) &\text{ be a fresh NFA recognizing } \hat K
 \text{, } \qstate_{\text{down}}, \qstate_{\text{out}} \text{ fresh states}
 \\ 
 (\qstate, a, \epsilon, \uparrow, \epsilon, \qstate') \in \transfunction' &\text{ if } (\qstate, a, \qstate') \in \transfunction \text{, } a \in \Sigma
 \\ 
 (\qstate, \epsilon, \epsilon, \cdot, \epsilon, \qstate') \in \transfunction' &\text{ if } (\qstate, \epsilon, \qstate') \in \transfunction
 \\ 
 (\qstate, \$, \epsilon, \cdot, \$_-, \qstate_{\text{out}}) \in \transfunction' &\text{ if } \qstate \in F
 \\ 
 (\qstate_{\text{out}}, \epsilon, \epsilon, \cdot, \stackchar_\pm, \dsstate') \in \transfunction'
 \\ 
 (\dsstate, \epsilon, \epsilon, \cdot, \$_+, \qstate_{\text{down}}) \in \transfunction'
 \\ 
 (\qstate_{\text{down}}, \epsilon, \epsilon, \downarrow, \epsilon, \qstate_{\text{down}}) \in \transfunction'
 \\ 
 (\qstate_{\text{down}}, \text\textcent, \epsilon, \uparrow, \epsilon, \qstate_0) \in \transfunction'
\end{align*}

We explain each rule in order.
When the NFA that recognizes $\hat K$ consumes a character, the stack automaton should similarly read the character on the stack and move the cursor along.
If the NFA makes an $\epsilon$-transition, the stack automaton should also, without moving the stack cursor.
When this sub-machine $N$ is in a final state, the cursor should be at the top of the stack (if indeed it matched), so we pop off the top sentinel and proceed to do the stack action the IPDS does when transitioning to the next state.
The last three rules implement the same ``run down to the bottom'' gadget used before, when matching the stack against the input.

Finally, we can show that states are reachable in a conditional pushdown system iff they are reached in their corresponding CCPDS.
Consider a map
\begin{equation*}
  \fCCPDS : \mathbb{CPDS} \to \mathbb{CCPDS}
\end{equation*}
such that given a conditional pushdown system $M = (\QStates, \StackAlpha, \transfunction, \qstate_0)$, its equivalent CCPDS is $\fCCPDS(M) = (\DSStates, \StackAlpha, \DSIEdges, \qstate_0)$ where $\DSStates$ contains reachable nodes:
\begin{equation*}
  \DSStates = \setbuild{\qstate}{(\qstate_0,\vect{}) \mathrel{\underset{M}{\PDTrans}^*} (\qstate,\acont)}
\end{equation*}
and the set $\DSIEdges$ contains reachable edges:
\begin{equation*}
  \DSIEdges = \setbuild{\qstate \pdedge^{\hat K, \stackact} \qstate'}{\qstate \mathrel{\overset{\hat K, \stackact}{\underset{M}{\RPDTrans}}} \qstate'}
\end{equation*}
\begin{theorem}[Computable reachability]\label{thm:ripds-to-icrpds}
  For all $M \in \mathbb{CPDS}$, $\fCCPDS(M) = \lfp(\mkCCPDS(M))$
\end{theorem}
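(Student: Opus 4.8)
The plan is to follow the template of the proof of Theorem~\ref{thm:mkCRPDS-correct}, since $\mkCCPDS$ is the conditional analogue of $\mkCRPDS$. First I would note that, because $\ControlStates$ is finite and $\transfunction \subseteq_\text{fin} \ControlStates \times \mathcal{REG}(\StackAlpha^*) \times \StackAlpha_\pm \times \ControlStates$ is finite, the CCPDSs under consideration form a finite lattice under componentwise inclusion; $\mkCCPDS(M)$ is monotone, since each application only accretes states and edges and the side-condition $\Stacks(G)(\dsstate) \cap \hat K \neq \emptyset$ is preserved as $G$ grows (enlarging $G$ enlarges the language $L(\Stacks(G)(\dsstate))$). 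Hence $\lfp(\mkCCPDS(M))$ exists and equals $f^n(\emptyset,\StackAlpha,\emptyset,\qstate_0)$ for some finite $n$, where $f = \mkCCPDS(M)$. The theorem then reduces to two inclusions: $\lfp(\mkCCPDS(M)) \subseteq \fCCPDS(M)$ (soundness) and $\fCCPDS(M) \subseteq \lfp(\mkCCPDS(M))$ (completeness).

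For soundness I would induct on $i$ to show $f^i(\emptyset,\StackAlpha,\emptyset,\qstate_0) \subseteq \fCCPDS(M)$. The base case is immediate because $\qstate_0$ is reachable via $\vect{}$. For the inductive step, write $G = f^i(\ldots)$: every state added at step $i+1$ is $\RPDTrans_M$-reachable from a state of $G$, which is root-reachable by hypothesis, and a newly added edge $\dsstate \pdedge^{\hat K,\stackact} \dsstate'$ satisfies $(\dsstate,\hat K,\stackact,\dsstate') \in \transfunction$ and $\Stacks(G)(\dsstate) \cap \hat K \neq \emptyset$. The crux is a \textbf{soundness lemma for the $\Stacks$ construction}: every string accepted by $\Stacks(G)(\dsstate)$ is a stack realizable at $\dsstate$ in the conditional pushdown system $G$ (the 1NSA replays $G$'s edges as input-free moves, discharging each edge's $\hat K$-condition with the embedded NFA gadget, and only then matches the remaining stack against the input). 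Since by hypothesis every edge of $G$ is a genuine root-reachable conditional edge of $M$, any stack realizable at $\dsstate$ via $G$ is also realizable at $\dsstate$ in $M$; choosing a witness $\acont \in \hat K$ yields $\dsstate \RPDTransOU{\hat K,\stackact}{M} \dsstate'$, so the edge lies in $\fCCPDS(M)$.

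For completeness I would run the ``first missing edge'' argument of Theorem~\ref{thm:mkCRPDS-correct}. Suppose, for contradiction, that some edge of $\fCCPDS(M)$ is absent from $\lfp(\mkCCPDS(M))$ (the argument for states is similar and easier). Among all root-reachable paths $e_1\cdots e_k$ of $M$ whose final edge lies in $\fCCPDS(M)\setminus\lfp(\mkCCPDS(M))$, choose one with $k$ minimal; every proper prefix $e_1\cdots e_j$ with $j<k$ is itself a valid root path ending in an edge of $\fCCPDS(M)$, so minimality forces $e_1,\dots,e_{k-1}\in\lfp(\mkCCPDS(M))$. Writing $e_k = (\dsstate,\hat K,\stackact,\dsstate')$, the prefix reaches $\dsstate$ (hence $\dsstate$ is a state of $\lfp(\mkCCPDS(M))$, being an endpoint of $e_{k-1}$) and realizes some stack $\acont$ there with $\acont\in\hat K$. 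Here I invoke the dual \textbf{completeness lemma for $\Stacks$}: if $\acont$ is realizable at $\dsstate$ using only edges present in a CCPDS $G$, then $\acont\in L(\Stacks(G)(\dsstate))$. Applying it with $G = \lfp(\mkCCPDS(M))$ gives $\Stacks(\lfp(\mkCCPDS(M)))(\dsstate)\cap\hat K \neq\emptyset$, so $\mkCCPDS(M)$ applied to $\lfp(\mkCCPDS(M))$ would add $e_k$, contradicting that it is a fixed point. The case $k=1$ (where $e_k$ starts at $\qstate_0$ and $\hat K$ must contain $\vect{}$) is the same, since $\Stacks(G)(\qstate_0)$ accepts the empty stack by construction.

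The main obstacle is the pair of $\Stacks$ lemmas invoked above: one must prove $L(\Stacks(G)(\dsstate))$ is \emph{exactly} the set of stacks realizable at $\dsstate$ in the CCPDS $G$. Each direction is a simulation between runs of the 1NSA and root-to-$\dsstate$ derivations of $G$, and the delicate point is the nested gadgets: every conditional edge of $G$ is discharged by running a fresh copy of an NFA for its $\hat K$ down the current stack and back up, interleaved with the sentinel bookkeeping that delimits the bottom and top of the stack, so the simulation invariant has to track the cursor position and the in-progress NFA state across these excursions. Once that lemma is established, effectiveness of the whole procedure --- and hence that $\lfp(\mkCCPDS(M))$ is genuinely computable, justifying the name --- follows from the cited facts that 1NSA languages are closed under intersection with regular languages and have decidable emptiness, which makes the side-condition $\Stacks(G)(\dsstate)\cap\hat K\neq\emptyset$ decidable at every step.
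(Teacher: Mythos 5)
Your proposal follows essentially the same route as the paper: existence of the fixed point from finiteness and monotonicity, soundness from the forward direction of a stack-machine correctness lemma (the paper's Lemma on $\Stacks$, proved via the gadget and down-spin lemmas you correctly identify as the delicate part), and completeness by a first-missing-edge contradiction using the reverse direction of that same lemma. The only cosmetic difference is that you state the exactness of $L(\Stacks(G)(\dsstate))$ as two separate lemmas where the paper proves both inclusions in one.
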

Proof in Appendix~\ref{sec:ripds-reach}.

\begin{corollary}[Realizable stacks of CPDSs are recognized by 1NSAs]
  For all $M = (\QStates, \StackAlpha, \transfunction, \qstate_0) \in \mathbb{CPDS}$,
  and $(\DSStates, \StackAlpha, \DSIEdges, \qstate_0) = \lfp(\mkCCPDS(M))$,
  $(\qstate_0,\vect{}) \mathrel{{\underset{M}{\PDTrans}}^*} (\qstate,\acont)$
  iff $\qstate \in \DSStates$ and $\Stacks(G)(\qstate)$ accepts $\acont$.
\end{corollary}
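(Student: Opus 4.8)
The plan is to derive this corollary from Theorem~\ref{thm:ripds-to-icrpds} together with a self-contained correctness argument for the $\Stacks$ construction. By Theorem~\ref{thm:ripds-to-icrpds}, $G = (\DSStates, \StackAlpha, \DSIEdges, \qstate_0) = \fCCPDS(M)$, so by the definition of $\fCCPDS$ the node set $\DSStates$ is exactly $\setbuild{\qstate}{(\qstate_0,\vect{}) \mathrel{\underset{M}{\PDTrans}^*} (\qstate, \acont) \text{ for some } \acont}$ and $\DSIEdges$ is exactly the set of root-reachable conditional edges $\qstate \pdedge^{\hat K,\stackact} \qstate'$ of $M$. Hence the side condition ``$\qstate \in \DSStates$'' is already equivalent to ``$\qstate$ is reachable via some stack,'' and it remains only to show: for every $\qstate \in \DSStates$, the automaton $\Stacks(G)(\qstate)$ accepts $\acont$ iff $(\qstate_0,\vect{}) \mathrel{\underset{M}{\PDTrans}^*} (\qstate, \acont)$.

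First I would establish a simulation invariant relating the ``pre-checking'' configurations of $\Stacks(G)(\qstate)$ to runs of $M$. Call an instantaneous description of the 1NSA \emph{pre-checking} if it is reachable from $(\dsstate_{\text{start}},[\epsilon,\epsilon],\acont)$ without yet having used the rule $(\dsstate, \epsilon, \epsilon, \cdot, \$_+, \dsstate_{\text{down}})$ that opens the checking phase. The invariant is: the pre-checking descriptions at a node $\dsstate \in \DSStates$ are precisely those of the form $(\dsstate, [\text\textcent\,\vec b, \epsilon], \acont)$ where, reading $\vec b$ as a stack $\acont'$, we have $(\qstate_0,\vect{}) \mathrel{\underset{M}{\PDTrans}^*} (\dsstate, \acont')$ via a run whose conditional edges all lie in $\DSIEdges$; moreover the cursor sits at the top and $\text\textcent$ is pinned at the bottom between gadget applications. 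The base case is the initialization rule $(\dsstate_{\text{start}}, \epsilon, \epsilon, \cdot, \text\textcent_+, \dsstate_0)$, which establishes an empty realizable stack at $\dsstate_0$. For the forward half I would induct on the number of $M$-steps: each CPDS edge $(\dsstate', \hat K, \stackchar_\pm, \dsstate'')$ used in a run from the root is root-reachable, hence lies in $\DSIEdges$, hence carries a $\gadget$ in $\Stacks(G)(\qstate)$; since the CPDS side condition $\acont' \in \hat K$ holds, the fresh embedded NFA $N$ recognizing $\hat K$ accepts $\acont'$ when threaded bottom-to-top over the frames (after pushing $\$$ to mark the top, descending to $\text\textcent$, and re-ascending), after which $\$$ is popped and $\stackchar_\pm$ performed, exactly mirroring the CPDS move; crucially none of these $N$-moves or cursor-repositioning moves consume input, so the residual input is unchanged. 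For the converse half I would induct on the number of completed gadget traversals in the 1NSA run, arguing that the shape of $\Stacks(G)(\qstate)$ forces every pre-checking run to decompose into such blocks (a one-way stack automaton can only inspect interior frames via the ``push $\$$, descend, ascend, pop $\$$'' pattern built into $\gadget$), and that each completed block witnesses both a satisfied side condition — because $N$ reached a final state over the then-current interior stack — and a single CPDS transition along an edge of $\DSIEdges \subseteq \transfunction$.

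Next I would dispatch the checking phase. Starting from a pre-checking description $(\qstate, [\text\textcent\,\vec b, \epsilon], \acont)$ with $\acont'$ coded by $\vec b$ and no input yet consumed, the rules $(\qstate, \epsilon, \epsilon, \cdot, \$_+, \dsstate_{\text{down}})$, the $\dsstate_{\text{down}}$ descent loop, the $\text\textcent$-to-$\dsstate_{\text{check}}$ move, the $\dsstate_{\text{check}}$ matching loop over $\StackAlpha$, and finally $(\dsstate_{\text{check}}, \$, \epsilon, \uparrow, \epsilon, \dsstate_{\text{final}})$ together reach $\dsstate_{\text{final}}$ iff the input $\acont$ equals $\acont'$ exactly: the top sentinel $\$$ ensures the \emph{entire} coded stack is compared, and $\dsstate_{\text{check}}$ reaches $\dsstate_{\text{final}}$ only if the input is exhausted at the very moment $\$$ appears at the cursor. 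Combining the invariant with this phase gives: $\Stacks(G)(\qstate)$ accepts $\acont$ iff there is a run of $M$ from $(\qstate_0,\vect{})$ to $(\qstate, \acont)$ using only edges of $\DSIEdges$; and since every root-reachable edge lies in $\DSIEdges$ and $\DSIEdges \subseteq \transfunction$, this is exactly $(\qstate_0,\vect{}) \mathrel{\underset{M}{\PDTrans}^*} (\qstate, \acont)$. Together with the characterization of $\DSStates$ above, this is the corollary.

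I expect the main obstacle to be the bookkeeping in the simulation invariant: proving that the interplay of the stack cursor, the two sentinels $\text\textcent$ and $\$$, and the embedded NFA $N$ composes correctly across an arbitrary chain of gadget applications — that the sentinels never drift out of position, that the cursor is at the top between blocks, and, most delicately, that no ``illegal'' 1NSA run (one not respecting this block structure) can sneak into $\dsstate_{\text{final}}$. This is precisely where one must exploit that a one-way stack automaton may push and pop only at the top, which pins down the block decomposition needed for the converse direction.
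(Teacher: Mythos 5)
Your proposal is correct and follows essentially the same route as the paper: it combines Theorem~\ref{thm:ripds-to-icrpds} (identifying $G$ with $\fCCPDS(M)$) with a phase-decomposed correctness argument for $\Stacks$ --- setup, a chain of gadget blocks each certified by a local gadget-correctness lemma, and a final stack-against-input checking phase --- which is precisely the structure of the paper's Lemma on stack machine correctness (\autoref{lem:stack-correct}) built on \autoref{lem:gadget}, \autoref{lem:downspin}, and \autoref{lem:check}. The bookkeeping difficulty you flag (sentinel placement, cursor position, and excluding non-block-structured runs in the converse direction) is exactly the part the paper's proof handles by observing that the gadgets are disjoint per edge and that $\dsstate_{\text{final}}$ is only reachable through the fixed checking sequence.
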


\subsection{Simplifying garbage collection in conditional pushdown systems}\label{sec:gc-pdcfa}

The decision problems on 1NSAs are computationally intractable in general, but luckily GC is a special problem where we do not need the full power of 1NSAs.
There are equally precise techniques at much lower cost,
and less precise techniques that can shrink the explored state space.\footnote{The added precision of GC with tighter working sets makes the state space comparison between the two approaches non-binary. Neither approach is clearly better in terms of performance.}
The transition relation we build does not enumerate all sets of addresses, but instead queries the graph for the sets of addresses it should consider in order to apply GC.
A fully precise method to manage the stack root addresses is to add the root addresses to the representation of each state, and update it incrementally.
The root addresses can be seen as the representation of $\hat K$ in edge labels, but to maintain the precision, the set must also distinguish control states.
This addition to the state space is an effective \emph{reification} of the stack filtering that conditional performs.

\begin{figure}
\figrule
  \centering
  \begin{align*}
  \atf_\expr((\hat P,\hat E), \hat H) &= ((\hat P',\hat E'), \hat H')
  \text{, where }
\end{align*}
\begin{align*}
  \hat E_+ &= \setbuild{ \triedge{(\apstate,A)}{\aphrame_+}{(\apstate',A\cup\touches(\aphrame))} }{
    \ipdcfato{\apstate}{A}{\aphrame_+}{\apstate'}
  }
  \\
  \hat E_\epsilon &= \setbuild{ \triedge{(\apstate,A)}{\epsilon}{(\apstate',A)} }{
    \ipdcfato{\apstate}{A}{\epsilon}{\apstate'} }
  \\
 \hat E_- &= 
  \big\{
    \triedge{(\apstate'',A)}{\aphrame_-}{(\apstate''',A')} 
  :
    \ipdcfato{\apstate''}{A}{\aphrame_-}{\apstate'''} \text{ and }  
    \triedge{(\apstate,A')}{\aphrame_+}{(\apstate',A)} \in \hat E \text{ and }
    \biedge{(\apstate',A)}{(\apstate'',A)} \in \hat H
    \big\}
  \\
  \hat E' &= {\hat E_+} \union {\hat E_\epsilon} \union {\hat E_-}
  \\
  \hat H_{\epsilon} &= \setbuild{ \biedge{\aopstate}{\aopstate''} }{
    \biedge{\aopstate}{\aopstate'} \in \hat H \text{ and } 
      \biedge{\aopstate'}{\aopstate''} \in \hat H
  }
  \\
  \hat H_{{+}{-}} &= \big\{
  \biedge{\aopstate}{\aopstate'''}
    : 
    \triedge{\aopstate}{\aphrame_+}{\aopstate'} \in \hat E
    \text{ and }
    \biedge{\aopstate'}{\aopstate''} \in \hat H 
    \\
    & \hspace{6.1em} \text{ and } 
    \triedge{\aopstate''}{\aphrame_-}{\aopstate'''} \in \hat E
    \big\}
  \\
  \hat H' &=   \hat H_{\epsilon} \union   \hat H_{{+}{-}} 
  \\
  \hat P' &= \hat P \union \setbuild{ \aopstate' }{ \triedge{\aopstate}{\stackact}{\aopstate'} } \cup \set{((\expr,\bot,\bot),\emptyset)}
  \text.
\end{align*}
  \caption{An \ecg{}-powered iteration function for pushdown garbage-collecting control-flow analysis}
  \label{fig:full-gc}
\figrule
\end{figure}

An approximative method is to not distinguish control states, but rather to traverse the graph backward through $\epsilon$-closure edges and push edges, and collect the root addresses through the pushed frames.
As more paths are discovered to control states, more stacks will be realizable there, which add more to the stack root addresses to consider as the relation steps forward.
For soundness, edges still must be labeled with the language of stacks they are valid for, since they can become invalid as more stacks reach control states.
Notice that the root sets of addresses are isomorphic to languages of stacks that have the given root set, so we can use sets of addresses as the language representation.

We consider both methods in turn, augmenting the compaction algorithm from \autoref{sec:pdcfa-eps}.
Each have program states that consist of the expression, environment, \emph{and} store; $\apstate \in \sa{PState} = \syn{Exp} \times \sa{Env} \times \sa{Store}$.
Since GC is a non-monotonic operation, stores cannot be shared globally without sacrificing the precision benefits of GC.
For the first method, program states additionally include the stack root set of addresses; we will call these ornamented program states, $\aopstate \in \sa{OPState} = \sa{PState} \times \PowSm{\sa{Addr}}$.
We show the non-worklist solution to computing reachability by
employing the function $\atf_{\expr}$ defined in
\autoref{fig:full-gc}.
In order to define the iteration function, we need a refactored
transition relation $\overset{A}{\underset{\stackact}{\apTo}}
\subseteq \sa{PState} \times \sa{PState}$, defined as follows:

{\small
\begin{align*}
  \ipdcfato{(\expr,\aenv,\astore)}{A}{\aphrame_+}{(\expr',\aenv',\astore')}
  &\text{ iff }
  \StackRoot(\acont) = A \text{ and } 
  \hat G(\expr,\aenv,\astore,\acont)
  \aTo
  (\expr',\aenv',\astore',\aphrame:\acont)
  \\
  \ipdcfato{(\expr,\aenv,\astore)}{A}{\aphrame_-}{(\expr',\aenv',\astore')}
  &\text{ iff }
  \StackRoot(\aphrame:\acont) = A \text{ and }
  \hat G(\expr,\aenv,\astore,\aphrame:\acont)
  \aTo
  (\expr',\aenv',\astore',\acont)
  \\
  \ipdcfato{(\expr,\aenv,\astore)}{A}{\epsilon}{(\expr',\aenv',\astore')}
  &\text{ iff }
  \StackRoot(\acont) = A \text{ and }
  \hat G(\expr,\aenv,\astore,\acont)
  \aTo
  (\expr',\aenv',\astore',\acont)
\end{align*}
}

\begin{theorem}[Correctness of GC specialization]\label{thm:gc-specialization}
  $\lfp(\atf_\expr)$ completely abstracts $\fCCPDS(\afIPDS'(\expr))$
\end{theorem}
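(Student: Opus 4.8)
The plan is to prove a tight node-and-edge correspondence between $\lfp(\atf_\expr)$ and the compact conditional pushdown system $\fCCPDS(\afIPDS'(\expr))$, and then unfold the definition of complete abstraction from it. Write $M = \afIPDS'(\expr)$, let $(\DSStates,\StackAlpha,\DSIEdges,\qstate_0)=\fCCPDS(M)$, and let $((\hat P,\hat E),\hat H)=\lfp(\atf_\expr)$. The first thing to observe is that the ornamented states here live in $\sa{OPState}=\sa{PState}\times\PowSm{\sa{Addr}}$ with $\sa{PState}=\syn{Exp}\times\sa{Env}\times\sa{Store}$, which is \emph{exactly} the control-state set $\QStates$ of $M$; so an ornamented state $(\qstate,A)$ is a CPDS control state paired with a candidate stack-root set. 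For $A\subseteq\sa{Addr}$ write $\hat K_A=\setbuild{\acont}{\StackRoot(\acont)=A}$ for the regular stack language that $\afIPDS'$ uses to label its rules (regular by the DFA gadget of \autoref{sec:gc-pdcfa}). The claim to establish is: (i) $(\qstate,A)\in\hat P$ iff $\qstate\in\DSStates$ and $\qstate$ is reachable in $M$ via some $\acont$ with $\StackRoot(\acont)=A$; (ii) $\triedge{(\qstate,A)}{\stackact}{(\qstate',A')}\in\hat E$ iff $\triedge{\qstate}{\hat K_A,\stackact}{\qstate'}\in\DSIEdges$ with $\qstate$ reachable via some such $\stackact$-compatible $\acont$ and $A'$ the stack root obtained after performing $\stackact$; and (iii) $\biedge{(\qstate,A)}{(\qstate',A)}\in\hat H$ iff there is a root-reachable net-empty-change path $\qstate\RPDTransOU{\vec{\stackact}}{M}\qstate'$ with $[\vec{\stackact}]=\epsilon$ along some stack of root $A$. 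Complete abstraction is then the image of this correspondence under the map that forgets ornaments by sending $A$ to $\hat K_A$. Two structural facts make this go through: $\StackRoot$ is additive over frames, $\StackRoot(\aphrame:\acont)=\touches(\aphrame)\cup\StackRoot(\acont)$; and $\aCollect$ only deletes store bindings unreachable from the root set, hence leaves $\StackRoot$ (a function of $\aenv$ and $\acont$ alone) and the applicability of subsequent transitions unchanged, so the refactored relation $\ipdcfato{\apstate}{A}{\stackact}{\apstate'}$ faithfully mirrors $(\aTo_{\mathrm{GC}})=(\aTo)\compose\aCollect$ restricted to stacks of root $A$.

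For the soundness direction ($\subseteq$) I would carry out a Kleene-iteration induction establishing the analogue, ornamented by $A$, of the $\invcrpdsp$ invariant for $\mkCRPDS'$: every edge ever placed in $\hat E$ or $\hat H$ is justified by an actual root-reachable computation of $M$ carrying the claimed ornament. The push and $\epsilon$ clauses are immediate from the definition of $\ipdcfato{\apstate}{A}{\stackact}{\apstate'}$ together with additivity of $\StackRoot$ (pushing $\aphrame$ sends ornament $A$ to $A\cup\touches(\aphrame)$), and the $\hat H_{\epsilon}$, $\hat H_{+-}$ clauses repeat the net-stack-change bookkeeping from the proof of Theorem~\ref{thm:eps-closure-correct}, now ornament-indexed. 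The only genuinely new clause is $\hat E_-$: a pop edge $\triedge{(\apstate'',A)}{\aphrame_-}{(\apstate''',A')}$ is created only when there is a matching push edge $\triedge{(\apstate,A')}{\aphrame_+}{(\apstate',A)}\in\hat E$ and an ECG edge $\biedge{(\apstate',A)}{(\apstate'',A)}\in\hat H$; by the induction hypothesis the push sits on a root-reachable path ending at a stack $\aphrame:\acont$ with $\StackRoot(\acont)=A'$ and $\StackRoot(\aphrame:\acont)=A$, and the ECG edge supplies a net-empty path from $\apstate'$ to $\apstate''$ along the \emph{same} stack $\aphrame:\acont$, so firing the pop lands at $\acont$, whose root is again $A'$ — exactly the ornament recorded at the matching push. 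Thus every edge of $\lfp(\atf_\expr)$ projects (via $A\mapsto\hat K_A$) onto a root-reachable conditional edge of $M$, i.e.\ onto an edge of $\DSIEdges$ by definition of $\fCCPDS$.

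For completeness ($\supseteq$) I would adapt the trace-correspondence lemma underlying the $\supseteq$ half of Theorem~\ref{thm:eps-closure-correct}: by induction on the length of a root-reachable trace $\qstate_0\RPDTransOU{\vec{\stackact}}{M}\qstate$ witnessed by stack $\acont$, there is a corresponding path in $\lfp(\atf_\expr)$ through ornamented states whose ornament at each point equals $\StackRoot$ of the stack there, and for every net-empty subtrace the corresponding ECG edge is present. Push and $\epsilon$ steps extend the path directly; at a pop step the inductive hypothesis supplies both the matching push edge (it occurs earlier, at the moment the now-popped frame was pushed) and the ECG edge between the post-push and pre-pop ornamented states — which carry the \emph{same} ornament since the stack underneath is untouched — so the $\hat E_-$ clause fires and yields the pop edge with the correct resulting ornament. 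Since $\fCCPDS(M)=\lfp(\mkCCPDS(M))$ by Theorem~\ref{thm:ripds-to-icrpds}, and every edge of $\DSIEdges$ lies on a root-reachable trace of $M$ labelled by some $\hat K_A$, each such edge is recovered; with the previous paragraph this gives the equality witnessing complete abstraction. (Existence of $\lfp(\atf_\expr)$ is the ornamented counterpart of \autoref{lem:termination}: $\hat P,\hat E,\hat H$ grow monotonically in a finite lattice because distinct stores and distinct root sets are kept in distinct ornamented states rather than joined.)

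The main obstacle is the ornament-correctness argument in the pop case, and it is subtle precisely because $\StackRoot$ is a union: after popping $\aphrame$ the correct new root is \emph{not} $A\setminus\touches(\aphrame)$ in general, since a lower frame may also touch an address of $\touches(\aphrame)$. The resolution is the design of $\hat E_-$ itself — $A'$ is not recomputed by subtraction but \emph{recorded} at push time as the root of the stack beneath the pushed frame, and the ECG's net-empty guarantee certifies that this stack is unchanged until the matching pop, so $A'$ is still correct when the pop fires. A secondary difficulty is that $\aCollect$ is non-monotone: a later iteration may enlarge a store or discover a new realizable stack (hence a new ornament $A$), changing which bindings $\aCollect$ would prune; one must check that keeping ornamented states distinct makes $\atf_\expr$ monotone on the intended lattice (so the least fixed point exists and the invariant above is preserved under iteration), which is really the delicate point of the termination lemma rather than of this theorem.
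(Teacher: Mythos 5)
Your proposal is correct and follows essentially the same route as the paper: one direction via an ornamented version of the $\invcrpdsp$ invariant preserved by each application of $\atf_\expr$ (the paper's $\invgamma$ and Lemma~\ref{lem:atf-inv}), the other via induction on root-reachable traces showing each carries over to an ornamented path whose ornament at every point is the $\StackRoot$ of the stack there, with balanced subtraces yielding the needed $\hat H$ edges, and both composed with Theorem~\ref{thm:ripds-to-icrpds}. Your discussion of why the pop ornament must be recorded at push time rather than recomputed by subtraction is a correct and welcome elaboration of a point the paper leaves implicit.
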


The approximative method changes the representation of edges in the graph to contain sets of addresses, $\hat E \in \sa{Edge} = \Pow{\sa{PState} \times \PowSm{\sa{Addr}} \times \sa{Frame}_\pm \times \sa{PState}}$.
We also add in a sub-fixed-point computation for $\att : (\sa{PState} \to \PowSm{\sa{Addr}}) \to (\sa{PState} \to \PowSm{\sa{Addr}})$, to traverse the graph and collect the union of all stack roots for stacks realizable at a state.
Although we show a non-worklist solution here (in \autoref{fig:approx-gc}) to not be distracting, this solution will not compute the same reachable states as a worklist solution due to the ever-growing stack roots at each state.
Only states in the worklist would need to be analyzed at the larger stack root sets.
In other words, the non-worklist solution potentially throws in more live addresses at states that would otherwise not need to be re-examined.

\begin{figure}
\figrule
  \centering
  \begin{align*}
  \atf_\expr'((\hat P,\hat E), \hat H) &= ((\hat P',\hat E'), \hat H')
  \text{, where }
\end{align*}
\begin{align*}
  \hat E_+ &= \setbuild{ \quadedge{\apstate}{A}{\aphrame_+}{\apstate'} }{
    A = {\mathcal R}(\apstate), 
    \ipdcfato{\apstate}{A}{\aphrame_+}{\apstate'}
  }
  \\
  \hat E_\epsilon &= \setbuild{ \quadedge{\apstate}{A}{\epsilon}{\apstate'} }{
    A = {\mathcal R}(\apstate), 
    \ipdcfato{\apstate}{A}{\epsilon}{\apstate'} }
  \\
 \hat E_- &= 
  \big\{
    \quadedge{\apstate''}{A}{\aphrame_-}{\apstate'''} 
  :
    A = {\mathcal R}(\apstate''), 
    \ipdcfato{\apstate''}{A}{\aphrame_-}{\apstate'''} \text{ and }  
    \\
    &\hspace{3.775cm}\quadedge{\apstate}{A'}{\aphrame_+}{\apstate'} \in \hat E \text{ and }
    \\
    &\hspace{3.775cm}\biedge{\apstate'}{\apstate''} \in \hat H
    \big\}
  \\
  \hat E' &= {\hat E_+} \union {\hat E_\epsilon} \union {\hat E_-}
  \\
  \hat H_{\epsilon} &= \setbuild{ \biedge{\apstate}{\apstate''} }{
    \biedge{\apstate}{\apstate'} \in \hat H \text{ and } 
      \biedge{\apstate'}{\apstate''} \in \hat H
  }
  \\
  \hat H_{{+}{-}} &= \big\{
  \biedge{\apstate}{\apstate'''}
    : 
    \quadedge{\apstate}{A}{\aphrame_+}{\apstate'} \in \hat E
    \text{ and }
    \biedge{\apstate'}{\apstate''} \in \hat H 
    \\
    & \hspace{6.1em} \text{ and } 
    \quadedge{\apstate''}{A'}{\aphrame_-}{\apstate'''} \in \hat E
    \big\}
  \\
  \hat H' &=   \hat H_{\epsilon} \union   \hat H_{{+}{-}} 
  \\
  \hat P' &= \hat P \union \setbuild{ \apstate' }{ \quadedge{\apstate}{A}{\stackact}{\apstate'} } \cup \set{(\expr,\bot,\bot)}
  \\
  \att({\mathcal R}) &= \lambda \apstate. \bigcup (\setbuild{\touches(\aphrame) \cup {\mathcal R}(\apstate')}{\quadedge{\apstate'}{A}{\aphrame_+}{\apstate} \in \hat E} \cup \setbuild{{\mathcal R}(\apstate')}{\biedge{\apstate'}{\apstate} \in \hat H})
  \\
  {\mathcal R} &= \lfp(\att)
  \text.
\end{align*}
  \caption{Approximate pushdown garbage-collecting control-flow analysis.}
  \label{fig:approx-gc}
\figrule
\end{figure}

This approximation is not an easily described introspective pushdown system since the root sets it uses depend on the iteration state --- particularly what frames have reached a state \emph{so far}, regardless of the stack filtering the original CPDS performs.
The regular sets of stacks acceptable at some state can be extracted
\emph{a posteriori} from the fixed point of the function $\atf_\expr'$
defined in \autoref{fig:approx-gc}, if so desired.
The next theorem follows from the fact that ${\mathcal R}(\apstate)
\supseteq A$ for any represented~$(\apstate, A)$.

\begin{theorem}[Approximate GC is sound]\label{thm:approx-gc-sound}
  $\lfp(\atf_\expr')$ approximates $\lfp(\atf_\expr)$.
\end{theorem}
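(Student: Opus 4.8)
The plan is to recover the approximate fixed point from the precise one by forgetting the stack-root ornament, and to do so by a fixed-point transfer argument. I would define a projection $\absmap : \sa{OPState} \to \sa{PState}$ with $\absmap(\apstate, A) = \apstate$, extend it to edges by $\absmap(\triedge{(\apstate,A)}{\stackact}{(\apstate',A')}) = \quadedge{\apstate}{A}{\stackact}{\apstate'}$ and to $\epsilon$-closure edges by $\absmap(\biedge{(\apstate,A)}{(\apstate',A')}) = \biedge{\apstate}{\apstate'}$, and then show that $\absmap$ carries $\lfp(\atf_\expr)$ (the analysis of \autoref{fig:full-gc}) into $\lfp(\atf_\expr')$ (that of \autoref{fig:approx-gc}) componentwise: writing $((\hat P,\hat E),\hat H)=\lfp(\atf_\expr)$ and $((\hat P',\hat E'),\hat H')=\lfp(\atf_\expr')$, I want $\absmap(\hat P)\subseteq \hat P'$, $\absmap(\hat H)\subseteq \hat H'$, and every image of an $\hat E$-edge to appear in $\hat E'$ after relaxing its label from $A$ to $\mathcal{R}(\apstate)\supseteq A$. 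Composing this inclusion with \autoref{thm:gc-specialization} would then give soundness of the approximate analysis against $\fCCPDS(\afIPDS'(\expr))$, but only the fixed-point inclusion needs a proof here.

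Two facts should do the work. First, garbage collection is monotone in its root set: enlarging $A$ to $A'\supseteq A$ makes $\aCollect$ retain a pointwise-larger store, so every step enabled under root set $A$ is still enabled under $A'$, with a pointwise-larger resulting store; reading $\ipdcfato{\apstate}{A}{\stackact}{\apstate'}$ as ``collecting with root set $A$ enables the step,'' this gives $\ipdcfato{\apstate}{A}{\stackact}{\apstate'}$ together with $A\subseteq A'$ a corresponding step at $A'$, up to enlarging the target store. Second, the inner fixed point $\mathcal{R}=\lfp(\att)$ accumulates exactly the increments the precise ornament undergoes: $\att$ unions in $\touches(\aphrame)$ across incoming push edges and propagates roots across $\epsilon$-closure edges, which mirrors the precise rules ($A\cup\touches(\aphrame)$ on a push, $A$ unchanged on an $\epsilon$-step and on a pop). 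Hence, as soon as the approximate graph contains the $\absmap$-image of a reachable ornamented state $(\apstate,A)$, its $\mathcal{R}(\apstate)$ dominates $A$ --- this is the fact flagged just before the theorem, $\mathcal{R}(\apstate)\supseteq A$ for every represented $(\apstate,A)$.

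Concretely, I would prove the inclusion by Kleene iteration, by a \emph{simultaneous} induction on the outer iteration count $n$: (i) $\absmap$ of the $n$-th iterate of $\atf_\expr$ from $\bot$ is contained, with labels relaxed to the current $\mathcal{R}$, in the $n$-th iterate of $\atf_\expr'$; and (ii) the $\mathcal{R}$ computed from that approximate iterate dominates every root set occurring in the precise iterate. The base case is trivial, since $((\expr,\bot,\bot),\emptyset)$ projects to $(\expr,\bot,\bot)$. For the step, a new precise edge arises from $\hat E_+$, $\hat E_\epsilon$, or $\hat E_-$: for a push or $\epsilon$-step, GC-monotonicity lifts the precise step at root set $A$ to one at $\mathcal{R}(\apstate)\supseteq A$, matching the side condition of the approximate $\hat E_+$ / $\hat E_\epsilon$ clause; for a pop, the precise clause's hypotheses (a matching push edge with label $A'$ and an $\epsilon$-closure edge) are an instance of the more permissive approximate $\hat E_-$ clause (a matching push edge with \emph{some} label, plus an $\epsilon$-closure edge). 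The $\hat H_\epsilon$ and $\hat H_{+-}$ cases merely compose edges and closure edges already handled, so they project directly. Since $\hat E$ and $\hat H$ grow along the outer iteration and $\att$ is monotone in them, $\mathcal{R}$ grows too, which preserves (ii); taking limits yields the three inclusions.

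The hard part will be the circularity between the approximate graph and $\mathcal{R}$: $\hat E'$ is built using $\mathcal{R}$, yet $\mathcal{R}$ is itself a fixed point over $\hat E'$ and its $\epsilon$-closure, so the ``$\mathcal{R}$ dominates'' invariant (ii) and the ``$\absmap$ embeds'' invariant (i) must be maintained together. The delicate checks will be that, at every outer step, $\mathcal{R}(\apstate)\supseteq A$ already holds for the root sets that feed the \emph{next} precise transition --- which forces the precise discipline ``add $\touches$ on pushes, keep the root on $\epsilon$-steps and pops'' to line up exactly with $\att$'s definition --- and that inflating a root set from $A$ to $\mathcal{R}(\apstate)$ via GC-monotonicity never drops a transition. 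A minor further point, already acknowledged in the text around \autoref{fig:approx-gc}, is that the correspondence holds only between the two non-worklist formulations; a worklist version would explore a different (still sound) set of states because its root sets keep growing, so I would not attempt to relate the two.
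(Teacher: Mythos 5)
Your proposal is correct and follows essentially the same route as the paper: the paper's proof is a one-line ``induct on paths in $\lfp(\atf_\expr)$'' backed by an invariant lemma whose content is exactly your two facts---that each approximate artifact dominates a precise one with $\mathcal{R}(\apstate)\supseteq A$ (the paper records this as $\apstate^\Gamma\sqsubseteq\apstate$ and a containment between the root sets), established by the same case analysis on push, pop, and $\epsilon$ edges. You are in fact more careful than the paper on two points it leaves implicit: the simulation must be stated as domination ($\apstate\sqsubseteq\apstate'$, since inflating the root set yields a pointwise-larger collected store) rather than the literal inclusion $\absmap(\hat P)\subseteq\hat P'$ you open with, and the circular dependency between $\hat E'$ and $\mathcal{R}=\lfp(\att)$ genuinely requires the simultaneous induction you describe.
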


The last thing to notice is that by disregarding the filtering, the stack root set can get larger and render previous GCs unsound, since more addresses can end up live than were previously considered.
Thus we label edges with the root set for which the GC was considered,
in order to not make false predictions.

\section{Implementing Introspective Pushdown Analysis with Garbage
  Collection}
\label{sec:implementation}

The reachability-based analysis for a pushdown system described in
the previous section requires two mutually-dependent pieces
of information in order to add another edge:

\begin{enumerate}
\item The topmost frame on a stack for a given control state $q$. This is
essential for \emph{return} transitions, as this frame should be
popped from the stack and the store and the environment of a caller
should be updated respectively.

\item Whether a given control state $q$ is reachable or not from the
initial state $q_0$ along realizable sequences of stack actions.
For example, a path from $q_0$ to $q$ along edges labeled ``push, pop, pop, push''
is not realizable: the stack is empty after the first pop, so the second pop
cannot happen---let alone the subsequent push.

%

\end{enumerate}

Knowing about a possible topmost frame on a stack and initial-state
reachability is enough for a classic pushdown reachability
summarization to proceed one step further, and we presented an
efficient algorithm to compute those in
Section~\ref{sec:ecg-worklist}.
However, to deal with the presence of an abstract GC in a
\emph{conditional} PDS, we add:

\begin{enumerate}
\item[3.] For a given control state $q$, what are the touched addresses of \emph{all} possible frames
that could happen to be \emph{on} the stack  at the moment the CPDS
is in the state $q$?
\end{enumerate}




The crucial addition to the algorithm is maintaining for each node
$\qstate'$ in the CRPDS a set of $\epsilon$-\emph{predecessors}, i.e.,
nodes $\qstate$, such that $\qstate \mathrel{\overset{\vec{\stackact}}{\underset{M}{\RPDTrans}}}
\qstate'$ and $[\vec{\stackact}] = \epsilon$.
In fact, only two out of three kinds
of transitions can cause a change to the set of
$\epsilon$-predecessors for a particular node $\qstate$: an addition of
an $\epsilon$-edge or a pop edge to the CRPDS.


One can notice a subtle mutual dependency between computation of
$\epsilon$-predecessors and top frames during the construction of a
CRPDS.
Informally:

\begin{itemize}
\item A \emph{top frame} for a state $q$ can be pushed as a direct
  predecessor (\eg, $q$ follows a nested let-binding), or as a direct
  predecessor to an $\epsilon$-predecessor (\eg, $q$ is in tail
  position and will return to a waiting let-binding).

\item When a new $\epsilon$-edge $\qstate \xrightarrow{\epsilon}
  \qstate'$ is added, all $\epsilon$-predecessors of $\qstate$ become
  also $\epsilon$-predecessors of $\qstate'$. 
  That is, $\epsilon$-summary edges are transitive.

\item When a $\stackchar_-$-pop-edge $\qstate
  \xrightarrow{\stackchar_-} \qstate'$ is added, new
  $\epsilon$-predecessors of a state $\qstate_1$ can be obtained by
  checking if $\qstate'$ is an $\epsilon$-predecessor of $\qstate_1$
  and examining all existing $\epsilon$-predecessors of $\qstate$,
  such that $\stackchar_+$ is their possible top frame: this situation
  is similar to the one depicted in the example above. 

\end{itemize}

The third component---the touched addresses of \emph{all} possible frames on the stack for a state
$\qstate$---is straightforward to compute with $\epsilon$-predecessors:
starting from $\qstate$, trace out only the edges which are labeled $\epsilon$
(summary or otherwise) or $\stackchar_+$.
The frame for any action $\stackchar_+$ in this trace is a possible stack
action.
Since these sets grow monotonically, it is easy to cache the results
of the trace, and in fact, propagate incremental changes to these
caches when new $\epsilon$-summary or $\stackchar_+$ nodes are
introduced.
This implementation strategy captures the approximative approach to performing GC, as discussed in the previous section.
%
%
Our implementation directly reflects the optimizations discussed above.

\section{Experimental Evaluation}
\label{sec:experiments}

A fair comparison between different families of analyses
should compare both precision and speed.
We have implemented a version $k$-CFA for a subset of R5RS Scheme and
instrumented it with a possibility to optionally enable pushdown
analysis, abstract garbage collection or both.
Our implementation source (in Scala) and benchmarks are available:
\begin{center}
\url{http://github.com/ilyasergey/reachability}
\end{center}

In the experiments, we have focused on the version of $k$-CFA with a
per-state store (\ie, \emph{without} widening), as in the presence of
single-threaded store, the effect of abstract GC is neutralized due to
merging. For non-widened versions of $k$-CFA, as expected, the fused
analysis does at least as well as the best of either analysis alone in
terms of singleton flow sets (a good metric for program
optimizability) and better than both in some cases.
Also worthy of note is the dramatic reduction in the size of the
abstract transition graph for the fused analysis---even on top of the
already large reductions achieved by abstract garbage collection and
pushdown flow analysis individually.
The size of the abstract transition graph is a good heuristic measure of the temporal
reasoning ability of the analysis, \eg, its ability to support model-checking
of safety and liveness
properties~\cite{mattmight:Might:2007:ModelChecking}.

\subsection{Plain $k$-CFA vs. pushdown $k$-CFA }
\label{sec:plain-k-cfa}

\begin{table}
\centering
\scriptsize

\newcommand{\spc}[1]{\phantom{a}{#1}\phantom{a}}
\newcommand{\spcc}[1]{\spc{\spc{#1}}}


\begin{tabular}{|l|c|c|c||c|c|c||c|c|c||c|c|c||c|c|c|}

\hline
\multirow{2}{*}{\spc{Program}} & 
\multirow{2}{*}{\spc{$\# e$}} & 
\multirow{2}{*}{\spc{$\#v$}} & 
\multirow{2}{*}{\spc{$k$}} &
\multicolumn{3}{c||}{\multirow{2}{*}{$k$-CFA}} &
\multicolumn{3}{c||}{\multirow{2}{*}{$k$-PDCFA}} &
\multicolumn{3}{c||}{\multirow{2}{*}{\spcc{$k$-CFA + GC}}} &
\multicolumn{3}{c|}{\multirow{2}{*}{\spc{$k$-PDCFA + GC}}} \\

&&&& \multicolumn{3}{c||}{} & \multicolumn{3}{c||}{} & \multicolumn{3}{c||}{}
& \multicolumn{3}{c|}{} \\

\hline \hline


\multirow{2}{*}{\spc{\texttt{mj09}}} & 
\multirow{2}{*}{19} & 
\multirow{2}{*}{8} &
0 &
83 & 107 & \spcc{4}  &
38 & 38 & \spc{4} &
36 & 39 & 4  &
33 & 32 & 4 \\
&&&
1 &
454 & 812 & 1 &
44 & 48 & 1 &
34 & 35 & 1 &
32 & 31 & 1 \\
\hline


\multirow{2}{*}{\spc{\texttt{eta}}} & 
\multirow{2}{*}{21} & 
\multirow{2}{*}{13} &
0 &
63 & 74 & 4 &
32 & 32 & 6 &
28 & 27 & 8 &
28 & 27 & 8 \\
&&&
1 &
33 & 33 & 8 &
32 & 31 & 8 &
28 & 27 & 8 &
28 & 27 & 8 \\
\hline


\multirow{2}{*}{\spc{\texttt{kcfa2}}} & 
\multirow{2}{*}{20} & 
\multirow{2}{*}{10} &
0 &
194 & 236  & 3 &
36 & 35 & 4 &
35 & 34 & 4 &
35 & 34 & 4 \\

&&&
1 &
970 & 1935 & 1 &
87 & 144 & 2 &
35 & 34  & 2 &
35 & 34  & 2  \\
\hline


\multirow{2}{*}{\spc{\texttt{kcfa3}}} & 
\multirow{2}{*}{25} & 
\multirow{2}{*}{13} &
0 &
272 & 327 & 4 &
58 & 63 & 5 &
53 & 52 & 5 &
53 & 52 & 5 \\

&&&
1 &
\spc{$>$ 32662} & \spc{$>$ 88548} & -- &
1761 & 4046 & 2  &
53 & 52 & 2 &
53 & 52 & 2 \\
\hline


\multirow{2}{*}{\spc{\texttt{blur}}} & 
\multirow{2}{*}{40} & 
\multirow{2}{*}{20} &
0 &
4686 & 7606 & 4  &
115 & 146 & 4 &
90 & 95 & 10 &
68 & 76 & 10 \\

&&&
1 &
123 & 149 & 10 &
94 & 101 & 10 &
76 & 82 & 10 &
75 & 81 & 10 \\
\hline


\multirow{2}{*}{\spc{\texttt{loop2}}} & 
\multirow{2}{*}{41} & 
\multirow{2}{*}{14} &
0 &
149 & 163 & 7 &
69 & 73 & 7 &
43 & 46 & 7 &
34 & 35  & 7  \\

&&&
1 &
$>$ 10867 & $>$ 16040 & -- &
411 & 525 & 3  &
151 & 163 & 3  &
145 & 156 & 3 \\
\hline


\multirow{2}{*}{\spc{\texttt{sat}}} & 
\multirow{2}{*}{51} & 
\multirow{2}{*}{23} &
0 &
3844 & 5547 & 4 &
545 & 773 & 4 &
\spc{1137} & \spc{1543} & 4 &
254 & 317 & 4 \\

&&&
1 &
$>$ 28432 & $>$ 37391 & -- &
\spc{12828} & \spc{16846} & 4 &
958 & 1314 & \spc{5} &
\spc{71} & \spc{73} & 10 \\
\hline


\end{tabular}

\vspace{0.75cm}

\caption{Benchmark results for toy programs. The first three columns
  provide the name of a benchmark, the number of expressions and
  variables in the program in the ANF, respectively. For each of eight
  combinations of pushdown analysis, $k \in \set{0, 1}$ and garbage
  collection on or off, the first two columns in a group show the
  number of \emph{control states} and transitions/CRPDS edges computed
  during the analysis (for both less is better). The third column
  presents the amount of \emph{singleton} variables, i.e, how many
  variables have a single lambda flow to them (more is
  better). Inequalities for some results of the plain $k$-CFA denote
  the case when the analysis explored more than $10^5$
  \emph{configurations} (\ie, control states coupled with
  continuations) or did not finish within 30 minutes. For such cases
  we do not report on singleton variables.  }
\label{fig:table-results}
\figrule
\end{table}

In order to exercise both well-known and newly-presented instances of
CESK-based CFAs, we took a series of \emph{small} benchmarks
exhibiting archetypal control-flow patterns (see
Table~\ref{fig:table-results}).
Most benchmarks are taken from the CFA literature: \texttt{mj09} is a
running example from the work of Midtgaard and Jensen designed to
exhibit a non-trivial return-flow
behavior~\cite{mattmight:Midtgaard:2007:Dissertation}, \texttt{eta}
and \texttt{blur} test common functional idioms, mixing closures and
eta-expansion, \texttt{kcfa2} and \texttt{kcfa3} are two worst-case
examples extracted from the proof of $k$-CFA's EXPTIME
hardness~\cite{dvanhorn:VanHorn-Mairson:ICFP08}, \texttt{loop2} is
an example from Might's dissertation that was used to demonstrate
the impact of abstract GC~\cite[Section
13.3]{mattmight:Might:2007:Dissertation}, \texttt{sat} is a
brute-force SAT-solver with backtracking.

\subsubsection{Comparing precision}
\label{sec:comparing-precision}

In terms of precision, the fusion of pushdown analysis and abstract
garbage collection substantially cuts abstract transition graph sizes
over one technique alone.

We also measure singleton flow sets as a heuristic metric for precision.
Singleton flow sets are a necessary precursor to optimizations such as
flow-driven inlining, type-check elimination and constant propagation.
It is essential to notice that for the experiments in
Table~\ref{fig:table-results} our implementation computed the sets of
\emph{values} (\ie, closures) assigned to each variable (as opposed to
mere \emph{syntactic lambdas}). This is why in some cases the results
computed by 0CFA appear to be better than those by 1CFA: the later one
may examine \emph{more} states with different environments, which
results in exploring more different values, whereas the former one
will just collapse all these values to a single
lambda~\cite{mattmight:Might:2010:mCFA}.
What is important is that for a fixed $k$ the fused analysis prevails
as the best-of- or better-than-both-worlds.

Running on the benchmarks, we have re-validated hypotheses about
the improvements to precision granted by both pushdown
analysis~\cite{mattmight:Vardoulakis:2010:CFA2} and abstract
garbage collection~\cite{mattmight:Might:2007:Dissertation}.
Table~\ref{fig:table-results} contains our detailed
results on the precision of the analysis. In order to make the
comparison fair, in the table we report on the numbers of
\emph{control states}, which do not contain a stack component and are
the nodes of the constructed CRPDS. In the case of plain $k$-CFA,
control states are coupled with stack pointers to obtain
\emph{configurations}, whose resulting number is significantly bigger.

The SAT-solving benchmark showed a dramatic improvement with the
addition of context-sensitivity.
Evaluation of the results showed that context-sensitivity provided
enough fuel to eliminate most of the non-determinism from the
analysis.

\subsubsection{Comparing speed}
\label{sec:comparing-speed}

\begin{figure*}
\centering
\includegraphics[scale=0.8]{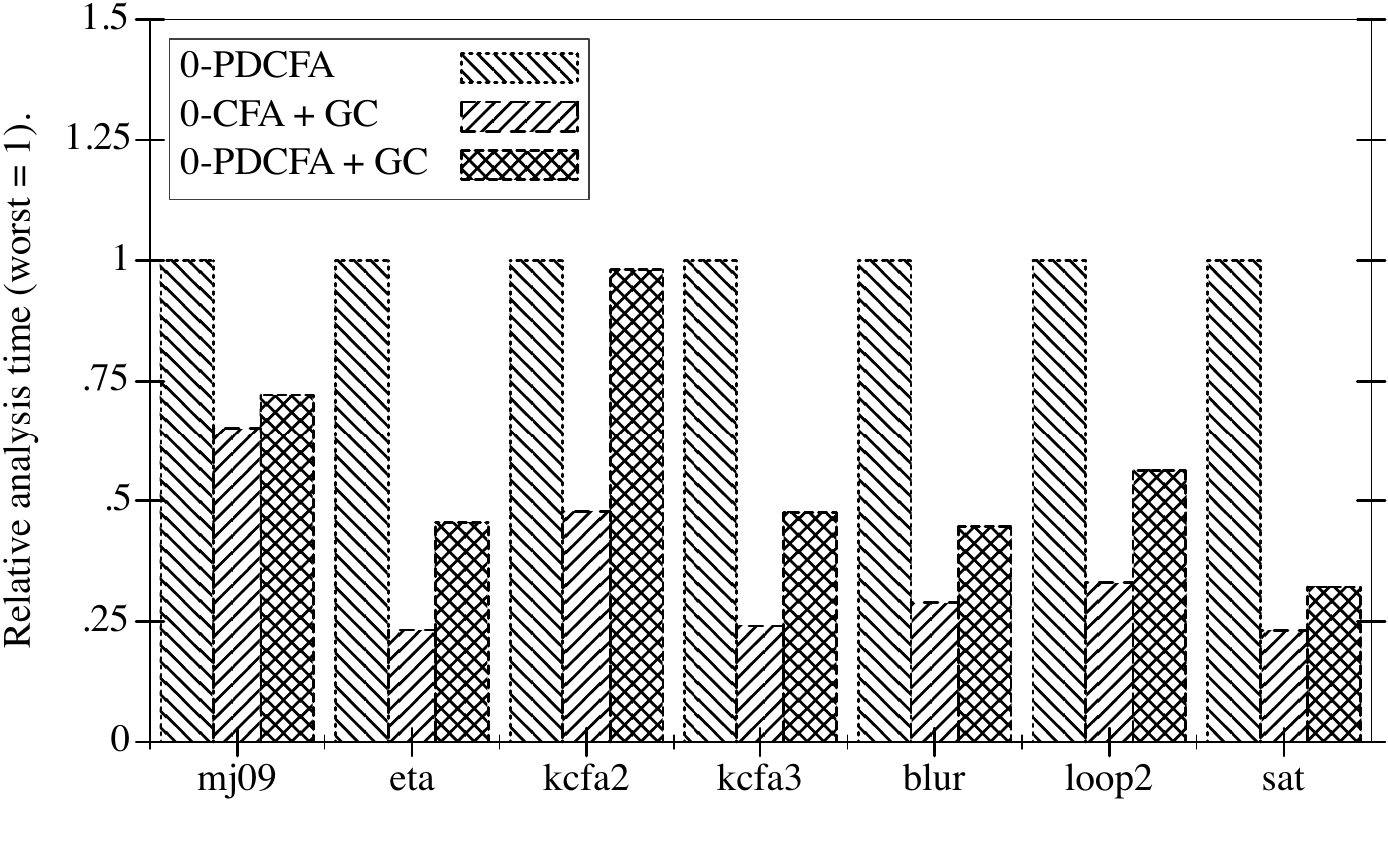}
\hfill
\includegraphics[scale=0.8]{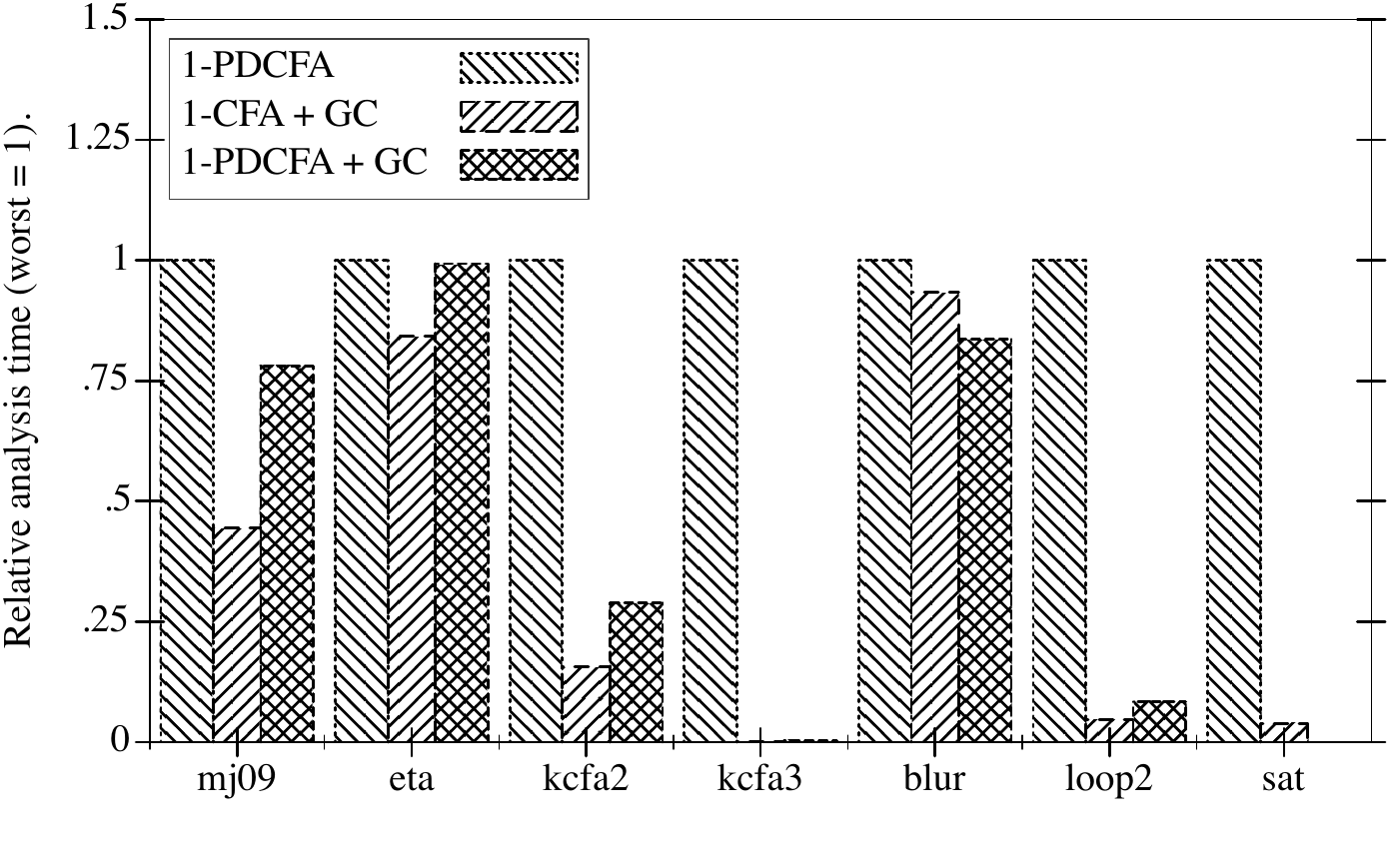}
\caption{Analysis times relative to worst (= 1) in class; smaller is
  better.  At the top is the monovariant 0CFA class of analyses, at
  the bottom is the polyvariant 1CFA class of analyses. (Non-GC
  $k$-CFA omitted.)}
\label{fig:execution-times}
\figrule
\end{figure*}

In the original work on CFA2, Vardoulakis and Shivers present
experimental results with a remark that the running time of the
analysis is proportional to the size of the reachable
states~\cite[Section 6]{mattmight:Vardoulakis:2010:CFA2}. There
is a similar correlation in our fused analysis,
but with higher variance due to the live address set computation GC performs.
Since most of the programs from our toy suite run for less than a
second, we do not report on the absolute time. Instead, the histogram
on Figure~\ref{fig:execution-times} presents normalized relative times
of analyses' executions. To our observation the pure machine-style
$k$-CFA is always significantly worse in terms of execution time than
either with GC or push-down system, so we excluded the plain,
non-optimized $k$-CFA from the comparison.

Our earlier implementation of a garbage-collecting pushdown
analysis~\cite{mattmight:Earl:2012:Introspective} did not fully
exploit the opportunities for caching $\epsilon$-predecessors, as
described in Section~\ref{sec:implementation}. This led to significant
inefficiencies of the garbage-collecting analyzer with respect to the
regular $k$-CFA, even though the former one observed a smaller amount of
states and in some cases found larger amounts of singleton
variables. After this issue has been fixed, it became clearly visible
that in all cases the GC-optimized analyzer is strictly faster than
its non-optimized pushdown counterpart.

Although caching of $\epsilon$-predecessors and $\epsilon$-summary
edges is relatively cheap, it is not free, since maintaining the
caches requires some routine machinery at each iteration of the
analyzer. This explains the loss in performance of the
garbage-collecting pushdown analysis with respect to the GC-optimized
$k$-CFA.

As it follows from the plot, fused analysis is always faster than
the non-garbage-collecting pushdown analysis, and about a fifth of the time, it beats
$k$-CFA with garbage collection in terms of performance.
When the fused analysis is slower than just a GC-optimized one, it is
generally not much worse than twice as slow as the next slowest
analysis.
Given the already substantial reductions in analysis times provided by
collection and pushdown analysis, the amortized penalty is a small and
acceptable price to pay for improvements to precision.

\subsection{Analyzing real-world programs with garbage-collecting pushdown CFA}
\label{sec:analysing-real-life}

\begin{table}
\centering
\scriptsize

\newcommand{\spc}[1]{\phantom{a}{#1}\phantom{a}}
\newcommand{\lspc}[1]{{#1}\phantom{a}}
\newcommand{\spcc}[1]{\spc{\spc{#1}}}
\newcommand{\mins}{$'$}
\newcommand{\secs}{$''$}
\newcommand{\linf}{\large{$\infty$}}


\begin{tabular}{|l|ccc|ccl|ccc|ccc|ccl|}

\hline
\multirow{2}{*}{\spc{Program}} & 
\multirow{2}{*}{\spc{$\# e$}} & 
\multirow{2}{*}{\spc{$\#v$}} & 
\multirow{2}{*}{\spc{!$\#v$}} &
\multicolumn{3}{c|}{\multirow{2}{*}{$k=0$, GC off}} &
\multicolumn{3}{c|}{\multirow{2}{*}{\spcc{$k=0$, GC on}}} &
\multicolumn{3}{c|}{\multirow{2}{*}{\spcc{$k=1$, GC off}}} &
\multicolumn{3}{c|}{\multirow{2}{*}{\spc{$k=1$, GC on}}} \\ 

&&&&&&&&&&&&&&&\\

\hline \hline


\spc{\texttt{primtest}} & 155 & 44 & 16 &
\spc{790} & \spc{955} & 14\secs &
\spc{113} & \spc{127} &  1\secs &
\spc{$>$43146} & \spc{$>$54679} & \spc{\linf} &
\spc{442} & \spc{562} & 13\secs \\

\spc{\texttt{rsa}} & 211 & 93 & 36 &
\spc{1267} & \spc{1507} & 23\secs &
\spc{355} & \spc{407} &  6\secs &
\spc{20746} & \spc{28895} & \spc{21\mins} &
\spc{926} & \spc{1166} & 28\secs \\

\spc{\texttt{regex}} & 344 & 150 & 44 &
\spc{943} & \spc{956} & \lspc{54\secs} &
\spc{578} & \spc{589} &  45\secs &
\spc{1153} & \spc{1179} & \spc{88\secs} &
\spc{578} & \spc{589} & 50\secs \\

\spc{\texttt{scm2java}} & \spc{1135} & 460 & 63 &
\spc{376} & \spc{375} & 13\secs &
\spc{376} & \spc{375} &  13\secs &
\spc{376} & \spc{375} & 14\secs &
\spc{376} & \spc{375} & \lspc{13\secs} \\

\hline

\end{tabular}

\vspace{0.75cm}

\caption{Benchmark results of PDCFA on real-world programs. The first four columns provide the name
  of a program, the number of expressions and variables in the
  program in the ANF, and the number of singleton variables revealed by the analysis (same in all cases). For each of four combinations of
  $k \in \set{0, 1}$ and garbage collection on or
  off, the first two columns in a group show the number of visited 
  control states and edges, respectively, and the third one shows absolute time of running the analysis (for both less is better). The results of the analyses are presented in minutes ($'$) or
  seconds ($''$), where {\linf} stands for an analysis, which has been interrupted due to the
  an execution time greater than 30 minutes.}
\label{fig:table-real}
\figrule
\end{table}

Even though our prototype implementation is just a proof of concept,
we evaluated it not on a suite of toy programs, tailored for
particular functional programming patterns, but on a set of real-world
programs. In order to set this experiment, we have chosen four
programs, dealing with numeric and symbolic computations:

\begin{itemize}

\item \texttt{primtest} -- an implementation of the
  probabilistic Fermat and Solovay-Strassen primality testing in
  Scheme for the purpose of large prime generation;

\item \texttt{rsa} -- an implementation of the RSA public-key
  cryptosystem;

\item \texttt{regex} -- an implementation of a regular expression
  matcher in Scheme using Brzozovski
  derivatives~\cite{Might-al:ICFP11,mattmight:Owens:2009:Derivative};

\item \texttt{scm2java} -- scm2java is a Scheme to Java compiler;

\end{itemize}

We ran our benchmark suite on a 2.7~GHz Intel Core~i7 OS~X machine
with 8 Gb RAM. Unfortunately, $k$-CFA without global stores timed out
on most of these examples (\ie, did not finish within 30 minutes), so
we had to exclude it from the comparison and focus on the effect of a
pushdown analyzer only. Table~\ref{fig:table-real} presents the
results of running the benchmarks for $k \in \set{0,1}$ with a garbage
collector on and off. Surprisingly, for each of the six programs,
those cases, which terminated within 30 minutes, found the same number
of global singleton variables.\footnote{Of course, the numbers of
  states explored for each program by different analyses were
  different, and there were variations in \emph{function parameters}
  cardinalities, which we do not report on here.} However, the numbers
of observed states and runtimes are indeed different in most of the
cases except \texttt{scm2java}, for which all the four versions of the
analysis were precise enough to actually evaluate the program:
happily, there was no reuse of abstract addresses, which resulted in
the absence of forking during the CRPDS construction. In other words,
the program \texttt{scm2java}, which used no scalar data but strings
being concatenated and was given a simple input, has been evaluated
\emph{precisely}, which is confirmed by the number of visited control
states and edges.

Time-wise, the results of the experiment demonstrate the general
positive effect of the abstract garbage collection in a pushdown
setting, which might improve the analysis performance by more than two
orders of magnitude.

\section{Discussion: Applications}
\label{sec:applications}

Pushdown control-flow analysis offers more precise control-flow
analysis results than the classical finite-state CFAs.
Consequently, introspective pushdown control-flow analysis improves flow-driven
optimizations (\eg, constant propagation, global register allocation,
inlining~\cite{mattmight:Shivers:1991:CFA}) by eliminating more of
the false positives that block their application.

The more compelling applications of pushdown control-flow analysis
are those which are difficult to drive with classical control-flow
analysis.
Perhaps not surprisingly, the best examples of such analyses are
escape analysis and interprocedural dependence analysis.
Both of these analyses are limited by a static analyzer's ability to
reason about the stack, the core competency of introspective pushdown control-flow
analysis.
(We leave an in-depth formulation and study of these
analyses to future work.)

\subsection{Escape analysis}
In escape analysis, the objective is to determine whether a
heap-allocated object is safely convertible into a stack-allocated
object.
In other words, the compiler is trying to figure out whether the frame
in which an object is allocated outlasts the object itself.
In higher-order languages, closures are candidates for escape
analysis.

Determining whether all closures over a particular \lamterm{} $\lam$
may be heap-allocated is straightforward: find the control states in
the compact rooted pushdown system in which closures over $\lam$ are being created,
then find all control states reachable from these states
over only $\epsilon$-edge and push-edge transitions.
Call this set of control states the ``safe'' set.
Now find all control states which are invoking a closure over $\lam$.
If any of these control states lies outside of the safe set, then
stack-allocation may not be safe; if, however, all invocations lie
within the safe set, then stack-allocation of the closure is safe.

\subsection{Interprocedural dependence analysis}
In interprocedural dependence analysis, the goal is to determine, for
each \lamterm{}, the set of resources which it may read or write when
it is called.
\citet{mattmight:Might:2009:Dependence} showed that if one has knowledge of the program
stack, then one can uncover interprocedural
dependencies.
We can adapt that technique to work with compact rooted pushdown systems.
For each control state, find the set of reachable control states along
only $\epsilon$-edges and pop-edges.
The frames on the pop-edges determine the frames which could have been
on the stack when in the control state.
The frames that are live on the stack determine the procedures that are live on the stack.
Every procedure that is live on the stack has a read-dependence on any
resource being read in the control state, while every procedure that is live
on the stack also has a write-dependence on any resource being written in
the control state.
In control-flow terms, this translates to ``if $f$ calls $g$ and $g$
accesses $a$, then $f$ also accesses $a$.''

\section{Related Work}
\label{sec:related}

The Scheme workshop presentation of PDCFA~\citep{dvanhorn:Earl2010Pushdown} is not archival, nor were there rigorous proofs of correctness.
The complete development of pushdown analysis from first principles stands as a
new contribution, and it constitutes an alternative to CFA2.
It goes beyond work on CFA2 by specifying specific mechanisms for reducing the complexity to polynomial time ($\mathcal{O}(n^6)$) as well.
\citet{local:vardoulakis-diss12} sketches an approach to regain polynomial time in his dissertation, but does not give a precise bound.
An immediate advantage of the complete development is its exposure of
parameters for controlling polyvariance and context-sensitivity.
An earlier version of this work appeared in ICFP 2012~\cite{mattmight:Earl:2012:Introspective}.
We also provide a reference implementation of control-state reachability in Haskell.
We felt this was necessary to shine a light on the ``dark corners'' in the
formalism, and in fact, it helped expose both bugs and implicit design
decisions that were reflected in the revamped text of this work.
The development of introspective pushdown systems is also more complete and
more rigorous.
We expose the critical regularity constraint absent from the ICFP 2012 work,
and we specify the implementation of control-state reachability and feasible paths for
conditional pushdown systems in greater detail.
More importantly, this work uses additional techniques to improve the
performance of the implementation and discusses those changes.

Garbage-collecting pushdown control-flow analysis draws on work in higher-order
control-flow analysis~\cite{mattmight:Shivers:1991:CFA}, abstract
machines~\cite{mattmight:Felleisen:1987:CESK} and abstract
interpretation~\cite{mattmight:Cousot:1977:AI}.

%
%
%
%
%
%

\paragraph{Context-free analysis of higher-order programs}
The motivating work
for our own is Vardoulakis and Shivers recent discovery of
  CFA2.
CFA2 is a table-driven summarization algorithm that exploits the balanced
nature of calls and returns to improve return-flow precision in a control-flow
analysis.
Though CFA2 exploits context-free languages, context-free languages are not
explicit in its formulation in the same way that pushdown systems are explicit in
our presentation of pushdown flow analysis.
With respect to CFA2, our pushdown flow analysis is also polyvariant/context-sensitive (whereas CFA2 is monovariant/context-insensitive), and it covers direct-style.

On the other hand, CFA2 distinguishes stack-allocated and
store-allocated variable bindings, whereas our formulation of pushdown
control-flow analysis does not: it allocates all bindings in the
store.
If CFA2 determines a binding can be allocated on the stack, that
binding will enjoy added precision during the analysis and is not
subject to merging like store-allocated bindings.
While we could incorporate such a feature in our formulation,
it is not necessary for achieving ``pushdownness,''
and in fact, it could be added to classical finite-state CFAs as well.
CFA2 has a follow-up that sacrifices its complete abstraction with the machine that only abstracts bindings in order to handle first-class control~\citep{dvanhorn:Vardoulakis2011Pushdown}.
We do not have an analogous construction since loss of complete abstraction was an anti-goal of this work.
We leave an in-depth study of generalizations of CFA2's method to introspection, polyvariance and other control operators to future work.

\paragraph{Calculation approach to abstract interpretation}

\citet{dvanhorn:Midtgaard2009Controlflow} 
systematically calculate
0CFA using the Cousot-style calculational approach to abstract
interpretation~\cite{dvanhorn:Cousot98-5} applied to an ANF
$\lambda$-calculus.
Like the present work, Midtgaard and Jensen start with the CESK
machine of \citet{mattmight:Flanagan:1993:ANF} and employ a
reachable-states model. 

The analysis is then constructed by composing well-known
Galois connections to reveal a 0CFA incorporating reachability.
The abstract semantics approximate the control stack component of the
machine by its top element.
The authors remark monomorphism materializes in two mappings: one
``mapping all bindings to the same variable,'' the other ``merging all
calling contexts of the same function.''
Essentially, the pushdown 0CFA of Section~\ref{sec:abstraction}
corresponds to Midtgaard and Jensen's analysis when the latter
mapping is omitted and the stack component of the machine is not
abstracted.
However, not abstracting the stack requires non-trivial mechanisms to compute the compaction of the pushdown system.

\paragraph{CFL- and pushdown-reachability techniques}
This work also draws on CFL- and pushdown-reachability
analysis~\cite{mattmight:Bouajjani:1997:PDA-Reachability,dvanhorn:Kodumal2004Set,mattmight:Reps:1998:CFL,mattmight:Reps:2005:Weighted-PDA}.
For instance, \ecg s, or equivalent variants thereof, appear in many
context-free-language and pushdown reachability algorithms.
For our analysis, we implicitly invoked these methods as subroutines.
When we found these algorithms lacking (as with their enumeration of
control states), we developed rooted pushdown system compaction.

CFL-reachability techniques have also been used to compute classical
finite-state abstraction CFAs~\cite{mattmight:Melski:2000:CFL} and
type-based polymorphic control-flow
analysis~\cite{mattmight:Rehof:2001:TypeBased}.
These analyses should not be confused with pushdown control-flow
analysis, which is computing a fundamentally more precise kind of CFA.
Moreover, Rehof and F\"ahndrich's method is cubic in the size of the
\emph{typed} program, but the types may be exponential in the size of
the program.
Finally, our technique is not restricted to typed programs.

\paragraph{Model-checking pushdown systems with checkpoints}
A pushdown system with checkpoints has designated finite automata for state/frame pairs.
If in a given state/frame configuration, and the automaton accepts the current stack, then execution continues.
This model was first created in \citet{EsparzaKS03} and describes its applications to model-checking programs that use Java's \texttt{AccessController} class, and performing better data-flow analysis of Lisp programs with dynamic scope, though the specific applications are not fully explored.
The algorithm described in the paper is similar to ours, but not ``on-the-fly,'' however, so such applications would be difficult to realize with their methods.
The algorithm discussed has multiple loops that enumerate all transitions within the pushdown system considered.
Again this is a non-starter for higher-order languages, since up-front enumeration would conservatively suggest that any binding called would resolve to any possible function.
This strategy is a sure-fire way to destroy precision and performance.

\paragraph{Meet-over-all-paths for conditional weighted pushdown systems}
A conditional pushdown system is essentially a pushdown system in which every state/frame pair is a checkpoint.
The two are easily interchangeable, but weighted conditional pushdown systems assign weights to reduction rules from a bounded idempotent semiring in the same manner as \citet{mattmight:Reps:2005:Weighted-PDA}.
The work that introduces CWPDSs uses them for points-to analysis for Java.
They solve the meet-over-all-paths problem by an incrementally translating a skeleton CFG into a WPDS and using WPDS++~\citep{ianjohnson:DBLP:conf/cav/LalR06} to discover more points-to information to fill in call/return edges.
The translation involves a heavy encoding and is not obviously correct.
The killer for its use for GC is that it involves building the product automaton of all the (minimized) condition automata for the system, and interleaving the system states with the automaton's states --- there are exponentially many such machines in our case, and even though the overall solution is incremental, this large automaton is pre-built.
It is not obvious how to incrementalize the whole construction, nor is it obvious that the precision and performance are not negatively impacted by the repeated invocation of the WPDS solver (as opposed to a work-set solution that only considers recently changed states).
The approach to incremental solving using first-order tools is an interesting approach that we had not considered.
Perhaps first-order and higher-order methods are not too far removed.
It is possible that these frameworks could be extended to request transitions --- or even further, checkpoint machines --- on demand in order to better support higher-order languages.
As we saw in this article, however, we needed access to internal data structures to compute root sets of addresses, and the ability to update a cache of such sets in these structures.
The marriage could be rocky, but worth exploring in order to unite the two communities and share technologies.

\paragraph{Model-checking higher-order recursion schemes}
There is terminology overlap with work by
\citet{mattmight:Kobayashi:2009:HORS} on model-checking higher-order
programs with higher-order recursion schemes, which are a
generalization of context-free grammars in which productions can take
higher-order arguments, so that an order-0 scheme is a context-free
grammar.
Kobyashi exploits a result by \citet{dvanhorn:Ong2006ModelChecking} which
shows that model-checking these recursion schemes is decidable (but
ELEMENTARY-complete) by transforming higher-order programs into
higher-order recursion schemes.

Given the generality of model-checking, Kobayashi's technique may be
considered an alternate paradigm for the analysis of
higher-order programs.
For the case of order-0, both Kobayashi's technique and our own
involve context-free languages, though ours is for control-flow
analysis and his is for model-checking with respect to a temporal
logic.
After these surface similarities, the techniques diverge.
In particular, higher-order recursions schemes are limited
to model-checking programs in the simply-typed 
lambda-calculus with recursion.







\section{Conclusion}

Our motivation was to further probe the limits of decidability
for pushdown flow analysis of higher-order programs
by enriching it with abstract garbage collection.
We found that abstract garbage collection broke
the pushdown model, but not irreparably so.
By casting abstract garbage collection in terms of
an introspective pushdown system and synthesizing
a new control-state reachability algorithm, we have 
demonstrated the decidability of fusing two
powerful analytic techniques.

As a byproduct of our formulation, it was also easy
to demonstrate how polyvariant/context-sensitive 
flow analyses generalize to a pushdown formulation,
and we lifted the need to transform to continuation-passing style
in order to perform pushdown analysis.

Our empirical evaluation is highly encouraging: it shows that the fused
analysis provides further large reductions in the size of the abstract
transition graph---a key metric for interprocedural control-flow precision.
And, in terms of singleton flow sets---a heuristic metric for optimizability---the fused 
analysis proves to be a ``better-than-both-worlds'' combination.

Thus, we provide a sound, precise and polyvariant introspective
pushdown analysis for higher-order programs.

%

\section*{Acknowledgments}
We thank the anonymous reviewers of ICFP 2012 and JFP for their
detailed reviews, which helped to improve the presentation and
technical content of the paper.
Tim Smith was especially helpful with his knowledge of stack automata.
This material is based on research sponsored by DARPA under the programs 
Automated Program Analysis for Cybersecurity (FA8750-12-2-0106) and 
Clean-Slate Resilient Adaptive Hosts (CRASH). 
The U.S. Government is authorized to reproduce and
distribute reprints for Governmental purposes notwithstanding any copyright
notation thereon.





\bibliographystyle{chicago}


\bibliography{mattmight,dvanhorn,ianj,local}

\begin{thebibliography}{}

\bibitem[\protect\citeauthoryear{Bouajjani, Esparza, and Maler}{Bouajjani
  et~al.}{1997}]{mattmight:Bouajjani:1997:PDA-Reachability}
Bouajjani, A., J.~Esparza, and O.~Maler (1997).
\newblock Reachability analysis of pushdown automata: Application to
  {Model-Checking}.
\newblock In {\em Proceedings of the 8th International Conference on
  Concurrency Theory}, CONCUR '97, pp.\  135--150. Springer-Verlag.

\bibitem[\protect\citeauthoryear{Cousot}{Cousot}{1999}]{dvanhorn:Cousot98-5}
Cousot, P. (1999).
\newblock The calculational design of a generic abstract interpreter.
\newblock In M.~Broy and R.~Steinbr\"{u}ggen (Eds.), {\em Calculational System
  Design}. NATO ASI Series F. IOS Press, Amsterdam.

\bibitem[\protect\citeauthoryear{Cousot and Cousot}{Cousot and
  Cousot}{1977}]{mattmight:Cousot:1977:AI}
Cousot, P. and R.~Cousot (1977).
\newblock Abstract interpretation: A unified lattice model for static analysis
  of programs by construction or approximation of fixpoints.
\newblock In {\em Conference Record of the Fourth ACM Symposium on Principles
  of Programming Languages}, pp.\  238--252. ACM Press.

\bibitem[\protect\citeauthoryear{Earl, Might, and Van{ }Horn}{Earl
  et~al.}{2010}]{dvanhorn:Earl2010Pushdown}
Earl, C., M.~Might, and D.~Van{ }Horn (2010).
\newblock Pushdown {Control-Flow} analysis of {Higher-Order} programs.
\newblock In {\em Workshop on Scheme and Functional Programming}.

\bibitem[\protect\citeauthoryear{Earl, Sergey, Might, and Van~Horn}{Earl
  et~al.}{2012}]{mattmight:Earl:2012:Introspective}
Earl, C., I.~Sergey, M.~Might, and D.~Van~Horn (2012).
\newblock Introspective pushdown analysis of higher-order programs.
\newblock In {\em Proceedings of the 17th ACM SIGPLAN International Conference
  on Functional Programming (ICFP 2012)}, ICFP '12, pp.\  177--188. ACM.

\bibitem[\protect\citeauthoryear{Esparza, Kucera, and Schwoon}{Esparza
  et~al.}{2003}]{EsparzaKS03}
Esparza, J., A.~Kucera, and S.~Schwoon (2003).
\newblock Model checking {LTL} with regular valuations for pushdown systems.
\newblock {\em Inf. Comput.\/}~{\em 186\/}(2), 355--376.

\bibitem[\protect\citeauthoryear{Felleisen and Friedman}{Felleisen and
  Friedman}{1987}]{mattmight:Felleisen:1987:CESK}
Felleisen, M. and D.~P. Friedman (1987).
\newblock A calculus for assignments in higher-order languages.
\newblock In {\em POPL '87: Proceedings of the 14th ACM SIGACT-SIGPLAN
  Symposium on Principles of Programming Languages}, pp.\  314+. ACM.

\bibitem[\protect\citeauthoryear{Flanagan, Sabry, Duba, and Felleisen}{Flanagan
  et~al.}{1993}]{mattmight:Flanagan:1993:ANF}
Flanagan, C., A.~Sabry, B.~F. Duba, and M.~Felleisen (1993, June).
\newblock The essence of compiling with continuations.
\newblock In {\em PLDI '93: Proceedings of the ACM SIGPLAN 1993 Conference on
  Programming Language Design and Implementation}, pp.\  237--247. ACM.

\bibitem[\protect\citeauthoryear{Ginsburg, Greibach, and Harrison}{Ginsburg
  et~al.}{1967}]{ianjohnson:one-way-sa:ginsburg:1967}
Ginsburg, S., S.~A. Greibach, and M.~A. Harrison (1967).
\newblock One-way stack automata.
\newblock {\em Journal of the ACM\/}~{\em 14\/}(2), 389--418.

\bibitem[\protect\citeauthoryear{Johnson and Van{ }Horn}{Johnson and Van{
  }Horn}{2013}]{local:hopa-summaries}
Johnson, J.~I. and D.~Van{ }Horn (2013).
\newblock Concrete semantics for pushdown analysis: The essence of
  summarization.
\newblock In {\em HOPA 2013: Workshop on higher-order program analysis}.

\bibitem[\protect\citeauthoryear{Kobayashi}{Kobayashi}{2009}]{mattmight:Kobayashi:2009:HORS}
Kobayashi, N. (2009, January).
\newblock Types and higher-order recursion schemes for verification of
  higher-order programs.
\newblock {\em SIGPLAN Not.\/}~{\em 44\/}(1), 416--428.

\bibitem[\protect\citeauthoryear{Kodumal and Aiken}{Kodumal and
  Aiken}{2004}]{dvanhorn:Kodumal2004Set}
Kodumal, J. and A.~Aiken (2004, June).
\newblock The set {constraint/CFL} reachability connection in practice.
\newblock In {\em PLDI '04: Proceedings of the ACM SIGPLAN 2004 Conference on
  Programming Language Design and Implementation}, pp.\  207--218. ACM.

\bibitem[\protect\citeauthoryear{Lal and Reps}{Lal and
  Reps}{2006}]{ianjohnson:DBLP:conf/cav/LalR06}
Lal, A. and T.~W. Reps (2006).
\newblock Improving pushdown system model checking.
\newblock In T.~Ball and R.~B. Jones (Eds.), {\em CAV}, Volume 4144 of {\em
  Lecture Notes in Computer Science}, pp.\  343--357. Springer.

\bibitem[\protect\citeauthoryear{Li and Ogawa}{Li and
  Ogawa}{2010}]{ianjohnson:DBLP:conf/pepm/LiO10}
Li, X. and M.~Ogawa (2010).
\newblock Conditional weighted pushdown systems and applications.
\newblock In J.~P. Gallagher and J.~Voigtl{\"{a}}nder (Eds.), {\em PEPM}, pp.\
  141--150. ACM.

\bibitem[\protect\citeauthoryear{Melski and Reps}{Melski and
  Reps}{2000}]{mattmight:Melski:2000:CFL}
Melski, D. and T.~W. Reps (2000, October).
\newblock Interconvertibility of a class of set constraints and
  context-free-language reachability.
\newblock {\em Theoretical Computer Science\/}~{\em 248\/}(1-2), 29--98.

\bibitem[\protect\citeauthoryear{Midtgaard}{Midtgaard}{2007}]{mattmight:Midtgaard:2007:Dissertation}
Midtgaard, J. (2007).
\newblock {\em Transformation, Analysis, and Interpretation of {Higher-Order}
  Procedural Programs}.
\newblock Ph.\ D. thesis, University of Aarhus.

\bibitem[\protect\citeauthoryear{Midtgaard and Jensen}{Midtgaard and
  Jensen}{2009}]{dvanhorn:Midtgaard2009Controlflow}
Midtgaard, J. and T.~P. Jensen (2009).
\newblock Control-flow analysis of function calls and returns by abstract
  interpretation.
\newblock In {\em ICFP '09: Proceedings of the 14th ACM SIGPLAN International
  Conference on Functional Programming}, pp.\  287--298. ACM.

\bibitem[\protect\citeauthoryear{Might}{Might}{2007}]{mattmight:Might:2007:Dissertation}
Might, M. (2007, June).
\newblock {\em Environment Analysis of {Higher-Order} Languages}.
\newblock Ph.\ D. thesis, Georgia Institute of Technology.

\bibitem[\protect\citeauthoryear{Might, Chambers, and Shivers}{Might
  et~al.}{2007}]{mattmight:Might:2007:ModelChecking}
Might, M., B.~Chambers, and O.~Shivers (2007, January).
\newblock Model checking via {Gamma-CFA}.
\newblock In {\em Verification, Model Checking, and Abstract Interpretation},
  pp.\  59--73.

\bibitem[\protect\citeauthoryear{Might, Darais, and Spiewak}{Might
  et~al.}{2011}]{Might-al:ICFP11}
Might, M., D.~Darais, and D.~Spiewak (2011).
\newblock Parsing with derivatives: a functional pearl.
\newblock In {\em ICFP~'11: Proceeding of the 16th ACM SIGPLAN international
  conference on Functional Programming}, pp.\  189--195. ACM.

\bibitem[\protect\citeauthoryear{Might and Manolios}{Might and
  Manolios}{2009}]{mattmight:Might:2009:APosteriori}
Might, M. and P.~Manolios (2009).
\newblock A posteriori soundness for non-deterministic abstract
  interpretations.
\newblock In {\em Proceedings of the 10th International Conference on
  Verification, Model Checking, and Abstract Interpretation}, VMCAI '09, pp.\
  260--274. Springer-Verlag.

\bibitem[\protect\citeauthoryear{Might and Prabhu}{Might and
  Prabhu}{2009}]{mattmight:Might:2009:Dependence}
Might, M. and T.~Prabhu (2009).
\newblock Interprocedural dependence analysis of higher-order programs via
  stack reachability.
\newblock In {\em Proceedings of the 2009 Workshop on Scheme and Functional
  Programming}.

\bibitem[\protect\citeauthoryear{Might and Shivers}{Might and
  Shivers}{2006a}]{mattmight:Might:2006:DeltaCFA}
Might, M. and O.~Shivers (2006a).
\newblock Environment analysis via {Delta-CFA}.
\newblock In {\em {C}onference {R}ecord of the 33rd {ACM} {SIGPLAN-SIGACT}
  {S}ymposium on {P}rinciples of {P}rogramming {L}anguages (POPL 2006)}, pp.\
  127--140. ACM.

\bibitem[\protect\citeauthoryear{Might and Shivers}{Might and
  Shivers}{2006b}]{mattmight:Might:2006:GammaCFA}
Might, M. and O.~Shivers (2006b).
\newblock Improving flow analyses via {Gamma-CFA}: Abstract garbage collection
  and counting.
\newblock In {\em Proceedings of the 11th {ACM} {SIGPLAN} {I}nternational
  {C}onference on {F}unctional {P}rogramming (ICFP 2006)}, pp.\  13--25. ACM.

\bibitem[\protect\citeauthoryear{Might, Smaragdakis, and Van~Horn}{Might
  et~al.}{2010}]{mattmight:Might:2010:mCFA}
Might, M., Y.~Smaragdakis, and D.~Van~Horn (2010).
\newblock Resolving and exploiting the k-{CFA} paradox: {I}lluminating
  functional vs. object-oriented program analysis.
\newblock In {\em PLDI '10: Proceedings of the 2010 ACM SIGPLAN Conference on
  Programming Language Design and Implementation}, PLDI '10, pp.\  305--315.
  ACM Press.

\bibitem[\protect\citeauthoryear{Ong}{Ong}{2006}]{dvanhorn:Ong2006ModelChecking}
Ong, C. H.~L. (2006).
\newblock On {Model-Checking} trees generated by {Higher-Order} recursion
  schemes.
\newblock In {\em 21st Annual IEEE Symposium on Logic in Computer Science
  (LICS'06)}, LICS, pp.\  81--90. IEEE.

\bibitem[\protect\citeauthoryear{Owens, Reppy, and Turon}{Owens
  et~al.}{2009}]{mattmight:Owens:2009:Derivative}
Owens, S., J.~Reppy, and A.~Turon (2009).
\newblock Regular-expression derivatives re-examined.
\newblock {\em Journal of Functional Programming\/}~{\em 19\/}(02), 173--190.

\bibitem[\protect\citeauthoryear{Rehof and F\"{a}hndrich}{Rehof and
  F\"{a}hndrich}{2001}]{mattmight:Rehof:2001:TypeBased}
Rehof, J. and M.~F\"{a}hndrich (2001).
\newblock Type-based flow analysis: From polymorphic subtyping to
  {CFL}-reachability.
\newblock In {\em POPL '01: Proceedings of the 28th ACM SIGPLAN-SIGACT
  Symposium on Principles of Programming Languages}, pp.\  54--66. ACM.

\bibitem[\protect\citeauthoryear{Reps}{Reps}{1998}]{mattmight:Reps:1998:CFL}
Reps, T. (1998, December).
\newblock Program analysis via graph reachability.
\newblock {\em Information and Software Technology\/}~{\em 40\/}(11-12),
  701--726.

\bibitem[\protect\citeauthoryear{Reps, Schwoon, Jha, and Melski}{Reps
  et~al.}{2005}]{mattmight:Reps:2005:Weighted-PDA}
Reps, T., S.~Schwoon, S.~Jha, and D.~Melski (2005, October).
\newblock Weighted pushdown systems and their application to interprocedural
  dataflow analysis.
\newblock {\em Science of Computer Programming\/}~{\em 58\/}(1-2), 206--263.

\bibitem[\protect\citeauthoryear{Rounds}{Rounds}{1973}]{ianjohnson:rounds:complexity:1973}
Rounds, W.~C. (1973).
\newblock Complexity of recognition in intermediate level languages.
\newblock In {\em Switching and Automata Theory, 1973. SWAT '08. IEEE
  Conference Record of 14th Annual Symposium on}, pp.\  145--158.

\bibitem[\protect\citeauthoryear{Shivers}{Shivers}{1991}]{mattmight:Shivers:1991:CFA}
Shivers, O.~G. (1991).
\newblock {\em {Control-Flow} Analysis of {Higher-Order} Languages}.
\newblock Ph.\ D. thesis, Carnegie Mellon University.

\bibitem[\protect\citeauthoryear{Sipser}{Sipser}{2005}]{mattmight:Sipser:2005:Theory}
Sipser, M. (2005, February).
\newblock {\em Introduction to the Theory of Computation\/} (2 ed.).
\newblock Cengage Learning.

\bibitem[\protect\citeauthoryear{Van~Horn and Mairson}{Van~Horn and
  Mairson}{2008}]{dvanhorn:VanHorn-Mairson:ICFP08}
Van~Horn, D. and H.~G. Mairson (2008).
\newblock Deciding \(k\){CFA} is complete for {EXPTIME}.
\newblock In {\em ICFP '08: Proceeding of the 13th ACM SIGPLAN International
  Conference on Functional Programming}, pp.\  275--282. ACM.

\bibitem[\protect\citeauthoryear{Van~Horn and Might}{Van~Horn and
  Might}{2012}]{mattmight:VanHorn:2012:AAM}
Van~Horn, D. and M.~Might (2012).
\newblock Systematic abstraction of abstract machines.
\newblock {\em Journal of Functional Programming\/}~{\em 22\/}(Special Issue
  4-5), 705--746.

\bibitem[\protect\citeauthoryear{Vardoulakis}{Vardoulakis}{2012}]{local:vardoulakis-diss12}
Vardoulakis, D. (2012).
\newblock {\em {CFA2: Pushdown Flow Analysis for Higher-Order Languages}}.
\newblock Ph.\ D. thesis, Northeastern University.

\bibitem[\protect\citeauthoryear{Vardoulakis and Shivers}{Vardoulakis and
  Shivers}{2010}]{mattmight:Vardoulakis:2010:CFA2}
Vardoulakis, D. and O.~Shivers (2010).
\newblock {CFA2}: A {Context-Free} approach to {Control-Flow} analysis.
\newblock In A.~D. Gordon (Ed.), {\em Programming Languages and Systems},
  Volume 6012 of {\em Lecture Notes in Computer Science}, Chapter~30, pp.\
  570--589. Springer Berlin Heidelberg.

\bibitem[\protect\citeauthoryear{Vardoulakis and Shivers}{Vardoulakis and
  Shivers}{2011}]{dvanhorn:Vardoulakis2011Pushdown}
Vardoulakis, D. and O.~Shivers (2011).
\newblock Pushdown flow analysis of first-class control.
\newblock In {\em Proceedings of the 16th ACM SIGPLAN International Conference
  on Functional Programming}, ICFP '11, pp.\  69--80. ACM.

\bibitem[\protect\citeauthoryear{Wright and Jagannathan}{Wright and
  Jagannathan}{1998}]{mattmight:Wright:1998:Polymorphic}
Wright, A.~K. and S.~Jagannathan (1998, January).
\newblock Polymorphic splitting: An effective polyvariant flow analysis.
\newblock {\em ACM Transactions on Programming Languages and Systems\/}~{\em
  20\/}(1), 166--207.

\end{thebibliography}

\appendix

\section{Full Proofs}
\label{proofs}

\subsection{Pushdown reachability}

\noindent{}Proof of \autoref{thm:eps-closure-correct}.
The space $\mathit{ICRPDS}$ is further constrained than stated in the main article:
\begin{equation*}
  \mathit{ICRPDS} = \setbuild{
((S,E),H,(\Delta S, \Delta E, \Delta H))
}{
\begin{array}{l}
\bigcup\setbuild{\set{\dsstate,\dsstate'}}{\biedge{\dsstate}{\dsstate'}\in H} \subseteq S\text,\\
\Delta S \cap S = \emptyset \text{, } \Delta E \cap E = \emptyset\text{, and } \Delta H \cap H = \emptyset
\end{array}}
\end{equation*}

For this section we assume
\begin{align*}
  M &= (\QStates, \StackAlpha, \transfunction, \qstate_0) \in \mathbb{RPDS} \\
  G &= ((S, E), H, (\Delta S, \Delta E), \Delta H) \in \mathbb{CRPDS} \text{ where } (S,E) \subseteq (\QStates, \transfunction) \\
  \text{ and } &\qstate_0 = \dsstate_0
\end{align*}

Let $\mathbb{ORD}$ be the class of ordinal numbers.
We define a termination measure on the fixed-point computation of $\mkCRPDS'((\QStates,\StackAlpha,\_,\_))$, $d : \mathbb{CRPDS} \to \mathbb{ORD}$.
\begin{equation*}
  d((S,E),H, (\Delta S, \Delta E, \Delta H)) = (2^{|\QStates|^2\cdot|\StackAlpha|} - |E|)\omega + (2^{|\QStates|^2} - |H|)
\end{equation*}
\begin{lemma}[Termination]\label{lem:termination}
 Either $G = \mkCRPDS'(M)(G)$ or $d(\mkCRPDS'(M)(G)) \prec d(G)$
\end{lemma}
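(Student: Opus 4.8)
The plan is to treat $d$ as a two-level ordinal rank and to show that each application of $\mkCRPDS'(M)$ either leaves $G$ unchanged or strictly lowers this rank; since the ordinals below $\omega^2$ are well-ordered, this is exactly what is needed for the fixed-point iteration of Figure~\ref{fig:mkcompact-ecg} to terminate. First I would confirm that $d$ is well defined with values in a well-ordered set: for every reachable $G$, $\DSEdges \subseteq \ControlStates \times \StackAlpha_\pm \times \ControlStates$ and $H \subseteq \ControlStates \times \ControlStates$ are finite and bounded by the fixed quantities $2^{|\ControlStates|^2\cdot|\StackAlpha|}$ and $2^{|\ControlStates|^2}$ (the exact constants are immaterial, only that they are finite upper bounds on $|\DSEdges|$ and $|H|$), so $d(G) = a\omega + b$ with $a,b \in \Nats$, an ordinal below $\omega^2$. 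The arithmetic facts I rely on are that $a'\omega + b' \prec a\omega + b$ whenever $a' < a$, for any $b,b' < \omega$, and $a\omega + b' \prec a\omega + b$ whenever $b' < b$: a strict gain in $|\DSEdges|$ dominates whatever happens to $H$, and when $|\DSEdges|$ is held fixed a strict gain in $|H|$ already forces the drop.

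Next I would record the monotonicity of a single step. Reading off Figure~\ref{fig:mkcompact-ecg}, the outputs satisfy $\DSEdges' = \DSEdges \union \Delta E$ and $H' = H \union \Delta H$, hence $\DSEdges' \supseteq \DSEdges$ and $H' \supseteq H$; so the $\omega$-coefficient of $d$ never increases, and when it stays equal neither does the finite part. The core of the proof is then a case split on the work-sets of $G$, using the $\mathit{ICRPDS}$ disjointness invariants $\Delta E \cap \DSEdges = \emptyset$ and $\Delta H \cap H = \emptyset$. If $\Delta E \neq \emptyset$, then $\DSEdges' = \DSEdges \union \Delta E \supsetneq \DSEdges$, so $|\DSEdges'| > |\DSEdges|$ and $d(\mkCRPDS'(M)(G)) \prec d(G)$. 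If $\Delta E = \emptyset$ but $\Delta H \neq \emptyset$, then $\DSEdges' = \DSEdges$ while $H' = H \union \Delta H \supsetneq H$, so $|H'| > |H|$ and again $d$ strictly drops. In the remaining case $\Delta E = \Delta H = \emptyset$ I would show $G = \mkCRPDS'(M)(G)$: with both empty, each $\Delta E_i$ and $\Delta H_i$ for $i \ge 1$ is a union over an empty index set and so is empty, and one is left to check that the sprout contribution $(\Delta E_0,\Delta H_0) = \Union_{\dsstate \in \Delta S}\fsprout_M(\dsstate)$ is already contained in $\DSEdges$ and $H$ and that $\Delta S$ adds no new state, whence $\DSStates' = \DSStates$, $\DSEdges' = \DSEdges$, $H' = H$ and every output work-set is again empty.

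The main obstacle is precisely this last case: with only a two-level rank, a step that merely shuffles the state work-set $\Delta S$ without touching $\DSEdges$ or $H$ is not excluded by $d$ alone, so I must argue such a step cannot occur at a genuinely non-fixed point. I would handle it by strengthening the $\mathit{ICRPDS}$ well-formedness predicate so that it additionally asserts $\dsstate_0 \in \DSStates$ and that every state recorded in $G$ has already been sprouted --- its $\transfunction$-push- and $\epsilon$-edges lie in $\DSEdges$, and each such $\epsilon$-edge has its corresponding $\epsilon$-closure edge in $H$. These properties are preserved by $\mkCRPDS'(M)$ and can be established by a straightforward simultaneous induction carried alongside the present lemma; granting them, $\Delta S$ contributes nothing new in the $\Delta E = \Delta H = \emptyset$ case, so $\mkCRPDS'(M)(G) = G$ and the dichotomy is complete. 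Since $\prec$ on ordinals below $\omega^2$ is well-founded, the lemma then yields at once that $\lfp(\mkCRPDS'(M))$ exists and is reached after finitely many iterations.
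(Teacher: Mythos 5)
Your main line of argument is exactly the paper's: the same measure $d$ below $\omega^2$, the same case split on the work-sets, and the same appeal to the $\mathit{ICRPDS}$ disjointness conditions to conclude that a nonempty $\Delta E$ (resp.\ $\Delta H$) forces $|\DSEdges|$ (resp.\ $|H|$) to grow strictly, hence $d$ to drop. The paper's own proof is just these two cases plus the sentence ``if both $\Delta E$ and $\Delta H$ are empty, there are no additions made to $S$, $E$ or $H$, meaning $G$ is a fixed point.'' You are right to be suspicious of that sentence: when $\Delta E = \Delta H = \emptyset$ but $\Delta S \neq \emptyset$, the contribution $(\Delta E_0,\Delta H_0) = \bigcup_{\dsstate\in\Delta S}\fsprout_M(\dsstate)$ can be nonempty, in which case the output work-sets are nonempty and $\DSStates$ grows while $\DSEdges$ and $H$ do not --- so the iterate changes but $d$ does not decrease. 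This happens, for instance, at the very first steps of the iteration from the bottom element, where $\Delta S = \set{\dsstate_0}$ and $\dsstate_0$ has not yet been sprouted. Identifying this residual case is a genuine contribution of your write-up over the paper's proof.

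However, the patch you propose does not close it. Your strengthened invariant asserts that every state \emph{recorded in $G$} --- i.e., every state in $\DSStates$ --- has already been sprouted; but the states whose sprouting causes the trouble are precisely those in $\Delta S$, and the well-formedness condition $\Delta S \cap \DSStates = \emptyset$ guarantees your invariant says nothing about them. A state sits in $\Delta S$ exactly because it has \emph{not} yet been sprouted, and its $\transfunction$-successors need not coincide with anything already in $\DSEdges$, so sprouting it can genuinely produce new work while leaving $|\DSEdges|$ and $|H|$ fixed. Two repairs do work: (a) prove the separate invariant that a state enters $\Delta S$ at the same iteration its incoming edge enters $\Delta E$ (since $\Delta S'$ is built from the targets of $\Delta E'$), so that at every reachable iterate past initialization $\Delta E = \emptyset$ already forces $\Delta S = \emptyset$ and the third case collapses to a true fixed point; or, more robustly, (b) refine the measure to a rank below $\omega^3$ by appending a term $(2^{|\ControlStates|} - |\DSStates|)$, so that a sprouting-only step still strictly decreases $d$ via the growth of $\DSStates$. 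Either repair preserves your (and the paper's) overall structure; as it stands, though, your third case rests on an invariant that cannot discharge it.
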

\begin{proof}
  If both $\Delta E$ and $\Delta H$ are empty, there are no additions made to $S$, $E$ or $H$, meaning $G$ is a fixed point.
  Otherwise, due to the non-overlap condition, one or both of $E$ and $H$ grow, meaning the ordinal is smaller.
\end{proof}
A corollary is that the fixed-point has empty $\Delta E$ and $\Delta H$.

\begin{lemma}[Key lemma for PDS reachability]\label{lem:mkcrpdsp-inv}
  If $\invcrpdsp(G)$ then $\invcrpdsp(\mkCRPDS'(M)(G))$
\end{lemma}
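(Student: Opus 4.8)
The plan is to show that one application of the step function $f = \mkCRPDS'(M)$ can never introduce an unsound ordinary edge or an unsound $\epsilon$-edge, given that $G$ contains none. Inspecting Figure~\ref{fig:mkcompact-ecg}, $f$ only ever grows $E$, $H$ and the three work-sets, and everything new is exactly the union of the outputs of $\fsprout$, $\faddpush$, $\faddpop$ and $\faddempty$ applied to the current contents of $\Delta S$, $\Delta E$ and $\Delta H$. So the proof splits in two. First, the components already certified by $\invcrpdsp(G)$ stay certified: the two predicates appearing in $\invcrpdsp$ — that $\dsstate \RPDTransOU{\stackact}{M} \dsstate'$, and that there is an action string $\vec{\stackact}$ with $[\vec{\stackact}] = \epsilon$ such that $(\dsstate,\acont) \mathrel{\underset{M}{\PDTrans}^*} (\dsstate',\acont)$ for every $\acont$ — mention only $M$, never $G$, hence are unaffected by what $f$ does to $G$. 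Second, it remains to check, one helper at a time, that sound inputs yield sound outputs.

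Before that case analysis I would record an auxiliary side-condition that rides along with $\invcrpdsp$ through the iteration and is preserved by the very same argument: every control state occurring in $S \cup \Delta S$ is root-reachable in $M$. This is immediate since $\dsstate_0$ is root-reachable, a sound edge $\triedge{\dsstate}{\stackact}{\dsstate'}$ forces its source $\dsstate$ to be root-reachable and its target $\dsstate'$ to be root-reachable one step later, and the freshly computed $\Delta S$ in Figure~\ref{fig:mkcompact-ecg} is precisely the set of edge-targets together with $\dsstate_0$. This side-condition is needed only for $\fsprout$, which, unlike the other three helpers, is fed a bare state rather than an already-sound edge.

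The case analysis then runs as follows. For $\fsprout_M(\dsstate)$: it emits $\triedge{\dsstate}{\epsilon}{\qstate}$ or $\triedge{\dsstate}{\stackchar_+}{\qstate}$ only when that triple lies in $\transfunction$, so with $\dsstate$ root-reachable we get $\dsstate \RPDTransOU{\epsilon}{M} \qstate$, resp.\ $\dsstate \RPDTransOU{\stackchar_+}{M} \qstate$, and the accompanying $\epsilon$-edge $\biedge{\dsstate}{\qstate}$ is witnessed by the length-one string $\epsilon$, which is stack-neutral from any $\acont$. For $\faddpush_M(G,H)(\triedge{\dsstate}{\stackchar_+}{\qstate})$: the input push edge is sound, so $\dsstate$ is root-reachable and pushing reaches $(\qstate, \stackchar : \vec{\sigma})$ for some $\vec{\sigma}$ realizable at $\dsstate$; for each $\qstate' \in \edescendent{\qstate}$ with $\triedge{\qstate'}{\stackchar_-}{\qstate''} \in \transfunction$, soundness of $H$ supplies a balanced stack-neutral run $(\qstate,\gamma) \to (\qstate',\gamma)$ valid for every $\gamma$, in particular for $\gamma = \stackchar : \vec{\sigma}$; concatenating $\stackchar_+$, that run, and $\stackchar_-$ exhibits a legal run witnessing both $\qstate' \RPDTransOU{\stackchar_-}{M} \qstate''$ (the new ordinary edge) and — because $\fnet{\stackchar_+\,\vec{\stackact}\,\stackchar_-} = \epsilon$ whenever $\fnet{\vec{\stackact}} = \epsilon$ — the new $\epsilon$-edge $\biedge{\dsstate}{\qstate''}$. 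The case $\faddpop_M(G,H)(\triedge{\dsstate''}{\stackchar_-}{\qstate})$ is symmetric and produces only $\epsilon$-edges $\biedge{\dsstate}{\qstate}$, each obtained by flanking a balanced run $\dsstate' \to \dsstate''$ (from $H$, with $\dsstate' \in \eancestor{\dsstate''}$) by the already-present push $\triedge{\dsstate}{\stackchar_+}{\dsstate'} \in G$ and the given pop. Finally, $\faddempty_M(G,H)(\biedge{\dsstate''}{\dsstate'''})$ stitches five segments: the push $\stackchar_+$ out of $\dsstate$, the balanced run $\dsstate' \to \dsstate''$ ($\dsstate' \in \eancestor{\dsstate''}$), the input balanced run $\dsstate'' \to \dsstate'''$, the balanced run $\dsstate''' \to \dsstate''''$ ($\dsstate'''' \in \edescendent{\dsstate'''}$), and the pop $\stackchar_-$ into $\qstate$; the three middle runs concatenate to a balanced run, and flanked by the push/pop the net string is $\epsilon$, witnessing both the new pop edge $\triedge{\dsstate''''}{\stackchar_-}{\qstate}$ and the new $\epsilon$-edge $\biedge{\dsstate}{\qstate}$, while the three additional $\epsilon$-edges $\biedge{\dsstate'}{\dsstate'''}$, $\biedge{\dsstate''}{\dsstate''''}$ and $\biedge{\dsstate'}{\dsstate''''}$ are plain concatenations of already-sound balanced runs.

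The main obstacle is the bookkeeping on stack-action strings, concentrated in the $\faddempty$ case. One first needs the elementary calculus lemmas for the operators $\fnet{\cdot}$ and $[\cdot]$ introduced earlier — that $\fnet{\vec{a}\,\vec{b}} = \fnet{\fnet{\vec{a}}\,\vec{b}} = \fnet{\vec{a}\,\fnet{\vec{b}}}$, hence $[\vec{a}\,\vec{b}] = \epsilon$ when $[\vec{a}] = [\vec{b}] = \epsilon$, and $\fnet{\stackchar_+\,\vec{s}\,\stackchar_-} = \epsilon$ when $\fnet{\vec{s}} = \epsilon$ — none deep but all needed and used uniformly. One must also be careful that the $\epsilon$-closure graph is reflexive, so that $\eancestor{\cdot}$ and $\edescendent{\cdot}$ collapse correctly (making $\faddpush$ and $\faddpop$ literal special cases of $\faddempty$), and must track which stack each balanced run is quantified over: because the $\epsilon$-edges of $\invcrpdsp$ are universally quantified over $\acont$, each run can be instantiated at whatever stack the surrounding push produced, which is exactly what makes the five-segment composition legal. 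Termination of the overall fixed-point iteration is already guaranteed by Lemma~\ref{lem:termination}, so combining this lemma with it gives the $\subseteq$ direction of Theorem~\ref{thm:eps-closure-correct}.
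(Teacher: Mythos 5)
Your proof is correct and follows essentially the same route as the paper's: a case analysis over the outputs of $\fsprout$, $\faddpush$, $\faddpop$ and $\faddempty$, certifying each new edge by concatenating the runs witnessed by $\invcrpdsp$ on the existing edges and instantiating the universally quantified stacks at the one produced by the flanking push. Your explicit auxiliary side-condition that states in $S \cup \Delta S$ are root-reachable is a detail the paper leaves implicit in its ``by definition of $\fsprout$ and path extension'' case, and making it explicit is a small improvement rather than a divergence.
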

\begin{proof}
  All additional states and edges come from $\Delta E_i$ and $\Delta H_i$ for $i \in [0..4]$,
  so by cases on the sources of edges:
  \begin{byCases}
    \case{\triedge{\dsstate}{\stackact}{\dsstate'} \in \Delta E_0, \biedge{\dsstate''}{\dsstate'''} \in \Delta H_0}{
      By definition of $\fsprout$ and path extension.}
    \case{\triedge{\dsstate}{\stackact}{\dsstate'} \in \Delta E_1, \biedge{\dsstate''}{\dsstate'''} \in \Delta H_1}{
      If $\stackact \equiv \aphrame_-$, then by definition of $\faddpush$ there are
      $\triedge{\qstate}{\aphrame_+}{\qstate'} \in \Delta E$,
      $\biedge{\qstate'}{\dsstate} \in H$,
      such that $(\dsstate, \aphrame_+, \dsstate') \in \transfunction$.

      Let $\vec{\stackact}$ be the witness of the invariant on $\biedge{\qstate'}{\dsstate}$ given from definition of $\invcrpdsp$.
      Let $\acont$ be arbitrary.
      We have $[\aphrame_+\vec{\stackact}\aphrame_-] = \epsilon$.
      We also have $(\qstate,\acont) \overset{\aphrame_+\vec{\stackact}\aphrame_-}{\underset{M}{\longmapsto^*}} (\dsstate', \acont)$.
      Root reachability follows from path concatenation with the root path from $(\qstate,\acont) \PDTransOU{\aphrame_+}{M} (\qstate',\aphrame\acont)$ from $\invcrpdsp$.

      The balanced path for $\biedge{\dsstate''}{\dsstate'''}$ comes from a similar push edge from $\Delta E$ and concatenation with the path from the invariant on $H$.}
    \case{\biedge{\dsstate''}{\dsstate'''} \in \Delta H_2}{
      By definition of $\faddpop$, $\Delta E_2 = \emptyset$ and there are
      $\triedge{\qstate}{\aphrame_-}{\dsstate'''} \in \Delta E$,
      $\biedge{\qstate'}{\qstate} \in H$
      such that $\triedge{\dsstate''}{\aphrame_+}{\qstate'} \in E$.
      Let $\vec{\stackact}$ be the witness of the invariant on $\biedge{\qstate'}{\qstate}$.
      Let $\acont$ be arbitrary.
      We know by the invariant on $E$, $(\dsstate'',\acont) \overset{\aphrame_+\vec{\stackact}\aphrame_-}{\underset{M}{\longmapsto^*}} (\dsstate''', \acont)$
      and $[\aphrame_+\vec{\stackact}\aphrame_-] = \epsilon$.}
    \case{\triedge{\dsstate}{\stackact}{\dsstate'} \in \Delta E_3\cup\Delta E_4, \biedge{\dsstate''}{\dsstate'''} \in \Delta H_3 \cup \Delta H_4}{
      Follows from definition of $\invcrpdsp$ and path concatenation, following similar reasoning as above cases.}
  \end{byCases}
\end{proof}

We define ``$\mtrace$ is a subtrace of $\mtrace'$,'' $\mtrace \sqsubseteq \mtrace'$
\begin{align*}
  \infer{ }{(\dsstate',\acont') \PDTranssOU{\vect{}}{M} (\dsstate',\acont') \sqsubseteq (\dsstate,\acont) \PDTranssOU{\vec{\stackact}}{M} (\dsstate',\acont')}
  \quad
  \infer{\mtrace \sqsubseteq (\dsstate,\acont) \PDTranssOU{\vec{\stackact}}{M} (\dsstate',\acont')
         \quad (\dsstate',\stackact',\dsstate'') \in \transfunction}
        {\mtrace \sqsubseteq
         (\dsstate,\acont) \PDTranssOU{\vec{\stackact}}{M} (\dsstate',\acont') \PDTransOU{\stackact'}{M} (\dsstate'',[\acont'_+\stackact'])}
  \\
  \infer{(\dsstate,\acont) \PDTranssOU{\vec{\stackact}}{M} (\dsstate',\acont') \sqsubseteq
         (\dsstate''',\acont'') \PDTranssOU{\vec{\stackact''}}{M} (\dsstate',\acont')
         \quad (\dsstate',\stackact',\dsstate'') \in \transfunction}
        {(\dsstate,\acont) \PDTranssOU{\vec{\stackact}}{M} (\dsstate',\acont') \PDTransOU{\stackact'}{M} (\dsstate'',[\acont'_+\stackact'])
          \sqsubseteq
         (\dsstate''',\acont'') \PDTranssOU{\vec{\stackact''}}{M} (\dsstate',\acont') \PDTransOU{\stackact'}{M} (\dsstate'',[\acont'_+\stackact'])}
\end{align*}

\autoref{thm:eps-closure-correct} is a corollary of the following theorem.
\begin{theorem}
  $\lfp(\mkCRPDS'(M)) = (\fCRPDS(M),\fECG(M),(\emptyset,\emptyset),\emptyset)$
\end{theorem}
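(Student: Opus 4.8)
The plan is to prove the two inclusions separately, after first nailing down the shape of $\lfp(\mkCRPDS'(M))$. By \autoref{lem:termination} the iteration strictly decreases the ordinal measure $d$ until it hits a fixed point, and (as observed just after that lemma) a fixed point must have $\Delta E = \Delta H = \emptyset$, whence also $\Delta S = \emptyset$ since $\Delta S'$ is recomputed from $\Delta E'$ together with the single seed $\dsstate_0$, which is already in $S$ by that point. So $\lfp(\mkCRPDS'(M)) = (G, H, (\emptyset,\emptyset,\emptyset))$ for some graph $G$ and \ecg{} $H$, and it suffices to show $G = \fCRPDS(M)$ (as a states-and-edges pair, the frame set being pinned down by the push edges present) and $H = \fECG(M)$.

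For the $\subseteq$ direction I would use the invariant $\invcrpdsp$. It holds trivially of the initial iteration state, since there $E$, $H$, and the non-seed work-sets are empty, and \autoref{lem:mkcrpdsp-inv} shows it is preserved by $\mkCRPDS'(M)$; hence it holds of $\lfp(\mkCRPDS'(M))$. Reading $\invcrpdsp$ off at the fixed point: every $\triedge{\dsstate}{\stackact}{\dsstate'} \in E$ satisfies $\dsstate \RPDTransOU{\stackact}{M} \dsstate'$, so both endpoints are root-reachable and the triple is one of the transitions retained by $\fCRPDS(M)$; and every $\biedge{\dsstate}{\dsstate'} \in H$ witnesses a root-reachable action string with empty net change, so $\biedge{\dsstate}{\dsstate'} \in \fECG(M)$. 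Since the state set of $G$ is just the endpoints of $E$-edges together with $\dsstate_0$, and all of these are root-reachable, we get the componentwise inclusion of $\lfp(\mkCRPDS'(M))$ into $(\fCRPDS(M),\fECG(M),(\emptyset,\emptyset,\emptyset))$.

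For the $\supseteq$ direction I would use the trace lemma (the second lemma of that subsection): for every root trace $\mtrace \equiv \dsstate_0 \RPDTransOU{\vec{\stackact}}{M} \dsstate$ there is a step-for-step parallel path $\dsstate_0 \RPDTransOU{\vec{\stackact}}{G} \dsstate$ through the fixed-point graph, and every non-empty subtrace of $\mtrace$ whose net action is $\epsilon$ already has its summary edge in $H$. That lemma I would prove by induction on the length of the trace, splitting on the final stack action. The $\epsilon$ and push cases are discharged by $\fsprout$, $\faddempty$, and $\faddpush$ firing when the source state first enters the work-set; the pop case is the subtle one, since realizing a pop step in $G$ requires that the matching push already sits in $G$ \emph{and} is joined to the pop's source by an $H$-edge, which is exactly what closure of the fixed point under $\faddpop$ and $\faddempty$ provides once the induction hypothesis is applied to the balanced inner subtrace. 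With the lemma in hand: each transition of $\fCRPDS(M)$, being root-reachable, is the last step of some root trace, so the parallel $G$-path exhibits it as an edge of $E$ (and its stack char, if a push, as a frame of $G$); each $\biedge{\dsstate}{\dsstate'} \in \fECG(M)$ comes from a root-reachable net-empty action string, i.e. from a non-empty balanced subtrace, hence lies in $H$; and each state of $\fCRPDS(M)$ is the endpoint of such a trace, hence in $G$.

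The main obstacle is the pop case of the trace lemma. The algorithm discovers pop edges and the $\epsilon$-edges they induce incrementally, in an order that depends on the search, so the argument cannot track "the edge was added at step $n$"; it must instead be phrased as "the fixed point is closed under $\fsprout$, $\faddpush$, $\faddpop$, $\faddempty$" and then show, by induction, that this closure already contains everything needed to mirror any root trace — in particular that $H$ at the fixed point is transitively closed and records every matched push/pop pair. The two delicate points will be choosing an induction hypothesis strong enough to carry the claim for \emph{all} balanced subtraces (not merely the whole trace), and matching that claim precisely against the definitions of $\eancestor{\cdot}$ and $\edescendent{\cdot}$ appearing in $\faddpush$, $\faddpop$, and $\faddempty$.
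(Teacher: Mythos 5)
Your decomposition is the paper's own: empty work-sets at the fixed point via \autoref{lem:termination}, the $\subseteq$ inclusion read off from the invariant $\invcrpdsp$ preserved by \autoref{lem:mkcrpdsp-inv}, and the $\supseteq$ inclusion by induction on root traces with a case split on the last stack action and a strengthened hypothesis covering all non-empty balanced subtraces. The one place you depart from the paper is in how the pop case is discharged, and the departure rests on a misconception: you assert that the argument ``cannot track the edge was added at step $n$'' because discovery order ``depends on the search.'' But $\mkCRPDS'(M)$ is a deterministic function on the whole iteration state---it processes every work-set item in parallel at each application---so $\mkCRPDS'(M)^m$ is perfectly well-defined, and the paper's proof leans on exactly this: its induction hypothesis carries a witness index $m$ for every state, edge, and $\epsilon$-edge recording when it first enters a work-set, and the pop case concludes with a $\max\set{m,m'}+1$ argument saying that whichever of the two prerequisites (the matching push edge, or the balanced $\epsilon$-path joining its target to the pop's source) arrives later triggers the helper ($\faddpush$ in one order, $\faddempty$ in the other) that emits the pop edge. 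Your substitute---``the fixed point is closed under $\fsprout$, $\faddpush$, $\faddpop$, $\faddempty$''---is not free for a work-set algorithm: each item's consequences are computed only once, against the graph and \ecg{} as they stood when that item was dequeued, so the closure property is itself the thing that needs proving, and proving it amounts to the same last-prerequisite-wins bookkeeping you are trying to avoid. Keeping the witness indices in the induction hypothesis, as the paper does, is what makes the pop case (and the balanced-subtrace accounting in the $\epsilon$ and pop cases) go through cleanly; dropping them leaves the most delicate step of your plan unsupported.
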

\begin{proof}
($\subseteq$): Directly from \autoref{lem:mkcrpdsp-inv}.

\noindent($\supseteq$): Let $\mtrace \equiv (\dsstate_0,\vect{}) \PDTranssOU{\vec{\stackact}}{M} (\dsstate,\acont)$ be an arbitrary path in $\fCRPDS(M)$ (the inclusion of root is not a restriction due to the definition of CRPDSs).
Let $n \in Nats$ be such that $\lfp(\mkCRPDS'(M)) = \mkCRPDS'(M)^n$.
We show
\begin{itemize}
\item{the same path through $G$,}
\item{for each $\dsstate \in S$, $\triedge{\dsstate}{\stackact}{\dsstate'} \in E$, $\biedge{\dsstate}{\dsstate'} \in H$,
      there is an $m < n$ such that
      $\dsstate \in \Delta S_m$
      $\triedge{\dsstate}{\stackact}{\dsstate'} \in \Delta E_m$
      $\biedge{\dsstate}{\dsstate'} \in \Delta H_m$ respectively,
      where $\mkCRPDS'(M)^m = (G_m, H_m, (\Delta S_m, \Delta E_m, \Delta H_m))$, and}
\item{all non-empty balanced subtraces have edges in $H$:
      $\forall (\dsstate_b,\acont) \PDTranssOU{\vec{\stackact_\epsilon}}{M} (\dsstate_a,\acont)
                 \sqsubseteq \mtrace.
               \vec{\stackact_\epsilon} \neq \vect{} \wedge [\vec{\stackact_\epsilon}] = \epsilon \implies \biedge{\dsstate_b}{\dsstate_a} \in H$.}
\end{itemize}

By induction on $\mtrace$,
\begin{byCases}
  \case{\text{Base: } \dsstate_0}{Follows by definition of $\mkCRPDS'$. No non-empty balanced subtrace.}
  \case{\text{Induction step: }
         (\dsstate_0,\vect{}) \PDTranssOU{\vec{\stackact'}}{M} (\dsstate',\acont)
         \PDTransOU{\stackact''}{M} (\dsstate,[\acont_+\stackact''])}{
    By IH, $(\dsstate_0,\vect{}) \PDTranssOU{\vec{\stackact'}}{G} (\dsstate', \acont)$.
    By cases on $\stackact''$:
    \begin{byCases}
      \case{\stackchar_+}{
        Let $m$ be the witness for $\dsstate'$ by the IH.
        By definition of $\mkCRPDS'$, $(\dsstate',\acont) \PDTransOU{\stackact''}{M} (\dsstate,[\acont_+\stackact''])$ is in $\Delta E_{m+1}$ and $E_{m+2}$ (and thus $\dsstate \in \Delta S_{m+1}$ and $S_{m+2}$).
        Thus the path is constructible through $G_n$.
        All balanced subtraces carry over from IH, since the last push edge cannot end a balanced path.}
      \case{\epsilon}{
        The path is constructible the same as $\stackchar_+$. Let $m$ be the witness used in the path construction.
        Let $\mtrace' \equiv (\dsstate_b,\acont) \PDTranssOU{\vec{\stackact_\epsilon}}{M} (\dsstate_e,\acont)$ be an arbitrary non-empty balanced subtrace. If $\dsstate_e \neq \dsstate$, then the IH handles it.
        Otherwise, $\vec{\stackact_\epsilon} = \vec{\stackact'_\epsilon}\epsilon$.
        If $\dsstate_b = \dsstate'$, then the $\epsilon$-edge is added by $\fsprout$ (so the witness number is $m + 1$).
        If not, then there is a balanced subtrace $(\dsstate_b,\acont) \PDTranssOU{\vec{\stackact'_\epsilon}}{M} (\dsstate',\acont)$, thus $\biedge{\dsstate_b}{\dsstate'} \in H$.
        Let $m'$ be the witness for $\biedge{\dsstate_b}{\dsstate'} \in \Delta H_{m'}$. Then $\biedge{\dsstate_b}{\dsstate} \in \Delta_{\max\set{m,m'}+1}$ by definition of $\faddempty$.}
      \case{\stackchar_-}{
        Since $[\vec{\stackact}]$ is defined, there is a push edge in the trace (call it $\dsstate_u \PDTransOU{\stackchar_+}{M} \dsstate_v$) with a (possibly empty) balanced subtrace following to $\dsstate'$.
        Thus by IH, there are some $m,m'$ such that $\triedge{\dsstate_u}{\stackchar_+}{\dsstate_v} \in E_m$, (if the subtrace is non-empty) $\biedge{\dsstate_v}{\dsstate'} \in H_{m'}$
        If $m \ge m'$ by definition of $\faddpush$, $\triedge{\dsstate'}{\stackchar_-}{\dsstate} \in \Delta E_{m+1}$.
        Otherwise, the edge is in $E_{m'}$ and by definition of $\faddempty$, $\triedge{\dsstate'}{\stackchar_-}{\dsstate} \in \Delta E_{m'+1}$.

        Let $\mtrace' \equiv (\dsstate_b,\acont) \PDTranssOU{\vec{\stackact_\epsilon}}{M} (\dsstate_e,\acont)$ be an arbitrary non-empty balanced subtrace.
        If $\dsstate_e \neq \dsstate$, the IH handles it.
        Otherwise, $\vec{\stackact_\epsilon} = \vec{\stackact'_\epsilon}\stackchar_+\vec{\stackact''_\epsilon}\stackchar_-$ and
        $\mtrace' \equiv (\dsstate_b,\acont) \PDTranssOU{\vec{\stackact'_\epsilon}}{M}
                         (\dsstate_u,\acont) \PDTransOU{\stackchar_+}{M}
                         (\dsstate_v,\stackchar\acont) \PDTranssOU{\vec{\stackact'_\epsilon}}{M}
                         (\dsstate',\stackchar\acont) \PDTransOU{\stackchar_-}{M} (\dsstate,\acont)$.
        $\biedge{\dsstate_u}{\dsstate}$ is added to $\Delta H_{\max\set{m,m'} +1}$ and thus $\biedge{\dsstate_b}{\dsstate_u}$ is in $H_{\max\set{m,m'}+3}$.}
    \end{byCases}}
\end{byCases}
\end{proof}
\subsection{RIPDS reachability}
\label{sec:ripds-reach}

\newcommand{\hastail}{\mathit{hastail}}
\newcommand{\Trace}{\mathit{Trace}}
\newcommand{\Cont}{\mathit{Cont}}
\newcommand{\contappend}[2]{#1 {\tt ++} #2}
We use metafunction $\contappend{\bullet}{\bullet} : \Cont \times \Cont \to \Cont$ to aid proofs:
\begin{align*}
  \contappend{\epsilon}{\acont} &= \acont \\
  \contappend{\phrame:\acont}{\acont'} &= \phrame:(\contappend{\acont}{\acont'})
\end{align*}


\newcommand{\ksplit}{\mathit{split}}
\begin{align*}
  \ksplit(\epsilon) &= [\text\textcent,\epsilon] \\
  \ksplit(\phrame:\acont) &= [\text\textcent\acont,\phrame]
\end{align*}

\begin{lemma}[Down spin]\label{lem:downspin}
  For $(\qstate, \epsilon, \epsilon, \downarrow, \epsilon, \qstate) \in \transfunction$,
  $(\qstate, [\contappend{\acont_B}{\acont_{B'}},\acont_T], w) \longmapsto^* (\qstate, [\acont_B,\contappend{\acont_{B'}}{\acont_T}], w)$
\end{lemma}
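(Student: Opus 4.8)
The plan is to prove this by structural induction on the list $\acont_{B'}$, with $\acont_B$, $\acont_T$ and $w$ left universally quantified, using only the single defining transition $(\qstate,\epsilon,\epsilon,\downarrow,\epsilon,\qstate)\in\transfunction$ together with the clause $D(\downarrow,[\Gamma_B\phrame,\Gamma_T])=[\Gamma_B,\phrame\Gamma_T]$ and elementary properties of $\contappend{\bullet}{\bullet}$ (it is just list append, hence associative).

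First I would record what a single application of the down-transition does. Instantiating the run relation with $(\qstate,\epsilon,\epsilon,\downarrow,\epsilon,\qstate)$: the guard $\phrame_\epsilon\sqsubseteq\Gamma_T$ with $\phrame_\epsilon=\epsilon$ is trivial, the input guard $w\equiv\epsilon w'$ forces $w'=w$ (so the input is untouched), and the new split is $P(\epsilon,D(\downarrow,[\Gamma_B,\Gamma_T]))$; since $\epsilon$ is neither a push nor a pop, $P(\epsilon,-)$ is the identity. Hence, whenever $\Gamma_B=\Gamma_B''\,\phrame$ is non-empty, $(\qstate,[\Gamma_B''\,\phrame,\Gamma_T],w)\longmapsto(\qstate,[\Gamma_B'',\phrame\,\Gamma_T],w)$: exactly one character is moved from the right end of the bottom segment to the left end of the top segment.

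Next comes the induction. In the base case $\acont_{B'}=\epsilon$, both sides reduce to $(\qstate,[\acont_B,\acont_T],w)$ because $\contappend{\acont_B}{\epsilon}=\acont_B$ and $\contappend{\epsilon}{\acont_T}=\acont_T$, so $\longmapsto^*$ holds in zero steps. In the inductive step $\acont_{B'}=\phrame:\acont_{B'}''$, I would first rewrite $\contappend{\acont_B}{(\phrame:\acont_{B'}'')}$ as $\contappend{(\contappend{\acont_B}{(\phrame:\epsilon)})}{\acont_{B'}''}$ (associativity of $\contappend{\bullet}{\bullet}$), then apply the induction hypothesis to the strictly smaller list $\acont_{B'}''$ with bottom segment $\contappend{\acont_B}{(\phrame:\epsilon)}$, obtaining a run to $(\qstate,[\contappend{\acont_B}{(\phrame:\epsilon)},\contappend{\acont_{B'}''}{\acont_T}],w)$. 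Since $\contappend{\acont_B}{(\phrame:\epsilon)}$ is, as a flat string, exactly $\acont_B$ with last character $\phrame$, one further down-step (via the observation above) gives $(\qstate,[\acont_B,\phrame:\contappend{\acont_{B'}''}{\acont_T}],w)$, and $\phrame:\contappend{\acont_{B'}''}{\acont_T}=\contappend{(\phrame:\acont_{B'}'')}{\acont_T}=\contappend{\acont_{B'}}{\acont_T}$, which is the claimed target; composing the two runs closes the step.

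I expect the only delicate point to be lining up the orientation of $\contappend{\bullet}{\bullet}$ (a cons on the left) against the cursor convention of $D(\downarrow,-)$ (which consumes the right end of $\Gamma_B$): the induction peels $\acont_{B'}$ at its head, but that head is only "spent" after the recursive run has flushed everything to its right, the bridging fact being nothing more than associativity of append plus $\contappend{\acont_B}{(\phrame:\epsilon)}=\acont_B\phrame$. This is routine bookkeeping; there is no genuinely hard step, so I anticipate a short, clean proof.
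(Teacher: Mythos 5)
Your proof is correct and takes essentially the same route as the paper's: a structural induction on $\acont_{B'}$ that performs one down-step per frame, with the base case closed by reflexivity. The only (immaterial) difference is the orientation of the decomposition --- the paper peels the last frame ($\acont_{B'} = \acont\phrame$) and takes the machine step before invoking the induction hypothesis (generalizing over $\acont_T$), whereas you peel the head ($\acont_{B'} = \phrame:\acont_{B'}''$) and take the step after (generalizing over $\acont_B$); both are routine and your careful treatment of the single-step behaviour of $D(\downarrow,\cdot)$ and $P(\epsilon,\cdot)$ is exactly what the paper leaves implicit.
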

\begin{proof}
  By induction on $\acont_{B'}$.
  \begin{byCases}
    \case{\text{Base: } \acont_B = \epsilon}{Reflexivity.}
    \case{\text{Induction step: } \acont_{B'} = \acont\phrame}{
      By $\transfunction$, $(\qstate, [\contappend{\acont_B}{\acont_{B'}}, \acont_T], w) \longmapsto (\qstate, [\contappend{\acont_B}{\acont},\phrame\acont_T], w)$.
      By IH, $(\qstate, [\contappend{\acont_B}{\acont},
      \phrame\acont_T], w) $
      
      $\longmapsto^* (\qstate, [\acont_B, \contappend{\acont}{\phrame\acont_T}], w)$.
      This final configuration is the same as $(\qstate,[\acont_B,\contappend{\acont_{B'}}{\acont_T}],w)$.}
  \end{byCases}
\end{proof}

\begin{lemma}[$\gadget$ correctness]\label{lem:gadget}
  For $(\transfunction, \DSStates) = \gadget(\dsstate, \hat K, \stackact, \dsstate')$, $(\dsstate, \ksplit(\acont), w) \underset{\transfunction}{\longmapsto^*} (\dsstate', \ksplit([\acont_+\stackact]), w)$ iff $\acont \in \hat K$ and $[\acont_+\stackact]$ defined.
\end{lemma}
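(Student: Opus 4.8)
The plan is to prove the biconditional by establishing a tight step-by-step correspondence between runs of the gadget's transition fragment and accepting computations of the fresh NFA $N$ recognizing $\hat K$. Every transition produced by $\gadget$ reads $\epsilon$ from the input, so the input component stays equal to $w$ throughout, and I can reason purely about the control state and the split stack. The decisive structural observation is that $\gadget$'s transitions decompose any such run into five phases: (i) the step from $\dsstate$ that pushes the $\$$ sentinel on top and enters $\qstate_{\text{down}}$; (ii) the $\downarrow$ self-loop on $\qstate_{\text{down}}$; (iii) the single $\uparrow$ step, guarded by $\text\textcent$, that re-enters $\qstate_0$; (iv) the simulation of $N$ by the rules that read a stack frame under the cursor (moving $\uparrow$) or take an $\epsilon$-step of $N$; and (v) the pop of $\$$ from a final state of $N$, followed by the rule that applies the IPDS stack action $\stackact$ and moves to $\dsstate'$.

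For the $(\Leftarrow)$ direction I would assume $\acont \in \hat K$ and $[\acont_+\stackact]$ defined, fix an accepting run of $N$ on (the stack-string of) $\acont$, and assemble the corresponding gadget run phase by phase: the push of $\$$ is legal because, by the shape of $\ksplit(\acont)$, the cursor sits at the top of the stack; phases (ii)--(iii) use \autoref{lem:downspin} to walk the cursor down to $\text\textcent$ and then one step back up onto the bottommost real frame in state $\qstate_0$; each frame-reading step of $N$ is mirrored by a cursor-advancing transition and each $\epsilon$-step of $N$ by an $\epsilon$-transition, leaving the cursor on $\$$ in a state of $F$; the pop of $\$$ and then the application of $\stackact$ complete the run, the latter being well-defined exactly because $[\acont_+\stackact]$ is. A direct computation of the $P$/$D$ updates along this run shows the final description is $(\dsstate', \ksplit([\acont_+\stackact]), w)$.

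For the $(\Rightarrow)$ direction I would show that any run from $(\dsstate, \ksplit(\acont), w)$ reaching a description with control state $\dsstate'$ must follow exactly these five phases, by case analysis on which transitions are enabled at each control state of $\gadget$'s output: the only transition out of $\dsstate$ pushes $\$$; from $\qstate_{\text{down}}$ only the $\downarrow$ loop and the $\text\textcent$-guarded $\uparrow$ step are available, and the latter requires the cursor to have reached $\text\textcent$, forcing the spin-down to run to completion; in $\qstate_0$ and the $N$-states only the $N$-simulation rules and (when in $F$) the $\$$-pop apply, and since the frame-reading rule is enabled only on a symbol of $\StackAlpha$ while the $\$$-pop requires the cursor on $\$$, the simulation consumes precisely the real frames of $\acont$ and the pop fires only once all of $\acont$ has been read in a final state of $N$; and $\qstate_{\text{out}}$ only steps, via the stack action, to $\dsstate'$. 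Reading this factorization backwards, phase (iv) exhibits an accepting $N$-computation on $\acont$, so $\acont \in \hat K$, and the step applying $\stackact$ can fire only if $[\acont_+\stackact]$ is defined.

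The main obstacle is the low-level bookkeeping of the split stack: keeping the two sentinels $\text\textcent$ and $\$$ straight, fixing the orientation of the stack-string relative to the $\s{Kont}$ list so that ``$N$ recognizes $\hat K$'' reads in the right direction, and handling the degenerate case $\acont = \epsilon$ (and, when $\stackact$ is a pop, checking that the $P$-update on $\$_-$ exposes the intended frame rather than a sentinel). I would isolate this by first proving a phase-(iv) invariant --- during the $N$-simulation the description has the form $[\text\textcent u,\, v\$]$ where $uv$ is the stack-string of $\acont$, the cursor is on the first symbol of $v$, and the current $N$-state is reachable in $N$ after reading $u$ --- so that \autoref{lem:downspin} discharges phases (ii)--(iii) and phases (i) and (v) reduce to single-step $P$/$D$ computations.
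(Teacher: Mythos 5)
Your proposal is correct and follows essentially the same route as the paper's proof: both decompose any run of the gadget into the sentinel push and down-spin (discharged by the down-spin lemma), a middle phase whose descriptions $[\text\textcent u, v\$]$ are related to partial runs of the NFA $N$ by an inductive invariant, and the final $\$$-pop plus stack-action step whose applicability is exactly the definedness of $[\acont_+\stackact]$. Your phase-(iv) invariant is precisely the paper's induction statement, so no further comparison is needed.
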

\begin{proof}
  ($\Rightarrow$):
  By inversion on the rules for $\transfunction$, the path must go through three stages: the down-spin, the middle path, and the pop-off.
  By \autoref{lem:downspin}, $(\dsstate, \ksplit(\acont), w) \longmapsto (\qstate_{\text{down}}, [\text\textcent\acont,\$], w) \longmapsto^* (\qstate_{\text{down}}, [\epsilon,\text\textcent\acont\$], w)$.
  Then the $(\qstate_{\text{down}}, \text\textcent, \epsilon, \uparrow,\epsilon,\qstate_0)$ rule must apply.
  We can construct an accepting path in the machine recognizing $\hat K$ from the middle path via the following lemma:
  $(\qstate_0, [\text\textcent, \acont\$], w) \overset{\acont'}{\underset{\transfunction}{\longmapsto^*}} (\qstate, [\text\textcent\acont',\acont''\$], w)$ implies $(\qstate_0, \acont) \overset{\acont'}{\underset{N}{\longmapsto^*}} (\qstate, \acont'')$.
  Proof by induction.
  
  Then $(\qstate,\$,\epsilon,\cdot,\$_-,\qstate_{\text{out}})$ must apply, and then $(\qstate_{\text{out}}, \epsilon,\epsilon,\stackact,\dsstate')$ must apply, meaning that $[\acont_+\stackact]$ is defined.
  \\

  ($\Leftarrow$):
  Since $\hat K$ is regular, there must be a path in the chosen NFA $N = (Q, \Sigma, \transfunction_N, \qstate_0, F)$ from $\qstate_0$ to a final state $\qstate \in F$, $(\qstate_0,\acont) \underset{N}{\longmapsto^*} (\qstate,\epsilon)$.
  \\

\noindent  In the first stage, $(\dsstate, \ksplit(\acont), w) \longmapsto^* (\qstate_0, [\text\textcent,\acont\$], w)$.

\noindent  The follows first by the $(\dsstate,\epsilon,\epsilon,\cdot,\$_+,\qstate_{\text{down}})$ transition, then by \autoref{lem:downspin} $(\qstate_{\text{down}}, \ksplit(\acont\$), w)$ $\longmapsto^* (\qstate_{\text{down}}, [\epsilon, \text\textcent\acont\$], w)$, finally by the $(\qstate_{\text{down}}, \text\textcent,\epsilon,\uparrow,\epsilon,\qstate_0)$ rule.
  \\

\noindent  In the second stage we construct a path $(\qstate_0,[\text\textcent,\acont\$],w) \longmapsto^* (\qstate, [\text\textcent\acont,\$], w)$, from an accepting path in $N$: $(\qstate_0,\acont) \longmapsto^* (\qstate,\epsilon)$ where $\qstate \in F$.
  The statement we can induct on to get this is $(\qstate_0, \acont) \overset{\acont'}{\underset{N}\longmapsto^*} (\qstate,\acont'')$ implies $(\qstate_0, [\text\textcent,\acont\$], w) \overset{\acont'}{\underset{\transfunction}{\longmapsto^*}} (\qstate, [\text\textcent\acont',\acont''\$], w)$.
  \begin{byCases}
    \case{\text{Base: } \acont' = \epsilon, \qstate = \qstate_0, \acont'' = \acont}{Reflexivity.}
    \case{\text{Induction step: } \acont' = \acont'''\phrame_\epsilon,
          (\qstate_0, \acont) \overset{\acont'''}{\underset{N}{\longmapsto^*}} (\qstate',\acont'''')
                             \overset{\phrame_\epsilon}{\underset{N}{\longmapsto}} (\qstate, \acont'')}{
      By IH, $(\qstate_0, [\text\textcent,\acont\$], w) \longmapsto^* (\qstate',[\text\textcent\acont''',\acont''''\$],w)$.
      If $\phrame_\epsilon = \epsilon$, then $\acont''' = \acont'$, $\acont'''' = \acont''$ and we apply the $(\qstate', \epsilon, \epsilon,\cdot,\epsilon,\qstate)$ rule to get to $(\qstate, [\text\textcent\acont',\acont''\$],w)$.
      Otherwise, $\acont' = \acont'''\phrame$ and we apply the $(\qstate',\phrame,\epsilon,\uparrow,\epsilon,\qstate)$ rule to get to $(\qstate, [\text\textcent\acont',\acont''\$], w)$.}
  \end{byCases}

\noindent  In the third and final stage, $(\qstate_{\text{out}}, [\text\textcent\acont,\$], w) \longmapsto (\qstate_{\text{out}}, \ksplit(\acont), w)$ and since $[\acont_+\stackact]$ is defined, we reach the final state by $(\qstate_{\text{out}}, \epsilon,\epsilon,\cdot,\stackact,\dsstate')$.
\end{proof}

\begin{lemma}[Checking lemma]\label{lem:check}
  If $(\qstate,a,a,\uparrow,\epsilon,\qstate) \in \transfunction$ and
  $(\qstate,[\acont_B,\contappend{\acont_{T'}}{\acont_{T}}\$],w) \longmapsto^* (\qstate, [\contappend{\acont_B}{\acont_{T'}},\acont_T\$], w')$ (through the one rule)
  then $w = \acont_{T'}w'$.
\end{lemma}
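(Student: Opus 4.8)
The plan is to prove \autoref{lem:check} by induction on the number $n$ of steps in the derivation $(\qstate,[\acont_B,\contappend{\acont_{T'}}{\acont_{T}}\$],w) \longmapsto^* (\qstate, [\contappend{\acont_B}{\acont_{T'}},\acont_T\$], w')$, which by hypothesis uses only the rule $(\qstate,a,a,\uparrow,\epsilon,\qstate)$. The guiding observation is that each firing of this rule reads the single frame sitting just right of the stack cursor, consumes that very same frame from the front of the input, and advances the cursor past it, while leaving the stack untouched (the stack action is $\epsilon$, and $P(\epsilon,\cdot)$ is the identity). Thus cursor motion and input consumption stay in lockstep, one frame at a time, so after $n$ steps the cursor has walked past exactly the block $\acont_{T'}$ and the prefix of $w$ that has been eaten is literally $\acont_{T'}$.

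For the base case $n=0$ the source and target descriptions coincide; comparing their left halves forces $\contappend{\acont_B}{\acont_{T'}}=\acont_B$, hence $\acont_{T'}=\epsilon$, and since no input is consumed $w=w'=\acont_{T'}w'$. For the inductive step I would first note that a nonempty derivation is impossible when $\acont_{T'}=\epsilon$: the part of the stack right of the cursor always ends in the sentinel $\$$ and is therefore nonempty, so every application of the rule strictly advances the cursor via $D(\uparrow,\cdot)$, and since the rule never moves the cursor back, a derivation of length $\ge 1$ starting at $[\acont_B,\acont_T\$]$ can never return to $[\acont_B,\acont_T\$]$. So for $n\ge 1$ we may write $\acont_{T'}\equiv\phrame:\acont_{T'}''$; the frame just right of the cursor is $\phrame$, so the first transition fires the instance of the rule matching $\phrame$, forcing $w\equiv\phrame\,w''$, consuming $\phrame$, and producing the description $(\qstate,[\contappend{\acont_B}{(\phrame:\epsilon)},\contappend{\acont_{T'}''}{\acont_T}\$],w'')$. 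Using associativity of $\contappend{\bullet}{\bullet}$, so that $\contappend{(\contappend{\acont_B}{(\phrame:\epsilon)})}{\acont_{T'}''}=\contappend{\acont_B}{\acont_{T'}}$, the remaining $n-1$ steps form a derivation of the same shape with $\acont_B$ replaced by $\contappend{\acont_B}{(\phrame:\epsilon)}$ and $\acont_{T'}$ by $\acont_{T'}''$; the induction hypothesis gives $w''=\acont_{T'}''\,w'$, whence $w=\phrame\,w''=\phrame\,(\acont_{T'}''\,w')=\acont_{T'}\,w'$, as required.

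The step I expect to demand the most care is making the lockstep invariant watertight, because the schema $(\qstate,a,a,\uparrow,\epsilon,\qstate)$ also admits the degenerate instance $a=\epsilon$, which advances the cursor without eating any input; if such an instance were used the cursor and the input pointer would fall out of sync and the conclusion $w=\acont_{T'}w'$ would fail. I would therefore fix at the outset that ``through the one rule'' means the frame-matching firings only---equivalently, that the self-loop at $\qstate$ ranges over genuine stack frames, not over $\epsilon$---so that each step is guaranteed to read and match a real frame and the case analysis above has exactly one case. Once that is settled, the remaining obligations (that $D(\uparrow,\cdot)$ and $P(\epsilon,\cdot)$ act on the descriptions above exactly as stated, and the trivial unit/associativity facts about $\contappend{\bullet}{\bullet}$) are mechanical.
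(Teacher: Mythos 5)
Your proof is correct and follows the same route the paper intends: the paper's own proof of this lemma is literally the single phrase ``Simple induction,'' and your induction on the length of the derivation, with the cursor/input lockstep invariant, is exactly the obvious way to discharge it. The one place you go beyond the paper is in flagging that the rule schema $(\qstate,a,a,\uparrow,\epsilon,\qstate)$ with $a\in\StackAlpha\cup\set{\epsilon}$ admits an $\epsilon$-instance that advances the cursor without consuming input and would falsify the claim; your stipulation that ``through the one rule'' means the frame-matching instances only is needed for the lemma to hold as stated, and is a legitimate repair of an imprecision the paper silently glosses over.
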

\begin{proof}
  Simple induction.
\end{proof}

\begin{lemma}[Stack machine correctness]\label{lem:stack-correct}
  For all $M \in \mathbb{CPDS}$, $G \in \mathbb{CCPDS}$, $\qstate \in G$,
  if $G \sqsubseteq \fCCPDS(M)$ then $$\mathcal{L}(\Stacks(G)(\qstate)) = \setbuild{\acont}{(\qstate_0,\vect{}) \overset{\overrightarrow{\hat K, \stackact}}{\underset{G}{\longmapsto^*}} (\qstate,\acont)}.$$
\end{lemma}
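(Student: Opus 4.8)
The plan is to establish the two inclusions of the set equality by exhibiting a precise correspondence between accepting runs of the stack automaton $A = \Stacks(G)(\qstate)$ and reachability paths in $G$. First I would make explicit the three-phase shape of any run of $A$: (i) the single initialization step $(\dsstate_{\text{start}}, \epsilon, \epsilon, \cdot, \text\textcent_+, \dsstate_0)$, which installs the bottom sentinel $\text\textcent$ and moves to $\dsstate_0$; (ii) a sequence of $\gadget$ traversals, one for each edge of $G$ fired along a path out of $\dsstate_0$, \emph{none} of which consumes input; and (iii) a final checking phase, enterable only from $\qstate$, which pushes the top sentinel $\$$, spins the cursor to the bottom through $\dsstate_{\text{down}}$, walks back up through $\dsstate_{\text{check}}$ comparing stack frames against the input one at a time, and reaches $\dsstate_{\text{final}}$ exactly when it meets $\$$ with the input exhausted. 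Because only phase (iii) touches the input, a word $w$ lies in $\mathcal{L}(A)$ iff some phase-(ii) $G$-path from $\dsstate_0$ to $\qstate$ builds a stack $\acont$ and $w$ is the string encoding of $\acont$ in the orientation fixed by $\ksplit$.

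The core lemma is the phase-(ii) simulation, proved in both directions: for every $w$, $A$ can move from $(\dsstate_0, \ksplit(\epsilon), w)$ to $(\qstate, \ksplit(\acont), w)$ using only transitions contributed by $\gadget$ calls iff $(\qstate_0, \vect{}) \overset{\overrightarrow{\hat K, \stackact}}{\underset{G}{\longmapsto^*}} (\qstate, \acont)$. Both directions go by induction on the number of gadget traversals / edges fired, and the single-step case is exactly \autoref{lem:gadget}: the gadget built for an edge $(\dsstate, \hat K, \stackact, \dsstate') \in \DSEdges$ carries $(\dsstate, \ksplit(\acont), w)$ to $(\dsstate', \ksplit([\acont_+\stackact]), w)$ precisely when $\acont \in \hat K$ and $[\acont_+\stackact]$ is defined, which is the firing condition for that edge in $G$'s reachability relation. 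To run the induction I must also show that an accepting run \emph{decomposes uniquely} into whole gadget traversals: each gadget's interior states are fresh, entry into a gadget requires the cursor at the top of the stack, and the gadget's exit restores ``cursor at the top'' (tracked through the shape of $\ksplit$), so no partial-gadget or interleaved run is possible — an inversion argument on the transition rules of $\Stacks$ and $\gadget$. The hypothesis $G \sqsubseteq \fCCPDS(M)$ is used here only to guarantee that every edge of $G$ is a genuine conditional edge with a regular $\hat K$, so that the NFA $N$ recognizing $\hat K$ inside $\gadget$ exists and \autoref{lem:gadget} applies.

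Phase (iii) is then routine: from $(\qstate, \ksplit(\acont), w)$ the only applicable sequence of moves is $\$_+$ into $\dsstate_{\text{down}}$, the down-spin of \autoref{lem:downspin} to the bottom, the $\text\textcent$-move into $\dsstate_{\text{check}}$, the matching loop $(\dsstate_{\text{check}}, a, a, \uparrow, \epsilon, \dsstate_{\text{check}})$, and finally $\$$ into $\dsstate_{\text{final}}$; by \autoref{lem:check} this reaches the accepting state iff $w$ is exactly the frame string of $\acont$, with $\dsstate_{\text{final}}$ reachable only after all of $w$ is consumed. Composing the three phases gives $w \in \mathcal{L}(A)$ iff there is a $G$-path from $\dsstate_0 = \qstate_0$ to $\qstate$ whose resulting stack $\acont$ satisfies $w = \acont$, which is the claimed set.

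The step I expect to be the main obstacle is the ``unique decomposition into gadgets'' argument in phase (ii): one must verify, by careful inversion on the rules emitted by $\Stacks$ and $\gadget$, that the cursor sits at the top of the stack precisely at gadget boundaries (and at $\dsstate_0$ and $\qstate$), that the single $\text\textcent$ and each gadget's matched $\$_+/\$_-$ pair confine a gadget's cursor excursion to the portion of the stack it created, and that a $\ksplit$-shaped invariant on instantaneous descriptions is preserved throughout — and, intertwined with this, keeping the stack-orientation conventions consistent between $G$-continuations and their 1NSA encodings. Everything else reduces to the already-established helpers \autoref{lem:downspin}, \autoref{lem:gadget}, and \autoref{lem:check}.
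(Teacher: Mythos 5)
Your proposal is correct and takes essentially the same approach as the paper's proof: both decompose an accepting run into setup, gadget-traversal, and checking phases, induct on the sequence of gadget traversals from $\dsstate_0$ to $\qstate$ invoking the gadget-correctness lemma at each step, and dispatch the checking phase via the down-spin and checking lemmas. The only difference is emphasis --- the paper disposes of the unique-decomposition-into-gadgets point with a one-line remark about gadget state disjointness, whereas you correctly flag it as the step needing the most care.
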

\begin{proof}
  ($\subseteq$):
  Let $(\dsstate_{\text{start}}, [\epsilon, \epsilon], \acont) \longmapsto^* (\dsstate_{\text{final}}, \ksplit(\acont), \epsilon)$ be an accepting path for $\acont \in \mathcal{L}(\Stacks(G)(\qstate))$.
  We inductively construct a corresponding path in $G$ that realizes $\acont$.
  We first see that the given path is split into three phases: setup, gadgetry, checking.
  The first step must be $(\dsstate_{\text{start}}, \epsilon, \epsilon, \cdot, \text\textcent_+, \dsstate_0)$, which we call setup. The only final state must be preceded by $\dsstate_{\text{check}}$, $\dsstate_{\text{down}}$, and the final occurrence of $\dsstate$, which we call checking. Thus the middle phase is a trace from $\dsstate_0$ to $\dsstate$. This must be through gadgets, which are disjoint for each rule of the IPDS, and thus each edge in $G$. \\
$(\dsstate_{\text{start}}, [\epsilon,\epsilon], \acont) \longmapsto (\dsstate_0, [\epsilon,\text\textcent],\acont) \longmapsto^* (\dsstate, \ksplit(\acont), \acont) \longmapsto$ \\
  $(\dsstate_{\text{down}}, [\text\textcent\acont,\$], \acont) \longmapsto^* (\dsstate_{\text{down}}, [\epsilon,\text\textcent\acont\$], \acont) \longmapsto$ \\
   $(\dsstate_{\text{check}}, [\text\textcent,\acont\$], \acont) \longmapsto^* (\dsstate_{\text{check}}, [\text\textcent\acont,\$], \epsilon) \longmapsto (\dsstate_{\text{final}}, \ksplit(\acont), \epsilon)$ \\
  We induct on the path through gadgets, $\dsstate_0$ to $\dsstate$ in the above path, invoking \autoref{lem:gadget} at each step.
 \\

\noindent($\supseteq$): Simple induction between setup and teardown, applying \autoref{lem:gadget}.
\end{proof}

\noindent{}Proof of \autoref{thm:ripds-to-icrpds}
\begin{proof}
  The finiteness of the state space and monotonicity of $\mkCCPDS$ ensures the least fixed point exists.
  $\lfp(\mkCCPDS(M)) \subseteq \fCCPDS(M)$ follows from \autoref{lem:stack-correct} and the definition of $\mkCCPDS$.
  To prove $\fCCPDS(M) \subseteq \lfp(\mkCCPDS(M))$, suppose not. Then there must be a path $(\dsstate_0,\vect{}) \PDTranssOU{\overrightarrow{\hat K, \stackact}}{M} (\dsstate, \acont) \PDTransOU{\hat K', \stackact'}{M} (\dsstate', [\acont_+\stackact'])$ where the final edge is the first edge not in $\lfp(\mkCCPDS(M))$.
  
  By definition of $\fCCPDS$, $\acont \in \hat K'$ and $(\dsstate,\hat K', \stackact, \dsstate') \in \transfunction$.
  Since $\acont$ is realizable at $\dsstate$ in $G$, by definition of $\mkCCPDS$ and \autoref{lem:stack-correct}, $(\dsstate, \acont) \overset{\hat K',\stackact}{\underset{G}{\longmapsto}} (\dsstate',[\acont_+\stackact])$ contra the assumption. Thus $\fCCPDS(M) \subseteq \lfp(\mkCCPDS(M))$ holds by contradiction.
\end{proof}
\newcommand{\invgamma}{\mathit{inv}_\Gamma}
\newcommand{\invagamma}{\mathit{inv}_{\hat\Gamma}}
\newcommand{\injgamma}{{\mathcal I}_\Gamma}
\newcommand{\sysgamma}{\sa{System}_\Gamma}
We first prove an invariant of $\atf : \syn{Exp} \to \sysgamma \monto \sysgamma$, where $\sysgamma = \mathbb{ICRPDS} \times \PowSm{\sa{OPState} \times \sa{OPState}}$
\begin{align*}
  \injgamma(\expr) &= ((\expr,\bot,\bot),\vect{}) \\
  \injgamma'(\expr) &= (((\expr,\bot,\bot),\emptyset), \vect{})
  \\[2pt]
  \vect{\aphrame_1, \ldots, \aphrame_n}_+ &= \aphrame_1{}_+\ldots\aphrame_n{}_+
  \\[2pt]
  \invgamma &: \syn{Exp} \to \sysgamma \to \Prop \\
  \invgamma(\expr)(\overbrace{(\hat P, \hat E), \hat H}^G) &= (\hat P = \bigcup\setbuild{\set{\aopstate,\aopstate'}}{\triedge{\aopstate}{\stackact}{\aopstate'} \in \hat E}) \\
                                                        &\wedge \forall \triedge{(\apstate,A)}{\stackact}{(\apstate',A')} \in \hat E.
                                                        \text{let } \hat K = \setbuild{\acont}{\StackRoot(\acont) = A} \text{ in } \\
                                                        &\phantom{\text{let }}
                                                          \forall \acont \in \setbuild{\acont \in \hat K}{[\acont_+\stackact] \text{ defined}}. \StackRoot([\acont_+\stackact]) = A'
                                                          \\ & ~~~~~~~~~\wedge (\apstate,\acont) \PDTransOU{\hat K,\stackact}{M} (\apstate',[\acont_+\stackact])\\
                                                        &\wedge \forall \biedge{(\apstate,\_)}{(\apstate',\_)} \in \hat H. \exists \overrightarrow{\hat K,\stackact}.
                                                          [\vec{\stackact}] = \epsilon \wedge
                                                          (\apstate,\vect{}) \PDTranssOU{\overrightarrow{\hat K,\stackact}}{M} (\apstate',\vect{}) \\
   \text{where } M &= \afIPDS'(\expr)
\end{align*}

\begin{lemma}[$\atf$ invariant]\label{lem:atf-inv}
  For all $\expr$, if $\invgamma(\expr)(G)$ then $\invgamma(\expr)(\atf_\expr(G))$
\end{lemma}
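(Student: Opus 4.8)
The plan is to prove \autoref{lem:atf-inv} by invariant preservation, in the style of the proof of \autoref{lem:mkcrpdsp-inv}: assume $\invgamma(\expr)(G)$, write $\atf_\expr(G) = ((\hat P',\hat E'),\hat H')$ following \autoref{fig:full-gc}, and discharge each conjunct of $\invgamma(\expr)$ on the primed components. The bookkeeping conjunct $\hat P' = \bigcup\setbuild{\set{\aopstate,\aopstate'}}{\triedge{\aopstate}{\stackact}{\aopstate'}\in\hat E'}$ is immediate, since $\hat P'$ is by construction $\hat P$ adjoined with the endpoints of $\hat E'$ (the root state being treated exactly as the $\set{\dsstate_0}$ addendum is in the earlier invariant arguments). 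Since members of $\hat E'$ and $\hat H'$ that already occur in $G$ are covered by the hypothesis, what remains is a case analysis on the five generating sets $\hat E_+$, $\hat E_\epsilon$, $\hat E_-$, $\hat H_\epsilon$, and $\hat H_{{+}{-}}$.

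For a fresh push edge $\triedge{(\apstate,A)}{\aphrame_+}{(\apstate',A\cup\touches(\aphrame))}\in\hat E_+$ and a fresh no-change edge $\triedge{(\apstate,A)}{\epsilon}{(\apstate',A)}\in\hat E_\epsilon$, I would unfold the refactored relation $\ipdcfato{\apstate}{A}{\stackact}{\apstate'}$. By its definition this asserts $\StackRoot(\acont)=A$ together with $\aCollect(\apstate,\acont)\aTo(\apstate',\aphrame{:}\acont)$ (respectively $\aCollect(\apstate,\acont)\aTo(\apstate',\acont)$) for a witnessing $\acont$. Because $\aCollect$ depends on a configuration only through the root set of its continuation, and that root set is constantly $A$ across $\hat K = \setbuild{\acont}{\StackRoot(\acont)=A}$, this matches one of the defining clauses of the transition relation of $M = \afIPDS'(\expr)$, so $(\apstate,\hat K,\stackact,\apstate')$ is a rule of $M$; combining this with $\acont\in\hat K$ and the facts that $[\acont_+\aphrame_+]$ is always defined and $[\acont_+\epsilon]=\acont$, we obtain $(\apstate,\acont)\PDTransOU{\hat K,\stackact}{M}(\apstate',[\acont_+\stackact])$ for \emph{every} $\acont\in\hat K$. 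The companion root-set equation is then the identity $\StackRoot(\aphrame{:}\acont)=\touches(\aphrame)\cup\StackRoot(\acont)=\touches(\aphrame)\cup A$ in the push case and $\StackRoot(\acont)=A$ in the $\epsilon$ case --- exactly the target ornaments the two edges carry.

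For the remaining cases I would splice witness traces furnished by the hypothesis. For $\hat H_\epsilon$, the two $\epsilon$-closure edges $\biedge{\aopstate}{\aopstate'}$ and $\biedge{\aopstate'}{\aopstate''}$ come with net-$\epsilon$ action strings by $\invgamma(\expr)(G)$, and their concatenation is again net-$\epsilon$, yielding the witness for $\biedge{\aopstate}{\aopstate''}$. For a fresh pop edge in $\hat E_-$, and symmetrically a fresh edge in $\hat H_{{+}{-}}$, I would use the matching push edge $\triedge{(\apstate,A')}{\aphrame_+}{(\apstate',A)}\in\hat E$, a balanced path $\biedge{(\apstate',A)}{(\apstate'',A)}\in\hat H$, and the popping step $\ipdcfato{\apstate''}{A}{\aphrame_-}{\apstate'''}$, and glue them: for any $\acont_0$ with $\StackRoot(\acont_0)=A'$, the push step (by the hypothesis) takes $(\apstate,\acont_0)$ to $(\apstate',\aphrame{:}\acont_0)$ with $\StackRoot(\aphrame{:}\acont_0)=\touches(\aphrame)\cup A'=A$; the balanced path (by the hypothesis) is stack-polymorphic and, being net-$\epsilon$, leaves the stack --- hence its root set --- unchanged, reaching $(\apstate'',\aphrame{:}\acont_0)$; and the final step pops $\aphrame$ to land at $(\apstate''',\acont_0)$ with $\StackRoot(\acont_0)=A'$, which is the ornament recorded on the new edge and exhibits the required single-pop trace in $M$ (for $\hat E_-$), or the net-$\epsilon$ trace joining the push and pop endpoints (for $\hat H_{{+}{-}}$).

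The hard part will be the root-set bookkeeping in these pop-related cases. The invariant quantifies over \emph{all} $\acont\in\hat K$ with $[\acont_+\aphrame_-]$ defined, so one must show that the ornament $A'$ --- which $\hat E_-$ grafts onto the new pop edge from the \emph{source} of the matching push edge --- is genuinely $\StackRoot$ of the popped tail for each such $\acont$. The structural leverage is that every push edge produced by $\hat E_+$ has the canonical shape $\triedge{(\apstate,A')}{\aphrame_+}{(\apstate',A'\cup\touches(\aphrame))}$, so the pre-push root set is recoverable from the edge's endpoints, and that a balanced path never perturbs the stack. If the case analysis needs more than this, I would strengthen $\invgamma$ so that each $\epsilon$-closure edge also records agreement of its endpoints' root components along the witnessing trace, and thread that refinement through the $\hat H_\epsilon$ and $\hat H_{{+}{-}}$ cases. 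The remaining obligations --- that $\aCollect$ factors through the continuation's root set, that $\StackRoot$ distributes over cons, and that the net-string operator behaves correctly on a single push or pop appended to a stack's push-encoding --- are routine and discharged by direct computation.
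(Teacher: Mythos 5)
Your proposal is correct and takes essentially the same approach as the paper: the paper's entire proof of this lemma is the single sentence ``Same structure as in Lemma~\ref{lem:mkcrpdsp-inv} without reasoning about worklists,'' and your case analysis on $\hat E_+$, $\hat E_\epsilon$, $\hat E_-$, $\hat H_\epsilon$, and $\hat H_{{+}{-}}$ with witness-trace splicing is precisely that structure made explicit. The root-set bookkeeping you flag as the hard part is real---the stated invariant pins a single ornament $A'$ on a pop edge even though distinct $\acont$ with $\StackRoot(\acont)=A$ can have tails with different root sets, and it witnesses $\epsilon$-closure edges only by empty-stack traces, so the strengthening you anticipate would be needed for full rigor---but that reflects the paper's own brevity rather than a divergence in method.
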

\begin{proof}
  Same structure as in Lemma~\autoref{lem:mkcrpdsp-inv} without reasoning about worklists.
\end{proof}

\noindent{}Proof of~\autoref{thm:gc-specialization}
\begin{proof}
  Let $M = \afIPDS'(\expr)$, $G = \fCCPDS(M)$ and $G' = ((\hat P, \hat E), \hat H) = \lfp(\atf_\expr)$.
  \\
  ($G'$ approximates $G$): \\
  We strengthen the statement to
  $\mtrace \equiv \injgamma(\expr) \overset{\overrightarrow{\hat K, \stackact}}{\underset{G}{\longmapsto^*}} (\apstate,\acont)$
  implies
  \begin{itemize}
  \item {$\injgamma'(\expr) \overset{\vec{\stackact}}{\underset{G'}{\longmapsto^*}} ((\apstate,\StackRoot(\acont)), \acont)$.}
  \item {for all $(\apstate,\acont) \PDTranssOU{\overrightarrow{\hat K, \stackact}}{G} (\apstate',[\acont_+\vec{\stackact}]) \PDTransOU{\hat K', \stackact'}{G} (\apstate'',[\acont_+\vec{\stackact}\stackact'])
                   \sqsubseteq \mtrace$,
         if $[\vec{\stackact}\stackact'] = \epsilon$, then $\exists \acont \in \hat K$ and $\biedge{(\apstate, \StackRoot(\acont))}{(\apstate'', \StackRoot([\acont_+\vec{\stackact}\stackact']))} \in \hat H$}
  \end{itemize}
  
  By induction on $\mtrace$,
  \begin{byCases}
    \case{\text{Base: } \injgamma(\expr)}{
      By definition of $\atf_\expr$, $\injgamma'(\expr) = (\aopstate_0,\vect{})$, $\aopstate_0 \in \hat P$.
      First goal holds by definition of $\StackRoot(\vect{})$ and reflexivity. Second goal vacuously true.}
    \case{\text{Induction step: }
          ((\expr,\bot,\bot),\vect{}) \overset{\overrightarrow{\hat K',\stackact'}}{\underset{G}{\longmapsto^*}}
          (\apstate',[\vec{\stackact'}]) \overset{\hat K'',\stackact''}{\underset{G}{\longmapsto}}
          (\apstate, \acont)}{
      Let $A = \StackRoot([\vec{\stackact'}])$.
      By IH, $\injgamma'(\expr) \overset{\vec{\stackact'}}{\underset{G'}{\longmapsto^*}} ((\apstate', A), [\vec{\stackact'}])$.
      
      Let $\hat K_{\text{root}} = \setbuild{\acont}{\StackRoot(\acont) = A}$
      By definition of $\afIPDS'$ and the case assumption, $(\apstate', \hat K_{\text{root}}, \stackact'', \apstate) \in \transfunction$.
      By cases on $(\apstate',[\vec{\stackact'}]) \overset{\hat K'',\stackact''}{\underset{G}{\longmapsto}} (\apstate, \acont)$:
      \begin{byCases}
        \case{(\apstate',[\vec{\stackact'}]) \overset{\hat K'',\aphrame_+}{\underset{G}{\longmapsto}} (\apstate, \acont)}{
          By definition of $\atf$, $\triedge{(\apstate', A)}{\aphrame_+}{(\apstate, A\cup\touches(\aphrame_+))} \in G'$.
          By definition of $\StackRoot$ and $A$, $\StackRoot([\vec{\stackact'}\stackact'']) = \StackRoot([\vec{\stackact}]) = \StackRoot(\acont)$.}
        \case{(\apstate',[\vec{\stackact'}]) \overset{\hat K'',\epsilon}{\underset{G}{\longmapsto}} (\apstate, \acont)}{Similar to previous case.}
        \case{(\apstate',[\vec{\stackact'}]) \overset{\hat K'',\aphrame_-}{\underset{G}{\longmapsto}} (\apstate, \acont)}{
          Since is $[\vec{\stackact'}\aphrame_-]$ is defined, there is an $i$ such that $\stackact_i = \aphrame_+$, which is witness to an edge in the trace with that action, $\apstate_b \RPDTransOU{\hat K''',\aphrame_+}{G} \apstate_e$
          By definition of $[\_]$, the actions from $\apstate_e$ to $\apstate'$ cancel to $\epsilon$, meaning by IH $\biedge{(\apstate_e,A)}{(\apstate',A)} \in H$, and $\triedge{(\apstate_b,A')}{\aphrame_+}{(\apstate_e, A)} \in E$.
          Thus the pop edge is added by definition of $\atf'$.
          The new balanced path $\biedge{(\apstate_b,A')}{(\apstate,A')}$ is also added, and extended paths get added with propagation.}
      \end{byCases}}
  \end{byCases}
  Approximation follows by composition with \autoref{thm:ripds-to-icrpds}. \\

  ($G$ approximates $G'$): Directly from \autoref{lem:atf-inv}.
\end{proof}

The approximate GC has a similar invariant, except the sets of addresses are with respect to the $\att$ computation.
\begin{align*}
  \invagamma &: \syn{Exp} \to \sysgamma \to \Prop \\
  \invagamma(\expr)(\overbrace{(\hat P, \hat E), \hat H}^G) &= (\hat P = \bigcup\setbuild{\set{\apstate,\apstate'}}{\triedge{\apstate}{A, \stackact}{\apstate'} \in \hat E}) \\
                                                        &\wedge \forall \triedge{\apstate_0}{A,\stackact}{\apstate_1} \in \hat E.
                                                          \exists \triedge{(\apstate^{\Gamma}_0, A_\Gamma)}{\stackact}{(\apstate^\Gamma_1, A'_\Gamma)} \in \hat E.
                                                          (\forall i. \apstate^\Gamma_i \sqsubseteq \apstate_i) \wedge A_\Gamma \subseteq A
                                                          \\
                                                        &\wedge \forall \biedge{\apstate_0}{\apstate_1} \in \hat H.
                                                          \exists \biedge{(\apstate_0^\Gamma,A_\Gamma)}{(\apstate_1^\Gamma, A_\Gamma)} \in \hat H'.
                                                          \forall i. \apstate^\Gamma_i \sqsubseteq \apstate_i
                                                          \\
                                                        &\wedge \forall \apstate \in \hat P. \exists (\apstate^\Gamma,A) \in \hat P'. \apstate^\Gamma \sqsubseteq \apstate \wedge \lfp(\att)(\apstate) \subseteq A
\\
   \text{where } ((\hat P', \hat E'), \hat H') &= \lfp(\atf_\expr)
\end{align*}

\begin{lemma}[Approx GC invariant]\label{lem:approx-gc-inv}
  For all $\expr$, if $\invagamma(\expr)(G)$ then $\invagamma(\expr)(\atf'_\expr(G))$
\end{lemma}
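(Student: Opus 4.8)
The plan is to prove \autoref{lem:approx-gc-inv} by the same recipe used for \autoref{lem:mkcrpdsp-inv} and \autoref{lem:atf-inv}: a finite case analysis on the elements that one application of $\atf'_\expr$ adds, checking that each of the four conjuncts of $\invagamma(\expr)$ survives. Writing $M = \afIPDS'(\expr)$ and taking $((\hat P', \hat E'), \hat H') = \lfp(\atf_\expr)$ as the reference precise-GC structure that $\invagamma$ quantifies over, the additions to dissect are the edge sets $\hat E_+$, $\hat E_\epsilon$, $\hat E_-$ of \autoref{fig:approx-gc}, the $\epsilon$-edge sets $\hat H_\epsilon$ and $\hat H_{+-}$, the enlargement of $\hat P$, and the recomputation of $\mathcal R = \lfp(\att)$. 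The one genuinely new ingredient over the earlier lemmas is this inner fixed point $\mathcal R$: since $\atf'_\expr$ never deletes an edge or an $\epsilon$-edge, $\att$ is monotone in $(\hat E, \hat H)$, so $\mathcal R$ can only grow, and the first bookkeeping step is to record that growth and thread it through the per-state root sets carried by the precise structure.

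First I would dispatch the structural conjunct ($\hat P$ is exactly the set of edge endpoints, modulo the injected partial state $(\expr,\bot,\bot)$, which picks up an outgoing push- or $\epsilon$-edge as soon as the first transition fires); this is immediate from the shape of $\hat P'$ in \autoref{fig:approx-gc}. Then, for the edge and $\epsilon$-edge conjuncts: every new element arises either from a factored transition $\ipdcfato{\apstate}{A}{\stackact}{\apstate'}$ with label $A = \mathcal R(\apstate)$ (the $\hat E_+$, $\hat E_\epsilon$, $\hat E_-$ cases), from matching an existing push-edge with an $\epsilon$-path and a pop-edge (the $\hat H_{+-}$ case), or from transitivity of $\hat H$ (the $\hat H_\epsilon$ case). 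In each case one applies the induction hypothesis to the source state $\apstate$ --- and, for $\hat E_-$ and $\hat H_{+-}$, to the matched push- and $\epsilon$-edges --- to obtain a store-weaker witness in $\hat P'$; because $\lfp(\atf_\expr)$ is a fixed point of $\atf_\expr$ it is already closed under the precise-GC transition and $\epsilon$-closure rules, so the matching witness lies in $\hat E'$ (resp.\ $\hat H'$), and one finishes by invoking monotonicity of $\aCollect$ and of $(\aTo)$ in the store --- the simulation property used in the soundness development of \autoref{sec:abstraction} --- to conclude that the witness target is $\sqsubseteq$ the approximate target and that the witness root set stays within the approximate edge's label. These cases are routine once the right monotonicity facts are isolated, and they mirror the push/pop/$\epsilon$ cases of \autoref{lem:mkcrpdsp-inv} with the ornamentation carried along passively.

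The hard part will be the last conjunct: that every $\apstate$ in the enlarged $\hat P$ still admits a precise witness $(\apstate^\Gamma, A) \in \hat P'$ with $\apstate^\Gamma \sqsubseteq \apstate$ and $\mathcal R(\apstate)$ correctly related to $A$. The $\sqsubseteq$-part is stable because $\atf'_\expr$ does not shrink stores, but the root-set part requires reconciling the over-approximated set $\mathcal R(\apstate)$ --- which the approximate analysis threads forward based only on which frames have so far been discovered to reach $\apstate$, deliberately ignoring the CPDS's stack filtering --- with the per-state root sets recorded by the precise-GC fixed point, and showing that as $\mathcal R$ grows the precise structure is already ``ahead'' of it (which it is, being a total fixed point). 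This is exactly the soundness hazard noted in the text (``by disregarding the filtering, the stack root set can get larger and render previous GCs unsound''); the reconciliation hinges on the fact that each edge of $\lfp(\atf'_\expr)$ is labelled with the root set at which its GC was taken, so a stale edge whose label has been outgrown is simply re-derived at the larger root set rather than trusted, which is what keeps the edge conjunct intact while $\mathcal R$ climbs. The remaining details reduce to the bookkeeping already carried out for the non-garbage-collecting development of \autoref{sec:ecg-worklist}.
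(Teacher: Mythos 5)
Your proposal is correct and takes essentially the same route as the paper: the paper's entire proof of \autoref{lem:approx-gc-inv} is the single phrase ``straightforward case analysis,'' and your conjunct-by-conjunct case split on the additions made by one application of $\atf'_\expr$ (the edge sets $\hat E_+$, $\hat E_\epsilon$, $\hat E_-$, the $\epsilon$-edge sets, the enlargement of $\hat P$, and the growth of ${\mathcal R} = \lfp(\att)$), each discharged by exhibiting a witness in the fixed reference structure $\lfp(\atf_\expr)$ via its closure under the precise-GC rules, is exactly that analysis spelled out. The one point to pin down is the direction of the root-set comparison in the final conjunct, which you leave as ``correctly related'': it should be resolved consistently with the paper's stated fact that ${\mathcal R}(\apstate) \supseteq A$ for every represented $(\apstate, A)$, so that growth of ${\mathcal R}$ across iterations preserves the conjunct for free.
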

\begin{proof}
  Straightforward case analysis.
\end{proof}

\noindent{}Proof of \autoref{thm:approx-gc-sound}
\begin{proof}
  Induct on path in $\lfp(\atf_\expr)$ and apply \autoref{lem:approx-gc-inv}.
\end{proof}

\section{Haskell implementation of CRPDSs}
\label{haskell}
Where it is critical to understanding the details of the analysis,
we have transliterated the formalism into Haskell.
We make use of a two extensions in GHC: 
\begin{code}
 -XTypeOperators -XTypeSynonymInstances 
\end{code}
All code is in the context of the following header, and we'll assume
the standard instances of type classes like {\tt Ord} and {\tt Eq}.
\begin{code}
import Prelude hiding ((!!))

import Data.Map as Map hiding (map,foldr)
import Data.Set as Set hiding (map,foldr)
import Data.List as List hiding ((!!))

type \(\mathbb{P}\) s = Set.Set s
type k :-> v = Map k v

(==>) :: a -> b -> (a,b)
(==>) x y = (x,y)

(//) :: Ord a => (a :-> b) -> [(a,b)] -> (a :-> b)
(//) f [(x,y)] = Map.insert x y f

set x = Set.singleton x\end{code}

\subsection{Transliteration of NFA formalism}

We represent an NFA as a set of labeled forward edges, the inverse of
those edges (for convenience), a start state and an end state:
\begin{code}
type NFA state char =
 (NFAEdges state char,NFAEdges state char,state,state)
type NFAEdges state char = state :-> \(\mathbb{P}\)(Maybe char,state)
\end{code}

\subsection{ANF}

\begin{code}
data Exp    = Ret AExp
            | App Call
            | Let1 Var Call Exp
data AExp   = Ref Var
            | Lam Lambda
data Lambda = Var :=> Exp
data Call   = AExp :@ AExp
type Var    = String\end{code}

\begin{code}
-- Abstract state-space:
type AConf  = (Exp,AEnv,AStore,AKont)
type AEnv   = Var :-> AAddr
type AStore = AAddr :-> AD
type AD     = ℙ (AVal)
data AVal   = AClo (Lambda, AEnv)
type AKont  = [AFrame]
type AFrame = (Var,Exp,AEnv)

data AAddr = ABind Var AContext
type AContext = [Call]\end{code}
Abstract configuration space transliterated into Haskell.
In the code, we defined abstract addresses to be able to support 
$k$-CFA-style polyvariance.

Atomic expression evaluation implementation:
\begin{code}
aeval :: (AExp,AEnv,AStore) -> AD
aeval (Ref v, \(\rho\), \(\sigma\)) = \(\sigma\)!!(\(\rho\)!v)
aeval (Lam l, \(\rho\), \(\sigma\)) = set \$ AClo (l, \(\rho\))\end{code}

We encode the transition relation it as a function that returns lists of states:
\begin{code}
 astep :: AConf -> [AConf]\end{code}

\begin{code}
astep (App (f :@ ae), \(\rho\), \(\sigma\), \(\kappa\)) = [(e, \(\rho\)'', \(\sigma\)', \(\kappa\)) | 
   AClo(v :=> e, \(\rho\)') <- Set.toList \$ aeval(f, \(\rho\), \(\sigma\)),
   let a = aalloc(v, App (f :@ ae)),
   let \(\rho\)'' = \(\rho\)' // [v ==> a],
   let \(\sigma\)' = \(\sigma\) ⨆ [a ==> aeval(ae, \(\rho\), \(\sigma\))] ]
astep (Let1 v call e, \(\rho\), \(\sigma\), \(\kappa\)) =
  [(App call, \(\rho\), \(\sigma\), (v, e, \(\rho\)) : \(\kappa\))]
astep (Ret ae, \(\rho\), \(\sigma\), (v, e, \(\rho\)') : \(\kappa\)) = [(e, \(\rho\)'', \(\sigma\)', \(\kappa\))]
 where a = aalloc(v, Ret ae) 
       \(\rho\)'' = \(\rho\)' // [v ==> a]
       \(\sigma\)' = \(\sigma\) ⨆ [a ==> aeval(ae, \(\rho\), \(\sigma\))]
\end{code}

\subsection{Partial orders}
We define a typeclass for lattices:
\begin{code}
class Lattice a where
 bot :: a
 top :: a
 (\(\sqsubseteq\)) :: a -> a -> Bool
 (\(\sqcup\)) :: a -> a -> a
 (\(\sqcap\)) :: a -> a -> a
\end{code}

And, we can lift instances to sets and maps:

\begin{code}
instance (Ord s, Eq s) => Lattice (\(\mathbb{P}\) s) where
 bot = Set.empty
 top = error "no representation of universal set"
 x \(\sqcup\) y = x `Set.union` y
 x \(\sqcap\) y = x `Set.intersection` y
 x \(\sqsubseteq\) y = x `Set.isSubsetOf` y

instance (Ord k, Lattice v) => Lattice (k :-> v) where
 bot = Map.empty
 top = error "no representation of top map"
 f \(\sqsubseteq\) g = Map.isSubmapOfBy (\(\sqsubseteq\)) f g
 f \(\sqcup\) g = Map.unionWih (\(\sqcup\)) f g
 f \(\sqcap\) g = Map.intersectionWith (\(\sqcap\)) f g

(\(\sqcup\)) :: (Ord k, Lattice v) => (k :-> v) -> [(k,v)] -> (k :-> v)
f \(\sqcup\) [(k,v)] = Map.insertWith (\(\sqcup\)) k v f

(!!) :: (Ord k, Lattice v) => (k :-> v) -> k -> v
f !! k = Map.findWithDefault bot k f
\end{code}

\subsection{Reachability}
We can turn any data type to a stack-action alphabet:
\begin{code}
data StackAct frame = Push \{ frame :: frame \}
                    | Pop  \{ frame :: frame \}
                    | Unch \end{code}

\begin{code}
type CRPDS control frame = (Edges control frame, control) 
type Edges control frame = control :-> ℙ (StackAct frame,control)
\end{code}

We split the encoding of $\delta$ into two functions for efficiency purposes:
\begin{code}
type Delta control frame =
 (TopDelta control frame, NopDelta control frame)
type TopDelta control frame =
 control -> frame -> [(control,StackAct frame)]
type NopDelta control frame =
 control -> [(control,StackAct frame)]\end{code}
If we only want to know push and no-change transitions,
we can find these with a {\tt NopDelta} function without providing
the frame that is currently on top of the stack.
If we want pop transitions as well, we can find these with a {\tt TopDelta} 
function, but of course, it must have access to the top of the stack.
In practice, a {\tt TopDelta} function would suffice, but there are 
situations where only push and no-change transitions are needed, and having access to {\tt NopDelta} avoids
 extra computation.

At this point, we must clarify how to embed the abstract transition relation
into a pushdown transition relation:
\begin{code}
adelta :: TopDelta AControl AFrame
adelta (e, \(\rho\), \(\sigma\)) \(\gamma\) = [ ((e', \(\rho\)', \(\sigma\)'), g) | 
 (e', \(\rho\)', \(\sigma\)', \(\kappa\)) <- astep (e, \(\rho\), \(\sigma\), [\(\gamma\)]),
 let g = case \(\kappa\) of
          []        -> Pop \(\gamma\)
          [\(\gamma\)1 , \_ ] -> Push \(\gamma\)1
          [ \_ ]     -> Unch ]

adelta' :: NopDelta AControl AFrame
adelta' (e, \(\rho\), \(\sigma\)) = [ ((e', \(\rho\)', \(\sigma\)'), g) | 
 (e', \(\rho\)', \(\sigma\)', \(\kappa\)) <- astep (e, \(\rho\), \(\sigma\), []),
 let g = case \(\kappa\) of
          [\(\gamma\)1]  -> Push \(\gamma\)1
          [ ]   -> Unch ]
\end{code}

The function {\tt crpds} will invoke the fixed point solver:
\begin{code}
crpds :: (Ord control, Ord frame) =>
       (Delta control frame) ->
       control ->
       frame ->
       CRPDS control frame
crpds (\(\delta\),\(\delta\)') q0 γ0 =
 (summarize (\(\delta\),\(\delta\)') etg1 ecg1 [] dE dH, q0) where
 etg1 = (Map.empty // [q0 ==> Set.empty], 
         Map.empty // [q0 ==> Set.empty])
 ecg1 = (Map.empty // [q0 ==> set q0], 
         Map.empty // [q0 ==> set q0])
 (dE,dH) = sprout (\(\delta\),\(\delta\)') q0 \end{code}

Figure~\ref{fig:mkcrpds-ecg-code} provides the code for {\tt
  summarize}, which conducts the fixed point calculation, the
executable equivalent of Figure~\ref{fig:mkcompact-ecg}:
\begin{code}
summarize :: (Ord control, Ord frame) =>
             (Delta control frame) ->
             (ETG control frame) ->
             (ECG control) ->
             [control] ->
             [Edge control frame] ->
             [EpsEdge control] ->
             (Edges control frame)
\end{code}
To expose the structure of the computation, we've added a few types:
\begin{code}
-- A set of edges, encoded as a map:
type Edges control frame =
 control :-> \(\mathbb{P}\) (StackAct frame,control)

-- Epsilon edges:
type EpsEdge control = (control,control) 

-- Explicit transition graph:
type ETG control frame =
 (Edges control frame, Edges control frame)

-- Epsilon closure graph:
type ECG control =
 (control :-> \(\mathbb{P}\)(control), control :-> \(\mathbb{P}\)(control))
\end{code}

\begin{figure}
\begin{small}
\begin{code}
summarize (\(\delta\),\(\delta\)') (fw,bw) (fe,be) [] [] [] = fw

summarize (\(\delta\),\(\delta\)') (fw,bw) (fe,be) (q:dS) [] []
 | fe `contains` q = summarize (\(\delta\),\(\delta\)') (fw,bw) (fe,be) dS [] []
summarize (\(\delta\),\(\delta\)') (fw,bw) (fe,be) (q:dS) [] [] = 
 summarize (\(\delta\),\(\delta\)') (fw',bw') (fe',be') dS dE' dH' where
  (dE',dH') = sprout (\(\delta\),\(\delta\)') q
  fw' = fw \(\sqcup\) [q ==> Set.empty]
  bw' = bw \(\sqcup\) [q ==> Set.empty]
  fe' = fe \(\sqcup\) [q ==> set q] 
  be' = be \(\sqcup\) [q ==> set q]

summarize (\(\delta\),\(\delta\)') (fw,bw) (fe,be) dS ((q,g,q'):dE) []
 | (q,g,q') `isin'` fw  = summarize (\(\delta\),\(\delta\)') (fw,bw) (fe,be) dS dE []
summarize (\(\delta\),\(\delta\)') (fw,bw) (fe,be) dS ((q,Push γ,q'):dE) [] =
 summarize (\(\delta\),\(\delta\)') (fw',bw') (fe',be') dS' dE'' dH' where
  (dE',dH') = addPush (fw,bw) (fe,be) (\(\delta\),\(\delta\)') (q,Push γ,q')
  dE'' = dE' ++ dE''
  dS' = q':dS
  fw' = fw \(\sqcup\) [q  ==> set (Push γ,q')]
  bw' = bw \(\sqcup\) [q' ==> set (Push γ,q) ]
  fe' = fe \(\sqcup\) [q  ==> set q ]
  be' = fe \(\sqcup\) [q' ==> set q']

summarize (\(\delta\),\(\delta\)') (fw,bw) (fe,be) dS ((q,Pop γ,q'):dE) [] =
 summarize (\(\delta\),\(\delta\)') (fw',bw') (fe',be') dS' dE'' dH' where
  (dE',dH') = addPop (fw,bw) (fe,be) (\(\delta\),\(\delta\)') (q,Pop γ,q')
  dE'' = dE ++ dE'
  dS' = q':dS
  fw' = fw \(\sqcup\) [q  ==> set (Pop γ,q')]
  bw' = bw \(\sqcup\) [q' ==> set (Pop γ,q) ]
  fe' = fe \(\sqcup\) [q  ==> set q ]
  be' = fe \(\sqcup\) [q' ==> set q']

summarize (\(\delta\),\(\delta\)') (fw,bw) (fe,be) dS ((q,Unch,q'):dE) [] =
 summarize (\(\delta\),\(\delta\)') (fw',bw') (fe',be') dS' dE [(q,q')] where
  dS' = q':dS
  fw' = fw \(\sqcup\) [q  ==> set (Unch,q')]
  bw' = bw \(\sqcup\) [q' ==> set (Unch,q) ]
  fe' = fe \(\sqcup\) [q  ==> set q ]
  be' = fe \(\sqcup\) [q' ==> set q']

summarize (\(\delta\),\(\delta\)') (fw,bw) (fe,be) dS dE ((q,q'):dH)
 | (q,q') `isin` fe  = summarize (\(\delta\),\(\delta\)') (fw,bw) (fe,be) dS dE dH
summarize (\(\delta\),\(\delta\)') (fw,bw) (fe,be) dS dE ((q,q'):dH) = 
 summarize (\(\delta\),\(\delta\)') (fw,bw) (fe',be') dS dE' dH' where
   (dE',dH') = addEmpty (fw,bw) (fe,be) (\(\delta\),\(\delta\)') (q,q')
   fe' = fe \(\sqcup\) [q  ==> set q ]
   be' = fe \(\sqcup\) [q' ==> set q']\end{code}
\caption{An implementation of pushdown control-state reachability.}
\label{fig:mkcrpds-ecg-code}
\end{small}
\end{figure}

An explicit transition graph is an explicit encoding
of the reachable subset of the transition relation.
The function {\tt summarize} takes six parameters:
\begin{enumerate}
 \item the pushdown transition function;
 \item the current explicit transition graph;
 \item the current $\epsilon$-closure graph;
 \item a work-list of states to add;
 \item a work-list of explicit transition edges to add; and
 \item a work-list of $\epsilon$-closure transition edges to add.
\end{enumerate}
The function {\tt summarize} processes $\epsilon$-closure edges first,
then explicit transition edges and then individual states.
It \emph{must} process $\epsilon$-closure edges first to ensure that the
$\epsilon$-closure graph is closed when considering the implications of
other edges.

\paragraph{Sprouting}
\begin{code}
sprout :: (Ord control) =>
          Delta control frame ->
          control ->
          ([Edge control frame], [EpsEdge control])
sprout (\(\delta\),\(\delta\)') q = (dE, dH) where
  edges = \(\delta\)' q
  dE = [ (q,g,q') | (q',g) <- edges, isPush g ]
  dH = [ (q,q')   | (q',g) <- edges, isUnch g ]
\end{code}

\paragraph{Pushing}

\begin{code}
addPush :: (Ord control) =>
           ETG control frame ->
           ECG control ->
           Delta control frame ->
           Edge control frame ->
           ([Edge control frame], [EpsEdge control])
addPush (fw,bw) (fe,be) (\(\delta\),\(\delta\)') (s,Push \(\gamma\),q) = (dE,dH) where
 qset' = Set.toList \$ fe!q 
 dE = [ (q',g,q'') | q' <- qset', (q'',g) <- \(\delta\) q' \(\gamma\), isPop g ]
 dH = [ (s,q'')    | (q',Pop \_,q'') <- dE ] \end{code}

\paragraph{Popping}

\begin{code}
addPop :: (Ord control) =>
          ETG control frame ->
          ECG control ->
          Delta control frame ->
          Edge control frame ->
          ([Edge control frame], [EpsEdge control])
addPop (fw,bw) (fe,be) (\(\delta\),\(\delta\)') (s'',Pop \(\gamma\),q) = (dE,dH) where
 sset' = Set.toList \$ be!s''
 dH = [ (s,q) | s' <- sset', 
                (g,s) <- Set.toList \$ bw!s', isPush g ]
 dE = []\end{code}
Clearly, we could eliminate the new edges parameter {\tt dE} for 
the function {\tt addPop},
but we have retained it for stylistic symmetry.

\paragraph{Adding empty edges}

The function {\tt addEmpty} 
has many cases to consider:
\begin{code}
addEmpty :: (Ord control) =>
            ETG control frame ->
            ECG control ->
            Delta control frame ->
            EpsEdge control ->
            ([Edge control frame], [EpsEdge control])
addEmpty (fw,bw) (fe,be) (\(\delta\),\(\delta\)') (s'',s''') = (dE,dH) where
 sset'    = Set.toList \$ be!s''
 sset'''' = Set.toList \$ fe!s'''
 dH'   = [ (s',s'''')  | s' <- sset', s'''' <- sset'''' ]
 dH''  = [ (s',s''')   | s' <- sset' ]
 dH''' = [ (s'',s'''') | s'''' <- sset'''' ]

 sEdges = [ (g,s) | s' <- sset', (g,s) <- Set.toList \$ bw!s' ]

 dE = [ (s'''',g',q) | s'''' <- sset'''',
                       (g,s) <- sEdges,
                       isPush g, let Push \(\gamma\) = g,
                       (q,g') <- \(\delta\) s'''' \(\gamma\),
                       isPop g' ]

 dH'''' = [ (s,q) | (\_,s) <- sEdges, (\_,\_,q) <- dE ] 

 dH = dH' ++ dH'' ++ dH''' ++ dH''''\end{code}







\end{document}